\documentclass[aps,prb,twocolumn,reprint,superscriptaddress,floatfix,nofootinbib,longbibliography]{revtex4-2}

\usepackage{graphicx}
\usepackage{amsmath,amssymb,amsthm,mathtools}
\usepackage{bm}
\usepackage{hyperref}
\usepackage{xcolor}
\usepackage{enumitem}
\usepackage{array}
\usepackage{diagbox}

\hypersetup{colorlinks=true,citecolor=blue,linkcolor=blue,urlcolor=blue}

\newtheorem{theorem}{Theorem}
\newtheorem{proposition}[theorem]{Proposition}
\newtheorem{lemma}[theorem]{Lemma}

\newtheorem{conjecture}{Conjecture}
\theoremstyle{remark}
\newtheorem*{remark}{Remark}

\newcommand{\vb}[1]{\bm{#1}}
\newcommand{\hloc}{H^{(2)}}
\newcommand{\len}{\mathrm{len}}

\begin{document}

\title{Absence of nontrivial local conserved quantities in a class of $U(1)$-symmetric spin-1 chains}

\author{Shunsuke Sengoku}
\affiliation{Department of Applied Physics, The University of Tokyo, Tokyo 113-8656, Japan}
\affiliation{Department of Physics, Hong Kong University of Science and Technology, Clear Water Bay, Hong Kong, China}

\author{Haruki Watanabe}
\email{hwatanabe@ust.hk}
\affiliation{Department of Physics, Hong Kong University of Science and Technology, Clear Water Bay, Hong Kong, China}
\affiliation{Center for Theoretical Condensed Matter Physics, Hong Kong University of Science and Technology, Clear Water Bay, Hong Kong, China}
\affiliation{Institute for Advanced Study, Hong Kong University of Science and Technology, Clear Water Bay, Hong Kong, China}

\date{\today}

\begin{abstract}
We prove the absence of nontrivial local conserved quantities in a class of $U(1)$-symmetric spin-$1$ chains with nearest-neighbor interactions in which some of the quadrupolar couplings vanish, a class that is not covered by previous studies. Applying the technique of Shiraishi to these systems, we show that, for every model in this class on a periodic chain of $N$ sites, there is no $k$-local conserved quantity for any $3\le k\le N/2$. In particular, for a frustration-free spin-$1$ chain that exhibits spontaneous $U(1)$ symmetry breaking at zero temperature in one spatial dimension, we prove that every local conserved quantity with support up to half of the system size is a linear combination of the identity, the total magnetization $S^z$, and the Hamiltonian itself. This rigorously establishes that, unlike the Heisenberg ferromagnet, the model admits no local order parameter commuting with the Hamiltonian, so that its continuous symmetry breaking is enabled by the frustration-free structure rather than by a conserved order parameter. We also prove the absence of $k$-local conserved quantities for $3\le k\le N/2$ in the periodic Motzkin chain, a frustration-free spin-$1$ chain closely related to the original Motzkin chain, for which spontaneous $U(1)$ symmetry breaking at zero temperature has also been reported.
\end{abstract}

\maketitle

\section{Introduction}
\label{sec:intro}

An important dichotomy in quantum many-body physics is that between integrable and nonintegrable systems.
Integrable systems admit powerful analytical methods, such as the Bethe ansatz~\cite{Bethe1931}, for the calculation of energy eigenvalues and eigenstates; this solvability is closely tied to the existence of an extensive number of local conserved quantities~\cite{GrabowskiMathieu1995,Kennedy1992YBE,Baxter1982,Korepin1993,CauxMossel2011}. At the same time, such an abundance of local conserved quantities is incompatible with properties expected of generic macroscopic systems, such as thermalization~\cite{Deutsch1991,Srednicki1994,Rigol2008,Rigol2007,Vidmar2016}. It is therefore important to clarify which properties are unique to integrable systems and to identify which systems are nonintegrable. Since the existence or absence of local conserved quantities can be examined directly, it provides a practical criterion for integrability. In this context, Grabowski and Mathieu conjectured that, for Hamiltonians with nearest-neighbor interactions, the existence of a $3$-local conserved quantity is a necessary and sufficient condition for integrability~\cite{GrabowskiMathieu1995Test,GrabowskiMathieu1995}.

Shiraishi~\cite{Shiraishi2019} introduced a systematic technique for
rigorously proving the nonexistence of nontrivial local conserved quantities.
One expands the commutator $[Q,H]$ of a candidate conserved quantity $Q$
with the Hamiltonian $H$ in a basis of operators: if $Q$ is conserved,
the coefficient of every basis operator in $[Q,H]$ must vanish, which
yields a system of linear equations for the expansion coefficients of $Q$.
If these linear equations admit only the trivial solution, then no $Q$ of
the assumed support length can be a conserved quantity. This technique has been applied to spin-$1$ chains including the bilinear-biquadratic (BLBQ) model~\cite{Hokkyo2024,ParkLee2025BLBQ}, to a variety of other quantum spin systems~\cite{Chiba2024mixedIsing,Shiraishi2024NNN,ParkLee2025PXP,Yamaguchi2024classification,Yamaguchi2024proof,ShiraishiTasaki2024,Chiba2025HigherDIsing,fanAbsenceFredkin,Futami2025Compass,Shiraishi2025Classification,Fan2026ThreeSite}, and even to non-spin systems such as the Hubbard and Holstein models~\cite{Futami2025Hubbard,Ishii2026Holstein}. In all of these studies the conjecture of Grabowski and Mathieu has been confirmed, with the exception of non-Hermitian bosonic chains, for which counterexamples were found~\cite{Yamaguchi2026Bosonic}. The technique was further sharpened
by Hokkyo~\cite{Hokkyo2025} into a theorem stating that, under certain
assumptions, the absence of $3$-local quantities whose commutator with the Hamiltonian remains at most $2$-local implies the
absence of $k$-local conserved quantities for all
$3\le k\le N/2$.
We also note a complementary numerical route: the recently proposed symmetry bootstrap detects \emph{a priori unknown} symmetries of a given Hamiltonian from exact-diagonalization data~\cite{Bai2026xSFF}, whereas the analytical technique employed here applies uniformly to families of models with continuous parameters and to arbitrary system sizes.
Throughout this paper, we consider a spin-$1$ chain of $N$ sites under the
periodic boundary condition; the precise definition of $k$-local conserved
quantities is given in Sec.~\ref{sec:notation}.

For spin-$1$ chains, Ref.~\cite{Hokkyo2024} analyzed the general nearest-neighbor Hamiltonian with $U(1)$ symmetry,
\begin{equation}
\begin{aligned}
H = \sum_i\biggl(&\sum_{m\in\{0,\pm1\}}\!\!\!\!e_m E_{m,i}E_{-m,i+1}\\
&+\!\!\!\sum_{m\in\{0,\pm1,\pm2\}}\!\!\!\!\!\!f_m F_{m,i}F_{-m,i+1}+hF_{0,i}\biggr),
\end{aligned}
\label{eq:previous}
\end{equation}
with $e_m\in\mathbb{C}\setminus\{0\}$, $f_m\in \mathbb{C}\setminus\{0\}$, and $h\in\mathbb R$, and established a criterion for the existence of local conserved quantities in models of this form. Here, $E_{m}$ and $F_{m}$ are single-site operators of the same form as the noncommutative analogue of the spherical harmonics~\cite{Madore1992} up to constant factors: the $E_m$ are the rank-$1$ (dipolar) spin operators and the $F_m$ are the rank-$2$ (quadrupolar) operators, so that we refer to the $e_m$ and $f_m$ as the dipolar and quadrupolar couplings, respectively. Their explicit definitions are given in Sec.~\ref{sec:notation}. The proof of Ref.~\cite{Hokkyo2024}, however, relies on all couplings $e_m$ and $f_m$
being nonzero. When some of the $f_m$ vanish, the interaction terms that served as
cancellation partners in the original argument are missing, and the
vanishing-coefficient classes require a genuinely separate treatment.

An important example that motivates such an extension is the following frustration-free chain, whose ground states simultaneously minimize each local term~\cite{TasakiBook,arXiv:2310.16881}:
\begin{equation}
\begin{aligned}
H_i = & -\bigl(S_i^xS_{i+1}^{x}+S_i^yS_{i+1}^y+\Delta\, S_i^z S_{i+1}^z\bigr)\\
      & +\tfrac{1}{\Delta}\bigl(1-(1-\Delta)(S_i^z)^2\bigr)\bigl(1-(1-\Delta)(S_{i+1}^z)^2\bigr),
\end{aligned}
\label{eq:Horig}
\end{equation}
where $H=\sum_iH_i$, $S_i^{x,y,z}$ are the spin-$1$ operators at site $i$, and $\Delta>0$, $\Delta \ne 1$ is a parameter [Eq.~\eqref{eq:Horig} coincides with Eq.~(15) of Ref.~\cite{arXiv:2310.16881} up to the overall normalization $J=\Delta$].
When rewritten in the basis $E_{m},F_{m}$, this model has $f_{\pm1}=f_{\pm2}=0$ [see Eq.~\eqref{eq:coup} below], and is therefore not covered by the analysis of Ref.~\cite{Hokkyo2024}.
This model exhibits spontaneous breaking of the continuous $U(1)$ symmetry at zero temperature in one spatial dimension~\cite{arXiv:2310.16881}, providing an exception to Coleman's theorem~\cite{Coleman1973}, a quantum analogue of the Hohenberg--Mermin--Wagner theorem~\cite{MerminWagner1966,Hohenberg1967}. The exception is closely related to the frustration-free structure: for gapless frustration-free systems, the energy gap of an open-boundary subchain of length $\ell$ is rigorously bounded from above by $O(1/\ell^2)$~\cite{BravyiGosset2015,Gossetozgunov,Anshu,Lemm_2022,lemmCritical2025}, and the analogous $1/L^2$ closing of the gap of the full periodic system of size $L$ has been conjectured~\cite{arXiv:2406.06414}. Correspondingly, gapless excitations of frustration-free systems disperse
quadratically or softer, in contrast to the linearly dispersing
Nambu--Goldstone modes assumed in the standard no-go argument, which is why
one-dimensional symmetry breaking is not excluded~\cite{arXiv:2310.16881,arXiv:2406.06414}. The Heisenberg ferromagnet is the best-known exception to Coleman's theorem, but its mechanism is different: there the order parameter, the magnetization, commutes with the Hamiltonian, and the ordering at $T=0$ is protected by this
commutation---the symmetry-broken ground states are simultaneous eigenstates of
the order parameter and the Hamiltonian, so that quantum fluctuations cannot
destroy the long-range order~\cite{arXiv:2310.16881}. For the model~\eqref{eq:Horig}, in contrast, Ref.~\cite{arXiv:2310.16881} emphasized that the order parameter considered there does \emph{not} commute with the Hamiltonian, so the symmetry breaking is not protected by the mechanism at work in the Heisenberg ferromagnet. Whether some other local order parameter commuting with the Hamiltonian might exist was, however, left open: no proof of the nonexistence was given. Once the nonintegrability proof is extended to cover this model, the nonexistence follows immediately, because any local order parameter commuting with $H$ would be a nontrivial local conserved quantity. Extending the proof technique to spin-$1$ chains with vanishing $f_m$ is thus important not only for the study of integrability itself but also for understanding the mechanism of spontaneous symmetry breaking.

In this work, we extend the nonintegrability proof so that it applies to models in which some of the $f_{m}$ vanish, including the frustration-free chain~\eqref{eq:Horig}. Concretely, we prove that, for all six patterns of vanishing $f_m$ listed in Sec.~\ref{sec:notation}, there is no $k$-local conserved quantity for any $3\le k\le N/2$.
In particular, for the frustration-free chain~\eqref{eq:Horig}, we prove that
every conserved quantity supported on at most $N/2$ consecutive sites is a linear combination of
the trivial ones, namely the identity, the total magnetization $S^z$, and the
Hamiltonian $H$ itself. This establishes the nonexistence of a local order parameter
commuting with the Hamiltonian and thereby indicates that the continuous symmetry breaking of this model is enabled by the frustration-free structure rather than protected by a conserved order parameter. We also prove the absence of $k$-local conserved quantities for
$3\le k\le N/2$ in the periodic Motzkin chain~\cite{Bravyi2012Motzkin,Motzkin2504},
another frustration-free spin-$1$ chain closely related to the original
Motzkin chain, for which spontaneous $U(1)$ symmetry breaking at $T=0$
has also been reported, diagnosed through the long-range order of transverse spin correlations in its unique ground state~\cite{Menon2024} (Appendix~\ref{app:motzkin}).

This paper is organized as follows.
In Sec.~\ref{sec:notation}, we introduce the Hamiltonian, the operator basis following Ref.~\cite{Hokkyo2024}, and the notation, and we outline the proof.  Section~\ref{sec:k2} shows that the strictly $2$-local part of any 2-local conserved quantity must be proportional to the interaction part $\sum_i\hloc_i$ of the Hamiltonian, which implies that every $2$-local conserved quantity is, up
to $1$-local terms and the identity, a multiple of the Hamiltonian itself.  Section~\ref{sec:reduction} invokes the theorems of Ref.~\cite{Hokkyo2025} to reduce the analysis of $k$-local conserved quantities with $3 \le k \le N/2$ to that of
$3$-local quantities whose commutator with $H$ is at most $2$-local.  Section~\ref{sec:Ceq} presents the necessary conditions $(C\text{-}1)$--$(C\text{-}10)$ on the coupling constants for the existence of such a quantity.  Section~\ref{sec:3local} shows that these conditions cannot be satisfied in any of the six patterns, thereby ruling out $k$-local conserved quantities for all $3\le k\le N/2$.  Section~\ref{sec:k1} analyzes
$1$-local conserved quantities, which completes the argument for the chain~\eqref{eq:Horig}.  Section~\ref{sec:conclusion} is devoted to conclusions.  Appendix~\ref{app:Ceq} derives the
constraint equations $(C\text{-}1)$--$(C\text{-}10)$, Appendix~\ref{app:Cfull} collects their full list, Appendix~\ref{app:motzkin} analyzes the periodic Motzkin
chain~\cite{Bravyi2012Motzkin,Motzkin2504}, and Appendix~\ref{app:motzkin-gs} proves the frustration-freeness and the ground-state degeneracy of the periodic Motzkin chain.

\section{Model and proof outline}
\label{sec:notation}

\subsection{Model}

We follow the notation of Ref.~\cite{Hokkyo2024}, recalled briefly here.
The space of operators acting on a single spin-$1$ site is nine dimensional.  As its basis we use the set~\cite{Hokkyo2024}
$\{I,E_0,E_{\pm1},F_0,F_{\pm1},F_{\pm2}\}$, where $I$ is the identity and the other eight elements are traceless: the $E_m$ are the rank-$1$ (dipolar) and the $F_m$ the rank-$2$ (quadrupolar) operators, defined explicitly by
\begin{equation}
\begin{aligned}
E_0 &= \begin{pmatrix}1&0&0\\0&0&0\\0&0&-1\end{pmatrix} = S^z,\\
E_{+1} &= \begin{pmatrix}0&1&0\\0&0&1\\0&0&0\end{pmatrix} = \tfrac{S^x+iS^y}{\sqrt2},\\
F_{0} &= \begin{pmatrix}1&0&0\\0&-2&0\\0&0&1\end{pmatrix}\\
       &\;\;= -(S^x)^2-(S^y)^2+2(S^z)^2,\\
F_{+1} &= \begin{pmatrix}0&1&0\\0&0&-1\\0&0&0\end{pmatrix} = \bigl\{S^z,\tfrac{S^x+iS^y}{\sqrt2}\bigr\},\\
F_{+2} &= \begin{pmatrix}0&0&1\\0&0&0\\0&0&0\end{pmatrix} = \bigl(\tfrac{S^x+iS^y}{\sqrt2}\bigr)^2,\\
E_{-m} &= E_{+m}^\dagger,\quad F_{-m} = F_{+m}^\dagger.
\end{aligned}
\label{eq:basis}
\end{equation}
Here, the matrices are written in the basis $\{|{+}1\rangle,|0\rangle,|{-}1\rangle\}$ of eigenstates of $S^z$, and $\{a,b\}:=ab+ba$ denotes the anticommutator.
The operator basis~\eqref{eq:basis} closes under both commutation and
anticommutation: for any two basis elements $a,b$, the commutator $[a,b]$
and the anticommutator $\{a,b\}$ are again linear combinations of the basis
elements, and follow the relations listed in Table~\ref{tab:comm} and
Table~\ref{tab:anticomm}, respectively. In particular, every commutator of
two basis elements is proportional to a single basis element.

It is useful to assign to each basis element $O$ the \emph{weight} $w(O)$ defined by the commutation relation with $E_0=S^z$,
\begin{equation}
[E_0,O]=w(O)\,O,\quad
w(O)=
\begin{cases}
0 & (O=I,E_0,F_0),\\
\pm1 & (O=E_{\pm1},F_{\pm1}),\\
\pm2 & (O=F_{\pm2}).
\end{cases}
\label{eq:weight-def}
\end{equation}
The weight is additive under operator products, and the total weight of a tensor product of basis elements is the sum of the weights of its factors.

\begin{table*}[t!]
\centering
\def\ep{{E_{+1}}}\def\ez{{E_0}}\def\em{{E_{-1}}}\def\fpp{{F_{+2}}}\def\fp{{F_{+1}}}\def\fz{{F_0}}\def\fm{{F_{-1}}}\def\fmm{{F_{-2}}}%
 \caption{\label{tab:comm} The commutators $[a, b]$ where $a$ and $b$ are elements of our operator basis~\eqref{eq:basis}.
  Each of these commutators is proportional to a single element of this operator basis.
 }
\def\arraystretch{1.5}
\begin{tabular}{|c||c|c|c|c|c|c|c|c|} \hline
   \diagbox[dir=NW]{a}{b} & $\ep$ & $\ez$ & $\em$ & $\fpp$ & $\fp$ & $\fz$ & $\fm$ & $\fmm$ \\ \hline\hline
    $\ep$&$0$&$-\ep$&$+\ez$&$0$&$-2\fpp$&$-3\fp$&$+\fz$&$+\fm$\\ \hline
    $\ez$&$+\ep$&$0$&$-\em$&$+2\fpp$&$+\fp$&$0$&$-\fm$&$-2\fmm$\\ \hline
    $\em$&$-\ez$&$+\em$&$0$&$-\fp$&$-\fz$&$+3\fm$&$+2\fmm$&$0$\\ \hline
    $\fpp$&$0$&$-2\fpp$&$+\fp$&$0$&$0$&$0$&$-\ep$&$+\ez$\\\hline
    $\fp$&$+2\fpp$&$-\fp$&$+\fz$&$0$&$0$&$-3\ep$&$+\ez$&$-\em$\\\hline
    $\fz$&$+3\fp$&$0$&$-3\fm$&$0$&$+3\ep$&$0$&$-3\em$&$0$\\\hline
    $\fm$&$-\fz$&$+\fm$&$-2\fmm$&$+\ep$&$-\ez$&$+3\em$&$0$&$0$\\\hline
    $\fmm$&$-\fm$&$+2\fmm$&$0$&$-\ez$&$+\em$&$0$&$0$&$0$\\\hline
  \end{tabular}
\end{table*}

\begin{table*}[t!]
\centering
\def\ep{{E_{+1}}}\def\ez{{E_0}}\def\em{{E_{-1}}}\def\fpp{{F_{+2}}}\def\fp{{F_{+1}}}\def\fz{{F_0}}\def\fm{{F_{-1}}}\def\fmm{{F_{-2}}}%
\caption{\label{tab:anticomm} The anticommutator $\{a, b\}$ of elements $a$ and $b$ in the operator basis.}
  \def\arraystretch{1.5}
\begin{tabular}{|c||c|c|c|c|c|c|c|c|} \hline
   \diagbox[dir=NW]{a}{b} & $\ep$ & $\ez$ & $\em$ & $\fpp$ & $\fp$ & $\fz$ & $\fm$ & $\fmm$ \\ \hline\hline
    $\ep$ & $2\fpp$ & $\fp$ & $\frac{4I-\fz}{3}$ & $0$ &  $0$ & $-\ep$ &$\ez$ &$\em$ \\ \hline
    $\ez$ & $\fp$& $\frac{4I+2\fz}{3}$ & $\fm$ & $0$ & $\ep$ & $2\ez$ & $\em$ & $0$\\ \hline
    $\em$ & $\frac{4I-\fz}{3}$ & $\fm$ & $2\fmm$ & $\ep$ & $\ez$ & $-\em$ & $0$ & $0$ \\ \hline
    $\fpp$ & $0$ & $0$ & $\ep$ & $0$ & $0$ & $2\fpp$ & $-\fp$ & $\frac{2I+\fz}{3}$ \\\hline
    $\fp$ & $0$ & $\ep$ & $\ez$ & $0$ & $-2\fpp$ & $-\fp$ & $\frac{4I-\fz}{3}$ & $-\fm$ \\\hline
    $\fz$ & $-\ep$ & $2\ez$ & $-\em$ & $2\fpp$ & $-\fp$ & $4I-2\fz$ & $-\fm$ & $2\fmm$ \\\hline
    $\fm$ & $\ez$ & $\em$ & $0$ & $-\fp$ & $\frac{4I-\fz}{3}$ & $-\fm$ & $-2\fmm$ & $0$ \\\hline
    $\fmm$ & $\em$ & $0$ & $0$ & $\frac{2I+\fz}{3}$ & $-\fm$ & $2\fmm$ & $0$ & $0$ \\\hline
  \end{tabular}
\end{table*}

In this work, we consider the following Hamiltonian on a periodic chain of $N$ sites (site $N+1$ is identified with site $1$):
\begin{equation}
\begin{aligned}
H = \sum_{i=1}^N\biggl(&\sum_{m\in\{0,\pm1\}}\!\!\!\!e_m E_{m,i}E_{-m,i+1}\\
&+\!\!\!\sum_{m\in\{0,\pm1,\pm2\}}\!\!\!\!\!\!f_m F_{m,i}F_{-m,i+1}+hF_{0,i}\biggr),
\end{aligned}
\label{eq:Hgen}
\end{equation}
with $e_m\in\mathbb{C}\setminus\{0\}$, $f_m\in \mathbb{C}$, and $h\in\mathbb R$, where $A_{i}$ denotes the basis element $A$ acting on site $i$.
Hermiticity of $H$ enforces $e_{-m}=e_m^\ast$ and $f_{-m}=f_m^\ast$; in particular, $e_0$ and $f_0$ are real.
It is convenient to split the Hamiltonian into the nearest-neighbor interaction part and the on-site part as
\begin{equation}
H=\sum_{i=1}^N\bigl(\hloc_{i}+H^{(1)}_i\bigr),
\label{eq:H-decomp}
\end{equation}
with
\begin{equation}
\begin{aligned}
\hloc_{i}&:=\sum_{m\in\{0,\pm1\}}\!\!e_m E_{m,i}E_{-m,i+1}\\
&\hphantom{:=}+\!\!\sum_{m\in\{0,\pm1,\pm2\}}\!\!\!f_m F_{m,i}F_{-m,i+1},\\
H^{(1)}_i&:=hF_{0,i},
\end{aligned}
\label{eq:H2H1-def}
\end{equation}
where $\hloc_{i}$ acts on the two sites $i,i+1$ and $H^{(1)}_i$ acts on site $i$.

Since every term of $H$ has total weight zero, the Hamiltonian~\eqref{eq:Hgen} has a $U(1)$ symmetry generated by the total magnetization $S^z:=\sum_{i=1}^N E_{0,i}$, i.e., $[S^z,H]=0$. Hence the identity, $S^z$, and $H$ itself, as well as their linear combinations, are always conserved; we refer to these as the \emph{trivial} conserved quantities.
The Hamiltonian~\eqref{eq:Horig} can be rewritten, up to the additive constant $N(1+2\Delta)^2/(9\Delta)$ (a multiple of the identity, which does not affect the analysis of conserved quantities), as
\begin{equation}
H = \sum_{i}\!\Bigl(\!\sum_{m=0,\pm1}\!\!e_{m}E_{m,i}E_{-m,i+1}+f_0F_{0,i}F_{0,i+1}+hF_{0,i}\Bigr),
\label{eq:H}
\end{equation}
with
\begin{equation}
\begin{aligned}
e_0 &= -\Delta, & e_{\pm1} &= -1,\\
f_0 &= \tfrac{(1-\Delta)^2}{9\Delta}, & h &= -\tfrac{2(1+2\Delta)(1-\Delta)}{9\Delta}.
\end{aligned}
\label{eq:coup}
\end{equation}
Thus, the frustration-free chain~\eqref{eq:Horig} corresponds to the special case of Eq.~\eqref{eq:Hgen} with
$f_{\pm1}=f_{\pm2}=0$.

The case in which all of the quadrupolar couplings $f_0,f_{\pm1},f_{\pm2}$ are nonzero was analyzed in Ref.~\cite{Hokkyo2024}. Note that, by the Hermiticity constraint $f_{-m}=f_m^\ast$, the coefficient $f_{+m}$ vanishes if and only if $f_{-m}$ does. In this work we consider the remaining cases in which some of the $f_m$ vanish, which fall into the following six patterns:
\begin{enumerate}[label=\textup{(\roman*)},itemsep=2pt]
\item $f_{\pm1}=f_{\pm2}=0$, $f_0\ne0$;
\item $f_{\pm1}=0$, $f_0, f_{\pm2}\ne0$;
\item $f_0=0$, $f_{\pm1},f_{\pm2}\ne0$;
\item $f_{\pm2}=0$, $f_0,f_{\pm1}\ne0$;
\item $f_{\pm1}=f_0=0$, $f_{\pm2}\ne0$;
\item $f_{\pm2}=f_0=0$, $f_{\pm1}\ne0$.
\end{enumerate}
The frustration-free chain~\eqref{eq:Horig} belongs to pattern (i).
The only pattern not covered by this work or by Ref.~\cite{Hokkyo2024} is the one with $f_0=f_{\pm1}=f_{\pm2}=0$, i.e., the spin-$1$ XXZ chain with a single-ion anisotropy but without biquadratic exchange, which requires a separate analysis and is left for future work.

Our main result is the following theorem (the precise definition of $k$-locality is given in Sec.~\ref{ssec:column} below).
\begin{theorem}[Main result]\label{thm:main}
Consider the Hamiltonian~\eqref{eq:Hgen} on a periodic chain of $N$ sites with all $e_m\neq0$, belonging to any of the patterns \textup{(i)}--\textup{(vi)}. Then, for every $3\le k\le N/2$, there is no $k$-local conserved quantity. Consequently, every conserved quantity $Q$ with $\len(Q)\le N/2$ is a linear combination of the identity, $1$-local conserved quantities, and $H$ itself. Moreover, for the frustration-free chain~\eqref{eq:Horig} with $\Delta>0$ and $\Delta\neq1$ [pattern \textup{(i)} with the couplings~\eqref{eq:coup}], every conserved quantity $Q$ with $\len(Q)\le N/2$ is a linear combination of the identity, $S^z$, and $H$.
\end{theorem}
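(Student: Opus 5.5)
The proof will follow the Shiraishi--Hokkyo strategy laid out in the outline. The main reduction comes from the theorems of Ref.~\cite{Hokkyo2025}. Under their assumptions, for nearest-neighbor Hamiltonians, the absence of $k$-local conserved quantities for all $3\le k\le N/2$ follows from two facts. First, the strictly $2$-local part of any $2$-local conserved quantity must be proportional to $\sum_i\hloc_i$. Second, there must be no genuinely $3$-local $Q$ with $[Q,H]$ at most $2$-local. The first step is therefore to verify Hokkyo's hypotheses for every pattern (i)--(vi). These are essentially injectivity-type conditions on the map $A\mapsto[A\otimes I, \hloc]$ restricted to the rightmost and leftmost sites. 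Because every $e_m\neq0$, the dipolar terms $E_{m,i}E_{-m,i+1}$ are always present, so the commutators with $E_{\pm1},E_0$ already reach every basis element (Table~\ref{tab:comm}). I expect this to hold regardless of which $f_m$ vanish.

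Next, I will establish the $2$-local statement by expanding a general strictly $2$-local $Q=\sum_i\sum_{a,b}q_{ab}\,a_ib_{i+1}$ and extracting the coefficients of $3$-site operators $a_ib_{i+1}c_{i+2}$ in $[Q,H]$. Using Table~\ref{tab:comm}, each such coefficient gives a linear relation of the form $q_{ab}\,c_{b,b'}\,e_m = q_{a'b'}\cdots$, and these relations should force $q_{ab}\propto$ the coupling in $\hloc$. In patterns where some $f_m=0$, the relations must yield $q_{F_mF_{-m}}=0$ rather than a ratio. I will chain through the $E$ couplings to conclude that.

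The core of the argument is the $3$-local analysis. I will write $Q=\sum_i\sum_{a,b,c}q_{abc}\,a_ib_{i+1}c_{i+2}+(\text{lower})$ and impose that all $4$-local coefficients of $[Q,H]$ vanish. Following Ref.~\cite{Hokkyo2024}, the first task is to show that $q_{abc}$ is nonzero only when the end operators are ``shiftable''. The next task is to relate the surviving coefficients to one another along chains, obtaining necessary conditions $(C\text{-}1)$--$(C\text{-}10)$ that are polynomial relations among the $e_m,f_m,h$. Then, for each of the six patterns, I substitute the vanishing $f_m$ and show the conditions are inconsistent, for instance by deriving $e_m=0$ or a contradiction such as $q_{\cdots}=0$ forcing all $q_{abc}=0$. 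I expect this pattern-by-pattern case analysis to be the main obstacle. With $f_m=0$, the cancellation partners used in Ref.~\cite{Hokkyo2024} disappear, so new chains of equations must be found that pass only through dipolar terms or the surviving quadrupolar ones. I would first try to route everything through $E_{\pm1}$ exchange, since $e_{\pm1}\neq0$ in every pattern, and bring in $f_0$ or $f_{\pm2}$ only where a weight-$\pm2$ endpoint requires it.

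The consequence clause then follows. A conserved $Q$ with $\len(Q)\le N/2$ has no $k\ge3$ part, so it is $2$-local. By the $2$-local result, its strictly $2$-local part equals $\lambda\sum_i\hloc_i$. Hence $Q-\lambda H$ is $1$-local plus identity and is itself conserved. For the frustration-free chain, I classify $1$-local conserved $Q=\sum_i\sum_a q_a a_i$. The commutator $[Q,H]$ has $2$-site coefficients which vanish only if $q_a=0$ for $a\neq E_0$. For example, a term in $F_0$ gives $[F_{0,i},E_{\pm1,i}E_{\mp1,i+1}]\propto F_{\pm1,i}E_{\mp1,i+1}$ with nothing to cancel it, since $f_{\pm1}=0$. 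The $E_{\pm1}$ and $F_{\pm1},F_{\pm2}$ terms are eliminated similarly, leaving the identity and $S^z$. Uniformity of the coefficients in $i$ follows from the same equations on the periodic chain, and $\Delta\neq1$ rules out the extra $SU(2)$ generators.
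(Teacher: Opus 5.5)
Your architecture matches the paper's: verify (A) and (B), invoke Hokkyo's reduction to a $3$-local $Q$ with $\len([Q,H])\le2$, derive $(C\text{-}1)$--$(C\text{-}10)$, rule them out pattern by pattern, and finish with the $1$-local classification. Your $1$-local argument, including using $\Delta\neq1\Leftrightarrow f_0\neq0$ to kill $E_{\pm1}$, is correct. The gap is in where the coupling constraints are supposed to come from. As you describe it, you impose that the $4$-local coefficients of $[Q,H]$ vanish and read off $(C\text{-}1)$--$(C\text{-}10)$ from that. Those conditions only say $\hat X^{(3)}\in\mathcal B^{(3)}_\le$. Once (B) holds, Theorem~\ref{thm:hokkyo2} gives $\mathcal B^{(3)}_\le=\mathbb C\sum_i[\hloc_i,\hloc_{i+1}]\neq\{0\}$ for \emph{every} choice of couplings, so no polynomial relation, and hence no contradiction, can arise at that level. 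A quick check: $h$ cannot enter the $4$-local coefficients, yet it appears in the $(C)$ conditions. The constraints actually come from the length-$3$ coefficients of $[Q,H]$, to which the unknown strictly $2$-local part $\hat X^{(2)}$ also contributes. You must show that only its weight-$0$ non-doubling coefficients enter, and reduce them by the pairing relations to three constants $q_1,q_2,q_3$; the paper checks that this step uses only $e_{\pm1}\neq0$. You then take combinations of the conditions $E_i,E_i^\dagger$ in which $q_1,q_2,q_3$ cancel. Without this elimination step your $3$-local analysis has no mechanism for forcing $q_{k=3}=0$. Note also that you need not redo the $4$-local analysis by hand, since Theorem~\ref{thm:hokkyo2} fixes the doubling form of $\hat X^{(3)}$ directly.

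You also misplace where the vanishing $f_m$ bite. Hokkyo's derivation of $(C\text{-}1)$--$(C\text{-}10)$ never divides by an $f_m$, so no new chains of equations are needed at the $3$-local level. What remains is algebra, which you leave undone even though it is the heart of the theorem. Write $A=e_{+1}-f_{+1}$ and $B=e_{+1}+f_{+1}$.
\begin{itemize}
\item For $f_{\pm2}=0$, the surviving conditions reduce to $BA^3=0$, $A(2B^2-A^2)=0$ and $9f_0|A|^2=e_0(2|B|^2-|A|^2)$. These force $A=0$ and then $e_0|e_{+1}|^2=0$.
\item For $f_{\pm1}=0$, adding $(C\text{-}2)$ to $(C\text{-}4)/e_{+1}$ gives $e_0=9f_0$, after which $(C\text{-}1)$ gives $e_0=0$.
\item For $f_0=0$, one needs $(C\text{-}1)$, $(C\text{-}3)$, $(C\text{-}8)$, $(C\text{-}10)$ and a sub-case analysis on $w=AB^\ast$.
\end{itemize}
The place where cancellation partners genuinely disappear is the $2$-local step, assumption (B), and there your plan to ``chain through the $E$ couplings'' is too thin. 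In pattern (iv) the paper needs a separate argument to show that the non-doubling weight-$0$ strings $F_{\pm1}E_{\mp1}$, $E_{\pm1}F_{\mp1}$, $E_0F_0$, $F_0E_0$ have vanishing coefficients. That argument combines four length-$3$ conditions involving $f_{\pm1}$ and $f_0$, not just the $e_m$.
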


\begin{remark}
The upper bound $k\le N/2$ in Theorem~\ref{thm:main} is optimal: conserved quantities beyond the trivial linear combinations exist as soon as $k$ exceeds $N/2$.
The simplest example is $(S^z)^2$, which contains the basis strings $E_{0,i}E_{0,i+\lfloor N/2\rfloor}$ (with identities on the intermediate sites) with coefficient $2$ and hence satisfies $\len\bigl((S^z)^2\bigr)=\lfloor N/2\rfloor+1$.
More generally, the longest basis strings arise from placing the constituent single-site factors or bond terms at equal spacing on the ring, which gives
$\len\bigl((S^z)^n\bigr)=N-\lceil N/n\rceil+1$ for $2\le n\le N$ and
$\len\bigl(H^n\bigr)=N-\lceil N/n\rceil+2$ for $2\le n\le N/2$;
powers of the trivial conserved quantities thus furnish conserved quantities at values of $k$ accumulating at $k=N$.
For $k>N/2$, the meaningful question is therefore not the nonexistence of nontrivial conserved quantities but their classification modulo polynomials in $S^z$ and $H$~\cite{Chiba2024mixedIsing}.
Note also that a $k$-local conserved quantity with $k\le N/2$ remains a conserved quantity when the same translation-invariant Hamiltonian is placed on a larger chain, whereas quantities such as $H^2$ are specific to the given system size~\cite{Chiba2024mixedIsing,Hokkyo2024}.
\end{remark}

\subsection{Notation}
\label{ssec:column}
We call a tensor product of basis elements on $l$ consecutive sites whose leftmost and rightmost factors are not the identity, such as
$\vb A_i^l=A_{(1),i}A_{(2),i+1}\cdots A_{(l),i+l-1}$ with $A_{(1)},A_{(l)}\ne I$,
a \emph{basis string} of length $l$ with leftmost site $i$ [for $l=1$, $\vb A_i^1=A_{(1),i}$ with $A_{(1)}\neq I$].
Any operator $Q$ can be expanded uniquely in terms of basis strings as
\begin{equation}
Q=q_0 I+\sum_{l\ge1}\sum_{i}\sum_{\vb A_i^l}q_{\vb A_i^l}\,\vb A_i^l,
\label{eq:Q-expand}
\end{equation}
with complex coefficients $q_{\vb A_i^l}$.
We define $\len(Q)$ as the largest $l$ such that $q_{\vb A_i^l}\neq0$ for some basis string of length $l$ [and $\len(Q)=0$ if $Q\propto I$].
A \emph{$k$-local conserved quantity} is an operator $Q$ with $\len(Q)=k$ that commutes with $H$.
Note that with this definition a $k$-local conserved quantity may also contain basis strings shorter than $k$.
For example, $H$ itself is a $2$-local conserved quantity, and $S^z$ is a $1$-local conserved quantity.
We often write $q(A_{(1)}A_{(2)}\cdots)_i$ for the coefficient of the basis string $A_{(1),i}A_{(2),i+1}\cdots$ in $Q$, and omit the site index $i$ when it is not important.

A convenient technical device introduced in Ref.~\cite{Shiraishi2019} is a
compact graphical notation for commutators of basis strings. Consider for instance
\begin{equation}
\begin{aligned}
&[E_{0,i}E_{0,i+1}E_{+1,i+2},\,E_{-1,i+2}E_{+1,i+3}]\\
&\quad=E_{0,i}E_{0,i+1}E_{0,i+2}E_{+1,i+3},
\end{aligned}
\label{eq:column-example}
\end{equation}
which follows from $[E_{+1},E_{-1}]=E_0$.  We represent this
calculation by the column expression
\begin{equation}
\begin{array}{ccccc}
&E_0 & E_0 & E_{+1} & \\
    &&     & E_{-1} & E_{+1}\\
\hline
+& E_0 & E_0 & E_0 & E_{+1}
\end{array}.
\label{eq:column-canonical}
\end{equation}
In this notation, the first row shows a basis string appearing in the candidate conserved quantity $Q$, the second row shows a term of the Hamiltonian, and the row below the
horizontal line is the resulting commutator (first row commuted with second row) of these two operators. The columns are aligned with the lattice
sites on which the operators act, so that the relative position of the two operators is encoded by the horizontal offset of
the second row. Throughout the rest of the paper
we use this notation freely; the site index $i$ is implicit.

\subsection{Outline of the proof}
\label{ssec:outline}

The strategy of Refs.~\cite{Shiraishi2019,Hokkyo2024} for
proving the absence of nontrivial local conserved quantities is as follows.
Let $Q$ be a $k$-local quantity expanded as in Eq.~\eqref{eq:Q-expand}.
Since $H$ contains only $1$-local and $2$-local terms, the commutator $[Q,H]$ is at
most $(k+1)$-local:
\begin{equation}
[Q,H]
=\sum_{l=0}^{k+1}\sum_{i}\sum_{\vb B_i^l}r_{\vb B_i^l}\,\vb B_i^l,
\label{eq:QH-expand}
\end{equation}
where each coefficient $r_{\vb B_i^l}$ is a linear combination of the coefficients
$\{q_{\vb A_i^l}\}$ determined by the commutation relations of
Table~\ref{tab:comm} and the anticommutation relations of Table~\ref{tab:anticomm}. The conservation condition
$[Q,H]=0$ is then equivalent to the linear system
\begin{equation}
r_{\vb B_i^l}=0\quad\text{for all }\vb B_i^l\;\text{and}\;l\le k+1,
\label{eq:r-zero}
\end{equation}
for the coefficients $\{q_{\vb A_i^l}\}$. The goal of the
proof is to show that every solution of~\eqref{eq:r-zero} has vanishing length-$k$ coefficients, so that no $k$-local conserved quantity exists.

Our proof of Theorem~\ref{thm:main} proceeds in the following steps.
In Sec.~\ref{sec:k2}, we first show that, in any of the patterns (i)--(vi), a strictly $2$-local quantity $X$ whose commutator $[X,H]$ contains no length-$3$ terms must be proportional to $\sum_i \hloc_i$ (Proposition~\ref{prop:doubling}); in particular, every $2$-local conserved quantity is a linear combination of $H$, $1$-local conserved quantities, and the identity.
This property allows us to apply the theorems of Ref.~\cite{Hokkyo2025}, reviewed in Sec.~\ref{sec:reduction}, which reduce the analysis of $k$-local conserved quantities for all $3\le k\le N/2$ to that of $3$-local quantities whose commutator with $H$ is at most $2$-local and, moreover, fix the strictly $3$-local part of any candidate to a specific ``doubling'' form.
In Sec.~\ref{sec:Ceq} (with derivations in Appendix~\ref{app:Ceq}), we recall the ten necessary conditions $(C\text{-}1)$--$(C\text{-}10)$ on the coupling constants for the existence of a $3$-local quantity whose commutator with $H$ is at most $2$-local.
In Sec.~\ref{sec:3local}, we show that these conditions can never be satisfied within the patterns (i)--(vi) when all $e_m\ne0$, which excludes $k$-local conserved quantities for all $3\le k\le N/2$.
Finally, in Sec.~\ref{sec:k1}, we determine the $1$-local conserved quantities; for the frustration-free chain~\eqref{eq:Horig} the only one is $S^z$ (up to normalization), which completes the proof of Theorem~\ref{thm:main}.
\section{$2$-local conserved quantities}
\label{sec:k2}

In this section, we characterize the strictly $2$-local quantities $X$ whose commutator $[X,H]$ contains no length-$3$ basis strings: they are exhausted by the scalar multiples of the interaction part $\sum_i\hloc_i$ (Proposition~\ref{prop:doubling} below).  In particular, every $2$-local conserved quantity is, up
to $1$-local conserved quantities and the identity, a scalar multiple of the Hamiltonian itself.  This
verifies that our systems meet the assumptions of the reduction theorems of Ref.~\cite{Hokkyo2025}, which we use in Sec.~\ref{sec:reduction} to rule out $k$-local conserved quantities for all $3\le k\le N/2$.

Following Ref.~\cite{Hokkyo2024}, we call a length-$2$ basis string of the form $C_iC^\dagger_{i+1}$, with $C\in\{E_0,E_{\pm1},F_0,F_{\pm1},F_{\pm2}\}$, a \emph{doubling operator} (in Ref.~\cite{Hokkyo2024} the same term also refers to the analogous stacked strings of general length $k$; in this paper we reserve it for the length-$2$ case).  Note that the interaction part $\hloc_i$ of Eq.~\eqref{eq:H2H1-def} is a linear combination of doubling operators: $\hloc_i=\sum_{C}c_C\,C_iC^\dagger_{i+1}$ with $c_{E_m}=e_m$ and $c_{F_m}=f_m$.  The main result of this section is the following.

\begin{proposition}\label{prop:doubling}
Consider the Hamiltonian~\eqref{eq:Hgen} with any of the patterns \textup{(i)}--\textup{(vi)} of Sec.~\ref{sec:notation}, and let
$X=\sum_i\sum_{\vb A_i^2}q_{\vb A_i^2}\vb A_i^2$ be a strictly $2$-local quantity such that $[X,H]$ contains no length-$3$ basis strings.
Then the only basis strings $\vb A_i^2=A_{(1),i}A_{(2),i+1}$ that may have
a nonvanishing coefficient in $X$ are the
doubling operators $C_iC^\dagger_{i+1}$ appearing in $\hloc$ (i.e., those with $c_C\neq0$),
and their coefficients satisfy
\begin{equation}
q(CC^\dagger)_i = q_{k=2}\,c_C,
\label{eq:k2-coef}
\end{equation}
with a constant $q_{k=2}$
independent of the site $i$ and of $C$; that is, $X=q_{k=2}\sum_i \hloc_i$.
\end{proposition}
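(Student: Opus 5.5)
Only length-$3$ strings of $[X,H]$ matter, and these come solely from commuting a length-$2$ string $A_{(1),i}A_{(2),i+1}$ of $X$ with a bond term $\hloc$ overlapping it on exactly one site, i.e.\ $\hloc_{i+1}$ on the right or $\hloc_{i-1}$ on the left. The on-site terms $hF_0$ cannot increase length. The plan is to write out, for each length-$3$ basis string $\vb B$, the coefficient $r_{\vb B}$. By Table~\ref{tab:comm}, every commutator of two basis elements is a single basis element, so each contribution has the form $q(A_{(1)}A_{(2)})\,c_C\,\lambda\,A_{(1)}[A_{(2)},C]C^\dagger$ (right overlap) or $q(A_{(1)}A_{(2)})\,c_C\,\lambda\, C[C^\dagger,A_{(1)}]A_{(2)}$ (left overlap). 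There is one subtlety: at the overlap site the product $A\,C$ rather than the commutator appears only as $[\,\cdot,\cdot\,]$, since the other factors sit on distinct sites. So only commutators are needed, which keeps every equation to very few terms.

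First I would kill all non-doubling strings. Choose a length-$3$ output $\vb B=B_1B_2B_3$ whose outer factors pin down the generating pair uniquely. Suppose $A_{(1)}A_{(2)}$ has $A_{(2)}\neq A_{(1)}^\dagger$. Commute it on the right with a term $E_mE_{-m}$ with $e_m\neq0$; these are always available since all $e_m\ne0$, and this is the key point allowing vanishing $f_m$. Pick $m$ such that $[A_{(2)},E_m]\neq0$, which Table~\ref{tab:comm} shows always exists for $A_{(2)}\ne I$ (e.g.\ use $E_{\pm1}$, or $E_0$ when the weight is nonzero). The output is $A_{(1)}\,D\,E_{-m}$. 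The only other way to produce a string with first factor $A_{(1)}$ and last factor $E_{-m}$ at these sites is a left-overlap contribution from a string $D'E_{-m}$ commuted with a bond $A_{(1)}A_{(1)}^\dagger$-type term, which requires $D'$ related to $A_{(1)}^\dagger$ and hence produces a middle factor fixed by $A_{(1)}$. Comparing the middle factor and the weight bookkeeping (weight is additive and conserved), for a suitable choice of $m$ one can arrange that the left-overlap partner does not exist or itself has a coefficient already shown to vanish. I would organize this as an induction over the rightmost factor (first $F_{\pm2}$, then $F_{\pm1}$, $E_{\pm1}$, then weight-zero ones), yielding $q(A_{(1)}A_{(2)})=0$ unless $A_{(2)}=A_{(1)}^\dagger$.

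Next, for doubling strings $CC^\dagger$, I use outputs of the form $C\,[C^\dagger,E_m]\,E_{-m}$, which receive two contributions: from $q(CC^\dagger)_i e_m$ on the right, and from $q(E_mE_{-m})_{i+1}c_C$ on the left (where $C\,[C^\dagger,E_m]$ must match). Remaining cross terms are non-doubling and already vanish. This gives relations of the form $q(CC^\dagger)_i\,e_m = q(E_mE_{-m})_{i+1}\,c_C$ up to table signs, which first imply $q(CC^\dagger)=0$ whenever $c_C=0$ (since $e_m\ne0$). They also imply $q(CC^\dagger)_i/c_C=q(E_mE_{-m})_{i+1}/e_m$, and applying this with $C=E_{m'}$ links all $E$ ratios across neighboring sites. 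Hence there is a single constant $q_{k=2}$, giving Eq.~\eqref{eq:k2-coef}.

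The main obstacle is the uniqueness bookkeeping in the first step. When some $f_m$ vanish, outputs that in Ref.~\cite{Hokkyo2024} were reached via $F$-bonds must now be reached via $E$-bonds only, and one must verify pattern by pattern that a clean ``two-term'' output always exists. I would first try to use only $E_{\pm1}$ bonds (weight-changing) because their commutator table entries are sparse, falling back to $E_0$ bonds for weight-zero factors.
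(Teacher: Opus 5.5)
\paragraph{Verdict: genuine gap in the first step, in pattern (iv).}
Your framework is sound as far as it goes. Since $\hloc$ contains only doubling operators, and the adjoint action of each basis element is injective on the basis modulo its kernel, every length-$3$ output receives at most one right-overlap and one left-overlap contribution. Your second step for the doubling coefficients is essentially the paper's relation~\eqref{eq:k2-ratio}. The problem is the first step in pattern (iv) ($f_{\pm2}=0$, $f_0,f_{\pm1}\neq0$).

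\paragraph{Where the induction fails.}
You claim one can always choose an output whose other contributor is absent or already killed. This is false for the six weight-zero non-doubling strings $F_{\pm1}E_{\mp1}$, $E_{\mp1}F_{\pm1}$, $E_0F_0$, $F_0E_0$.
\begin{itemize}
\item Because $f_0,f_{\pm1}\neq0$, no missing-bond argument removes any of them.
\item Because $f_{\pm2}=0$, the output $F_0F_{+2}F_{-2}$ does not exist. That output is what isolates $q(F_0E_0)$ when $f_{\pm2}\neq0$, as in pattern (ii).
\item Going through all placements, every length-$3$ output containing one of these six strings has its partner again among the six.
\end{itemize}
For instance, the outputs $F_{+1}E_{-1}E_0$, $F_{+1}E_0E_{-1}$, and $F_0E_{+1}E_{-1}$ give, respectively,
\[
e_0\,q(F_{+1}E_{-1})_i=3f_{+1}\,q(F_0E_0)_{i+1},\qquad e_{+1}\,q(F_{+1}E_{-1})_i=f_{+1}\,q(F_{+1}E_{-1})_{i+1},\qquad e_{+1}\,q(F_0E_0)_i=3f_0\,q(F_{+1}E_{-1})_{i+1}.
\]
So your induction over the rightmost factor has no seed. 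Combining these three $E$-bond relations forces $q(F_{+1}E_{-1})=0$ only off the locus $e_0f_{+1}=9f_0e_{+1}$. The translation recursion alone forces vanishing on a ring only when $(e_{+1}/f_{+1})^N\neq1$. Whether the coefficients vanish thus depends on a coupling-dependent consistency computation, not on an isolated two-term equation.

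\paragraph{What is missing.}
You need an explicit elimination around a cycle of two-term relations, showing that this closed six-variable system has only the trivial solution for every coupling in pattern (iv). The paper does this with the outputs $F_{+1}F_{-2}F_{+1}$, $F_{+1}F_{-1}F_0$, $E_{-1}E_{+1}F_0$, and $E_{-1}E_0F_{+1}$.
\begin{itemize}
\item The first output gives $f_{-1}\,q(F_{+1}E_{-1})_i=-f_{+1}\,q(E_{-1}F_{+1})_{i+1}$ directly.
\item The other three, chained through $q(E_0F_0)$, give the same relation with a plus sign.
\item Together these force both coefficients to vanish, and the remaining strings in the cluster then follow.
\end{itemize}
This argument uses the $F_0F_0$ and $F_{\mp1}F_{\pm1}$ bonds on the right. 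So your plan to rely on $E$-bonds alone is not shown to suffice in this pattern.
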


Before proving the proposition, we note its consequence for $2$-local conserved quantities.  Let $Q$ be a $2$-local conserved quantity and let $X$ be its strictly $2$-local part.  Since the length-$3$ part of $[Q,H]$, which must vanish, receives contributions only from $X$, the operator $X$ satisfies the assumption of Proposition~\ref{prop:doubling}, and hence $X=q_{k=2}\sum_i\hloc_i$.  Using $\sum_i\hloc_i=H-h\sum_iF_{0,i}$, we find that
\begin{equation}
Q-q_{k=2}H=\hat X^{(1)}+q_{k=2}h\sum_iF_{0,i}+\text{const},
\end{equation}
where $\hat X^{(1)}$ denotes the strictly $1$-local part of $Q$, is an operator with $\len\le1$, which is itself conserved because $Q$ and $H$ are.  Therefore, every $2$-local conserved quantity is a linear combination of $H$, $1$-local conserved quantities, and the identity.

We now prove Proposition~\ref{prop:doubling}, first for pattern (i) and then for patterns (ii)--(vi).

\subsection{Pattern (i): $f_{\pm1}=f_{\pm2}=0$}
\label{ssec:k2-pi}

Let $X$ be a strictly $2$-local quantity as in Proposition~\ref{prop:doubling} and write the coefficient of
$\vb A_i^2=A_{(1)}A_{(2)}$ as $q(A_{(1)}A_{(2)})_i$. We first show that
$q(A_{(1)}A_{(2)})\ne 0$ forces $A_{(2)}=A_{(1)}^\dagger$, in the following four steps.

\subsubsection*{Step 1: $A_{(j)}\in\{F_{\pm1},F_{\pm2}\}$ gives zero
coefficient.} For instance, take $A_{(1)}=F_{+1}$ and
$A_{(2)}=E_{-1}$. The only generating pair for the length-$3$ basis string
$F_{+1}E_{-1}E_0$ in $[X,H]$ is
\begin{equation}
\begin{array}{ccc}
F_{+1} & E_{-1} & \\
       & E_0 & E_0 \\
\hline
F_{+1} & E_{-1} & E_0
\end{array}.
\end{equation}
Indeed, the alternative placement, in which the Hamiltonian term sits on the left pair of sites, would require an interaction term of $H$ containing $F_{+1}$, which is absent in pattern (i).
Hence the coefficient of $F_{+1}E_{-1}E_0$ in $[X,H]$ is proportional to $e_0\,q(F_{+1}E_{-1})_i$ with no second pair available to cancel it, so
$q(F_{+1}E_{-1})=0$. Similar arguments give zero coefficients whenever
$A_{(1)}$ or $A_{(2)}$ belongs to $\{F_{\pm 1},F_{\pm 2}\}$. We henceforth assume
$A_{(1)},A_{(2)}\notin\{F_{\pm 1},F_{\pm 2}\}$.

\subsubsection*{Step 2: $A_{(2)}=E_0$ forces $A_{(1)}=E_0$.}
Consider the length-$3$ basis string $A_{(1)}E_{+1}E_{-1}$. It is generated by the two placements
\begin{equation}
\begin{array}{ccc}
A_{(1)} & E_0 & \\
        & E_{+1} & E_{-1} \\
\hline
A_{(1)} & E_{+1} & E_{-1}
\end{array},\;\;
\begin{array}{cccc}
& & D & E_{-1} \\
& A_{(1)} & A_{(1)}^\dagger & \\
\hline
\propto & A_{(1)} & E_{+1} & E_{-1}
\end{array},
\end{equation}
where $D$ must satisfy $[A_{(1)}^\dagger,D]\propto E_{+1}$.
For the coefficient of $A_{(1)}E_{+1}E_{-1}$ in $[X,H]$ to vanish, the first contribution, proportional to $q(A_{(1)}E_0)_i$, must be cancelled by contributions of the second type, proportional to $q(DE_{-1})_{i+1}$.
Among the pairs with $A_{(1)},D\notin\{F_{\pm1},F_{\pm2}\}$ (Step 1 removes the others), the condition $[A_{(1)}^\dagger,D]\propto E_{+1}$ is satisfied
only for $(D,A_{(1)})=(E_{+1},E_0)$ and $(E_0,E_{-1})$. By the analogous consideration of the length-$3$ basis string $A_{(1)}E_{-1}E_{+1}$, the pairs satisfying
$[A_{(1)}^\dagger,D]\propto E_{-1}$ are $(D,A_{(1)})=(E_{-1},E_0)$ and $(E_0,E_{+1})$.
A nonzero $q(A_{(1)}E_0)$ requires a cancellation partner in both cases, and the only $A_{(1)}$ appearing in both lists is $A_{(1)}=E_0$; hence $q(A_{(1)}E_0)=0$ unless $A_{(1)}=E_0$.

\subsubsection*{Step 3: $A_{(2)}=F_0$ forces $A_{(1)}=F_0$.}
This follows in the same way by enumerating the placements generating the length-$3$ basis strings
$A_{(1)} F_{+1} E_{-1}$ and $A_{(1)} F_{-1} E_{+1}$: the contribution proportional to $q(A_{(1)}F_0)_i$ [via $[F_0,E_{\pm1}]=\pm3F_{\pm1}$] must be cancelled by a contribution proportional to some $q(DE_{\mp1})_{i+1}$ with $[A_{(1)}^\dagger,D]\propto F_{\pm1}$, whose coefficient survives Steps~1 and~2 only if $A_{(1)}=F_0$ (with $D=F_{\pm1}$ excluded by Step~1).

\subsubsection*{Step 4: $A_{(2)}\notin\{E_0,F_0\}$ forces $A_{(1)}=A_{(2)}^\dagger$.}
Since $A_{(2)}\notin\{I,E_0,F_0\}$ has nonzero weight, $[A_{(2)},E_0]=-w(A_{(2)})A_{(2)}\neq0$, and the length-$3$ basis string $A_{(1)}A_{(2)}E_0$ is generated by the placement of $e_0E_0E_0\in H$ on the right pair of sites, with coefficient proportional to $q(A_{(1)}A_{(2)})_i$.
The only possible cancellation partners are of the form $q(DE_0)_{i+1}$ with $[A_{(1)}^\dagger,D]\propto A_{(2)}$, and by Step 2 we need $D=E_0$. The condition
$[A_{(1)}^\dagger,E_0]\propto A_{(2)}$ with a nonvanishing proportionality constant then gives
$A_{(1)}=A_{(2)}^\dagger$ whenever $q(A_{(1)}A_{(2)})_i\neq0$.

It remains to determine the coefficients of the doubling operators.
For two doubling operators $C_{(1)}C_{(1)}^\dagger$ and $C_{(2)}C_{(2)}^\dagger$ with $[C_{(1)}^\dagger,C_{(2)}]\neq0$,
the length-$3$ basis string
$\vb B^3_i \propto C_{(1)}\,[C_{(1)}^\dagger,C_{(2)}]\,C_{(2)}^\dagger$
in $[X,H]$ is generated by exactly two placements:
\begin{equation}
\begin{array}{ccc}
C_{(1)} & C_{(1)}^\dagger & \\
        & C_{(2)} & C_{(2)}^\dagger \\
\hline
C_{(1)} & [C_{(1)}^\dagger,C_{(2)}] & C_{(2)}^\dagger
\end{array},\;\;
\begin{array}{cccc}
&&C_{(2)} & C_{(2)}^\dagger \\
&C_{(1)} & C_{(1)}^\dagger & \\
\hline
- & C_{(1)} & [C_{(1)}^\dagger,C_{(2)}] & C_{(2)}^\dagger
\end{array},
\end{equation}
whose cancellation yields
\begin{equation}
c_{C_{(2)}}\,q(C_{(1)}C_{(1)}^\dagger)_i
=c_{C_{(1)}}\,q(C_{(2)}C_{(2)}^\dagger)_{i+1}.
\label{eq:k2-ratio}
\end{equation}
Taking $C_{(1)}=C_{(2)}=C$ with $[C^\dagger,C]\ne0$ (e.g., $C=E_{\pm1}$) in Eq.~\eqref{eq:k2-ratio} shows that $q(CC^\dagger)_i$ is independent of $i$; taking $C_{(1)}\neq C_{(2)}$ then shows that the ratio $q(CC^\dagger)/c_{C}$ is common to all doubling operators appearing in $\hloc$ [in pattern (i), the relevant pairs $(C_{(1)},C_{(2)})$, such as $(E_0,E_{+1})$, $(E_{+1},E_{-1})$, and $(F_0,E_{\pm1})$, connect all of $C\in\{E_0,E_{\pm1},F_0\}$].
This is precisely~\eqref{eq:k2-coef}, and completes the proof for pattern (i).

\subsection{Patterns (ii)--(vi)}
\label{ssec:k2-rest}
The proofs for the remaining patterns are similar to that of pattern (i) and to the analysis of Ref.~\cite{Hokkyo2024}, except for pattern (iv), which requires a separate argument. Below, we outline the proof for pattern (ii) briefly, comment on patterns (iii), (v), and (vi), and then give the proof for pattern (iv) in detail.
\subsubsection{Pattern (ii): $f_{\pm 1}=0$, $f_0,f_{\pm 2}\ne 0$}
Writing $\vb A_i = A_{(1)} A_{(2)}$, if $A_{(1)}=F_{\pm 1}$ or $A_{(2)}=F_{\pm 1}$,
the coefficient vanishes for the same reason as in Step 1 of pattern (i): no
$F_{\pm 1}$-containing interaction appears in $H$, so a
length-$3$ basis string of the form $F_{\pm1}\,[A_{(2)},D]\,D^\dagger$ has no second
generating placement to cancel it.

Next, consider $A_{(2)}=E_0$.
For $A_{(1)}=F_0$, the length-$3$ basis string
$F_0 F_{+2} F_{-2}$ is generated only by the placement with
$F_0 E_0\in X$ and $F_{+2}F_{-2}\in H$, so $q(F_0E_0)=0$.
For $A_{(1)}=C$ with $C\notin\{E_0,F_0,F_{\pm 1}\}$, the basis string $C\,[E_0,C]\,C^\dagger\propto CCC^\dagger$ is
generated only by the placement with $CE_0\in X$ and $CC^\dagger\in H$, so $q(CE_0)=0$.
Thus $A_{(2)}=E_0$ admits only $A_{(1)}=E_0$.

For $A_{(2)}=F_0$ and $A_{(1)}\ne F_0$, one considers the length-$3$ basis string
$A_{(1)} F_{+1}E_{-1}$ generated by the placement with $A_{(1)}F_0\in X$ and $E_{+1}E_{-1}\in H$; enumerating the possible cancellation partners as in Step 3 of pattern (i) and using the exclusions already established singles
out $A_{(1)}=F_0$ as the only admissible value. The remaining cases reduce
to doubling operators by the argument of
Step~4 of Sec.~\ref{ssec:k2-pi}. The coefficients of the doubling operators are then fixed to the form~\eqref{eq:k2-coef} exactly as in pattern (i).

\subsubsection{Patterns (iii), (v), and (vi)}
In these patterns one or more of the $f_m$ vanish while the rest are
nonzero. If $\vb A_i = A_{(1)}A_{(2)}$ contains an $F_{\pm m}$ corresponding
to a vanishing $f_{\pm m}$, the generating-placement argument of Step 1 of pattern (i) forces
the coefficient to vanish.  The analysis of the remaining basis strings proceeds as in the
proof of Ref.~\cite{Hokkyo2024}, and the conclusion is again that only the
doubling operators appearing in $\hloc$ can have nonzero coefficients, with the coefficients fixed to the form~\eqref{eq:k2-coef}.

\subsubsection{Pattern (iv): $f_{\pm 2}=0$, $f_0,f_{\pm 1}\ne 0$}
Write $\vb A_i=O_{m_1}O_{m_2}$ with $O_m\in\{E_m,F_m\}$ and $m_1,m_2\in\{0,\pm1,\pm2\}$, where the index $m$ denotes the weight of the operator [$O_{\pm2}=F_{\pm2}$]. Since $f_{\pm2}=0$, the argument of Step 1 of pattern (i) gives $q(\vb{A})_i = 0$ whenever $m_1 = \pm2$ or $m_2 = \pm2$; we may thus assume $|m_1|,|m_2|\le1$. We first show that $m_2=-m_1$ is necessary.
Suppose $q(O_{m_1}O_{m_2})\neq0$. Commuting with an interaction term $D_{+1}D_{-1}\in H$ [$D_{+1}\in\{E_{+1},F_{+1}\}$] placed on the right pair of sites produces a length-$3$ basis string
$O_{m_1}O'_{m_2+1}D_{-1}$ with $O'_{m_2+1}\propto[O_{m_2},D_{+1}]$ of weight $m_2+1$ (nonvanishing for a suitable choice of $D_{+1}$). Its only possible cancellation partners are of the form
\begin{equation}
\begin{array}{ccc}
& O_{m_3} & D_{-1}\\
O_{m_1} & O_{-m_1} & \\
\hline
O_{m_1} & O'_{m_2+1} & D_{-1}
\end{array},
\end{equation}
with $O_{m_1}O_{-m_1}\in H$, which requires $m_2+1=-m_1+m_3$. Since basis strings containing weight-$\pm2$ operators are already excluded, $|m_3|\le 1$, and hence $m_2\le -m_1$.
Repeating the argument with the lowering interaction $D_{-1}D_{+1}$ gives $m_2\ge -m_1$. Therefore
$m_2=-m_1$.

Next, for $\vb A_i = C_m D_{-m}$ with $D_{-m}\ne C_m^\dagger$, we show that
the coefficient still vanishes. Take $C_m D_{-m}=F_{+1}E_{-1}$ as a representative example.
Comparing the two generating placements for $\vb B=F_{+1}F_{-2}F_{+1}$
[note $[E_{-1},F_{-1}]=2F_{-2}$],
\begin{equation}
\begin{array}{ccc}
F_{+1} & E_{-1} & \\
       & F_{-1} & F_{+1} \\
\hline
F_{+1} & (2F_{-2}) & F_{+1}
\end{array},\;
\begin{array}{ccc}
& E_{-1} & F_{+1}\\
F_{+1} & F_{-1} & \\
\hline
F_{+1} & (2F_{-2}) & F_{+1}
\end{array},
\end{equation}
we obtain
\begin{equation}
f_{-1}\,q(F_{+1}E_{-1})_i = -f_{+1}\,q(E_{-1}F_{+1})_{i+1}.
\label{eq:contra1}
\end{equation}
Similarly, the placements generating $\vb B = F_{+1}F_{-1}F_0$,
$\vb B = E_{-1}E_{+1}F_0$, and $\vb B = E_{-1}E_0F_{+1}$ yield, respectively,
\begin{align}
3f_0\,q(F_{+1}E_{-1})_i &= f_{+1}\,q(E_0F_0)_{i+1},\label{eq:v}\\
3f_0\,q(E_{-1}F_{+1})_i &= e_{-1}\,q(E_0F_0)_{i+1},\label{eq:vi}\\
f_{-1}\,q(E_{-1}F_{+1})_i &= e_{-1}\,q(E_{-1}F_{+1})_{i+1}.\label{eq:vii}
\end{align}
Combining~\eqref{eq:vi} and \eqref{eq:vii} gives
$q(E_0F_0)_{i+1}=3f_0 f_{-1}^{-1}\, q(E_{-1}F_{+1})_{i+1}$,
which together with~\eqref{eq:v} (applied at site $i$) and $f_0\neq0$ gives
\begin{equation}
f_{-1}\,q(F_{+1}E_{-1})_i = +f_{+1}\,q(E_{-1}F_{+1})_{i+1}.
\label{eq:contra2}
\end{equation}
Since $f_{\pm1}\neq0$, Eqs.~\eqref{eq:contra1} and~\eqref{eq:contra2} are consistent only if
$q(F_{+1}E_{-1})_i=q(E_{-1}F_{+1})_{i+1}=0$. The coefficients of the remaining non-doubling weight-$0$ strings,
$E_{+1}F_{-1}$, $F_{-1}E_{+1}$, $E_0F_0$, and $F_0E_0$,
vanish by the same argument. Hence only doubling operators remain, and their coefficients are fixed to the form~\eqref{eq:k2-coef} as in pattern (i).

\section{Reduction to $3$-local quantities}
\label{sec:reduction}
To analyze the existence of $k$-local conserved quantities for $k\ge3$, we apply the theorems of Ref.~\cite{Hokkyo2025}, which reduce the problem to the analysis of $3$-local quantities.

To state the theorems, we introduce for each $k\geq2$ the space $\mathcal B^{(k)}$ of
strictly $k$-local quantities---linear combinations of basis strings of length
exactly $k$ (with arbitrary, possibly site-dependent coefficients)---together with its subspace
\begin{equation}
\mathcal B^{(k)}_\le:=\bigl\{\,\hat X\in\mathcal B^{(k)}\;\big|\;\len([\hat X,H])\le k\,\bigr\},
\label{eq:Ble}
\end{equation}
i.e., the strictly $k$-local quantities whose commutator with $H$ does not increase
the support length.
Note that the decomposition~\eqref{eq:H-decomp} is of the form required in Ref.~\cite{Hokkyo2025}: $\hloc_i$ is a sum of basis strings of length $2$ (traceless on both sites), and $H^{(1)}_i$ is traceless on site $i$.

\begin{theorem}[Hokkyo~\cite{Hokkyo2025}]
\label{thm:hokkyo}
Consider a Hamiltonian $H=\sum_i(\hloc_i+H^{(1)}_i)$ on a periodic chain of $N$ sites, with nearest-neighbor interactions $\hloc_i$ and
on-site potentials $H^{(1)}_i$, satisfying the following two assumptions:
\begin{itemize}[itemsep=2pt]
\item[\textup{(A)}] For every site $i$, any traceless operator $\hat X$ acting on site $i+1$, and any traceless operator $\hat Y$ acting on site $i$,
\begin{equation}
\begin{aligned}
&[I\otimes\hat X,\hloc_i]=0\;\Longrightarrow\;\hat X=0,\\
&[\hat Y\otimes I,\hloc_i]=0\;\Longrightarrow\;\hat Y=0.
\end{aligned}
\label{eq:inj-assump}
\end{equation}
\item[\textup{(B)}] $\mathcal B^{(2)}_\le=\mathbb C\,\sum_i\hloc_i$, i.e., every strictly $2$-local quantity $\hat X$ with $\len([\hat X,H])\le2$ is a scalar multiple of $\sum_i\hloc_i$.
\end{itemize}
If there is no
$3$-local quantity $Q$ such that $\len([Q,H])\le 2$,
then the system has no $k$-local conserved quantity for any
$3\le k\le N/2$. Equivalently, if there is a $k$-local conserved
quantity for some $3\le k\le N/2$, then there also exists a
$3$-local quantity $Q$ such that $\len([Q,H])\le 2$.
\end{theorem}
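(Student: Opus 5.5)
The plan is to argue by contraposition: assume a $k$-local conserved quantity $Q$ exists for some $3\le k\le N/2$, and extract from it a $3$-local quantity $Q'$ with $\len([Q',H])\le2$. Write $Q=\sum_{l\le k}\hat Q^{(l)}$ with $\hat Q^{(l)}\in\mathcal B^{(l)}$. The length-$(k+1)$ part of $[Q,H]$ comes only from $[\hat Q^{(k)},\sum_i\hloc_i]$, so $\hat Q^{(k)}\in\mathcal B^{(k)}_\le$. The work is therefore to understand $\mathcal B^{(k)}_\le$ for all $3\le k\le N/2$ and to show that its structure propagates down to $k=3$.

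\emph{Step 1: rigidity from assumption (A).} Take $\hat X\in\mathcal B^{(k)}_\le$. Fix a length-$(k+1)$ string and look at the two ways of producing it, namely $\hloc$ acting on the right end or on the left end of a length-$k$ string. Assumption (A) says that the linear maps $\hat Y\mapsto[\hat Y\otimes I,\hloc_i]$ and $\hat X\mapsto[I\otimes\hat X,\hloc_i]$ are injective on traceless operators. I would use this to show that the right end of $\hat X$ determines its left end. Concretely, the cancellation equations express the coefficients of $\hat X$ at site $i$ through those at site $i+1$ by a linear "transfer" relation. Iterating this relation, I would show that $\hat X$ is determined by a single scalar function of $i$. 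The target form is
\begin{equation}
\hat X=\sum_i q_i\,\bigl[\cdots[\hloc_i,\hloc_{i+1}],\dots,\hloc_{i+k-2}\bigr]+(\text{strings of length}<k),
\end{equation}
which is the "doubling" form of Ref.~\cite{Hokkyo2024}. The base case $k=2$ is exactly assumption (B), i.e.\ Proposition~\ref{prop:doubling}. The induction from $k-1$ to $k$ uses (A) to peel off the rightmost site. The bound $k\le N/2$ ensures that the left and right ends of each string never overlap around the ring, so that no wrap-around cancellations occur.

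\emph{Step 2: constancy of $q_i$ and descent.} Next I would examine the length-$k$ part of $[Q,H]$, which receives three contributions: $[\hat Q^{(k)},\sum_i H^{(1)}_i]$, the part of $[\hat Q^{(k)},\sum_i\hloc_i]$ that does not increase length, and $[\hat Q^{(k-1)},\sum_i\hloc_i]$. Projecting onto the strings whose end factors are those of the doubling form, and using (A) again, I would show that $q_i$ must be site-independent. I would also show that the doubling structure of $\hat Q^{(k)}$ forces $\hat Q^{(k-1)}$ to be of doubling type modulo $\mathcal B^{(k-1)}_\le$. The core of the argument is a "descent" map: because the doubling expression is built from iterated commutators with $\hloc$, the equation at length $k$ can be rewritten as the corresponding equation at length $k-1$ for a shorter object. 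From the coefficients $q$ and the subleading data of $Q$, I would then assemble $Q'=\sum_i q\,[\hloc_i,\hloc_{i+1}]+(\text{2- and 1-local corrections})$ and verify that it satisfies $\len([Q',H])\le2$. The verification would transport the length-$(l+1)$ cancellation identities of $Q$, for $l=k,k-1,\dots$, down to lengths $4$ and $3$ for $Q'$.

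\emph{Main obstacle.} I expect the descent step to be the hardest. The length-$k$ equation mixes $\hat Q^{(k)}$ with the unknown, non-doubling part of $\hat Q^{(k-1)}$. Showing that this mixing reproduces exactly the constraints of a $3$-local candidate requires a careful bookkeeping of the end-factor structure. My first attempt would be to work only with strings whose two end factors are already pinned by Step 1. On such strings the contributions of $\hat Q^{(k-1)}$ are themselves constrained through (A) and (B), which should close the argument.
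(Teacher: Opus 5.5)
The paper does not reprove this theorem; it imports it from Ref.~\cite{Hokkyo2025}, together with the isomorphisms $\iota_k$ of Theorem~\ref{thm:hokkyo2}. Your Step~1 is in effect a sketch of Theorem~\ref{thm:hokkyo2} combined with (B), and as a plan it is reasonable. The genuine gap is the descent. It is the entire content of the statement beyond $\iota_k$, and you give no mechanism for it.

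Concretely, suppose $\hat Q^{(k)}=q\,\iota_k\cdots\iota_3\bigl(\sum_i\hloc_i\bigr)$ with $q\neq0$, and some $\hat Q^{(k-1)}\in\mathcal B^{(k-1)}$ makes the length-$k$ part of $[\hat Q^{(k)}+\hat Q^{(k-1)},H]$ vanish. You must then produce $\hat X^{(2)}\in\mathcal B^{(2)}$ such that the length-$3$ part of $[q\sum_i[\hloc_i,\hloc_{i+1}]+\hat X^{(2)},H]$ vanishes.

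Only the length-$(k+1)$ and length-$k$ identities of $Q$ have counterparts for $Q'$, at lengths $4$ and $3$. This is why the conclusion is merely $\len([Q',H])\le2$ rather than a conserved $3$-local quantity. It also means $1$-local corrections are irrelevant, and the lower-length identities you propose to transport play no role.

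Your proposed handle, that $\hat Q^{(k-1)}$ is ``of doubling type modulo $\mathcal B^{(k-1)}_\le$'', is not available either. The length-$k$ equation fixes $\hat Q^{(k-1)}$ only modulo $\mathcal B^{(k-1)}_\le$, and its particular part genuinely contains non-doubling strings. Already at $k=3$, Appendix~\ref{app:Ceq} shows this: the $2$-local part must carry $E_{\pm1}F_{\mp1}$, $F_{\pm1}E_{\mp1}$, $E_0F_0$, $F_0E_0$, with coefficients $q_1,q_2,q_3$ forced by the generically nonzero inhomogeneous terms $R_i$. So nothing in your sketch says how $\hat X^{(2)}$ is to be obtained from $\hat Q^{(k-1)}$.

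What is needed is an explicit linear map carrying solutions of the level-$k$ equation down to the level-$3$ one. One way to organize it is the induction ``$k$-local $Q$ with $\len([Q,H])\le k-1$ $\Rightarrow$ $(k-1)$-local $Q'$ with $\len([Q',H])\le k-2$''. Here $Q'$ has top part $\iota_k^{-1}(\hat Q^{(k)})$. Its next part would be extracted from $\hat Q^{(k-1)}$ by stripping the rightmost site of the length-$k$ equation, using a left inverse of $\hat Y\mapsto[\hat Y\otimes I,\hloc]$, which exists by (A). Alternatively, restrict the length-$k$ equation to strings whose rightmost $k-3$ sites carry a fixed doubling tail, and show that only coefficients of $\hat Q^{(k-1)}$ with the same tail enter.

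Either way, the real work lies in the terms that fail to intertwine. The stripping map commutes with left extensions on strings of length $\ge3$, but not with the length-preserving part of $\mathrm{ad}_H$ near the stripped end. The component of $\hat Q^{(k-1)}$ outside the image of right extension must also be shown to drop out or be absorbed. None of this is addressed in your proposal.

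A minor point: site independence of $q_i$ needs no length-$k$ input. Set $D_{k,i}:=[\cdots[\hloc_i,\hloc_{i+1}],\dots,\hloc_{i+k-2}]$. The Jacobi identity gives $[D_{k,i},\hloc_{i-1}]=-D_{k+1,i-1}$. Hence the length-$(k+1)$ part of $[\sum_iq_iD_{k,i},H]$ is $\sum_i(q_i-q_{i+1})D_{k+1,i}$, exactly as in Eq.~\eqref{eq:k2-ratio}.
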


Note that the reduction is to $3$-local quantities $Q$ with $\len([Q,H])\le 2$, which form a strictly larger class than $3$-local \emph{conserved} quantities ($[Q,H]=0$); accordingly, in Secs.~\ref{sec:Ceq} and \ref{sec:3local} we rule out this larger class.

Let us verify the assumptions for the Hamiltonian~\eqref{eq:Hgen} with any of the patterns (i)--(vi).
Assumption (B) is precisely the content of Proposition~\ref{prop:doubling}: a strictly $2$-local $\hat X$ satisfies $\len([\hat X,H])\le2$ if and only if $[\hat X,H]$ contains no length-$3$ basis strings [$[\hat X,H]$ is at most $3$-local], and the proposition shows that this forces $\hat X\propto\sum_i\hloc_i$; conversely, $\len([\sum_i\hloc_i,H])=\len([h\sum_iF_{0,i},H])\le2$ by $\sum_i\hloc_i=H-h\sum_iF_{0,i}$.
For assumption (A), note that $[I\otimes\hat X,\hloc_i]=\sum_C c_C\, C\otimes[\hat X,C^\dagger]$, and the basis strings with distinct first factors $C$ are linearly independent; hence $[\hat X,C^\dagger]=0$ is required for every $C$ with $c_C\neq0$. Taking $C=E_0$ gives $[\hat X,E_0]=0$, so $\hat X$ has weight zero, $\hat X=xE_0+yF_0$. Taking $C=E_{+1}$ then gives $0=[\hat X,E_{-1}]=-xE_{-1}-3yF_{-1}$, so $x=y=0$. The second line of Eq.~\eqref{eq:inj-assump} follows in the same way. Note that only $e_0\neq0$ and $e_{+1}\neq0$ are used, so assumption (A) holds in all patterns (i)--(vi).

Hence, to rule out $k$-local conserved quantities for all $3\le k\le N/2$, it suffices to show that no $3$-local quantity $Q$ satisfies $\len([Q,H])\le2$.
The following theorem further restricts the form of such a $Q$.
\begin{theorem}[Hokkyo~\cite{Hokkyo2025}]
\label{thm:hokkyo2}
Consider a Hamiltonian $H$ as in Theorem~\ref{thm:hokkyo} satisfying assumption \textup{(A)}.
Then, for $3\le k\le N/2$, there exist linear isomorphisms
\begin{equation}
\iota_k\colon \mathcal B^{(k-1)}_\le\;\xrightarrow{\;\cong\;}\;\mathcal B^{(k)}_\le,
\label{eq:iota}
\end{equation}
which act on $\hat X^{(k-1)}=\sum_{i=1}^N\hat X^{(k-1)}_i\in\mathcal B^{(k-1)}_\le$ \textup(where $\hat X^{(k-1)}_i$ collects the basis strings with leftmost site $i$\textup) as
\begin{equation}
\hat X^{(k-1)}\;\longmapsto\;\sum_{i=1}^N\bigl[\hat X^{(k-1)}_i,\,\hloc_{i+k-2}\bigr]\in\mathcal B^{(k)}_\le.
\label{eq:iota-action}
\end{equation}
\end{theorem}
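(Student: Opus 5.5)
The plan has four steps: prove that $\iota_k$ maps $\mathcal B^{(k-1)}_\le$ into $\mathcal B^{(k)}_\le$, prove injectivity, prove surjectivity, and use the bound $k\le N/2$ throughout to keep the combinatorics local.

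\textbf{Step 1: the image has the right length.} Fix $\hat X\in\mathcal B^{(k-1)}_\le$ and write $\hat X=\sum_i\hat X_i$. The commutator $[\hat X,H]$ has three possible length-$k$ contributions at leftmost site $i$. The first is $[\hat X_i,\hloc_{i+k-2}]$, where $\hloc$ overhangs on the right. The second is $[\hat X_{i+1},\hloc_i]$, where $\hloc$ overhangs on the left. The third consists of terms with $\hloc$ sitting inside the support, which produce strings of length at most $k-1$. The condition $\len([\hat X,H])\le k-1$ says the length-$k$ part cancels, i.e. $[\hat X_i,\hloc_{i+k-2}]+[\hat X_{i+1},\hloc_i]$ has no length-$k$ strings.
\begin{itemize}
\item To see that $Y:=\iota_k(\hat X)$ is strictly $k$-local, use assumption (A). The rightmost factor of $\hat X_i$ is traceless, so its commutator with $\hloc$ has nonvanishing components on both sites.
\item To see that $\len([Y,H])\le k$, compute the length-$(k+1)$ part of $[Y,H]$. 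It is $\sum_i\bigl([[\hat X_i,\hloc_{i+k-2}],\hloc_{i+k-1}]+[\hloc_{i-1},[\hat X_i,\hloc_{i+k-2}]]\bigr)$, keeping only the pieces that extend the support.
\item Handle the right-extension term with the Jacobi identity, using that $\hloc_{i+k-2}$ and $\hloc_{i+k-1}$ share one site.
\item Handle the left-extension term with the cancellation relation above, shifted by one site. Substituting $[\hat X_{i},\hloc_{i+k-2}]\simeq -[\hat X_{i+1},\hloc_{i}]$ modulo shorter strings turns it into $[\hloc_{i-1},[\hloc_i,\hat X_{i+1}]]$-type terms. Their length-$(k+1)$ parts telescope against the right-extension terms after a second application of Jacobi.
\item The case $k\le N/2$ guarantees that the left and right ends never overlap on the ring. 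This is what makes the bookkeeping local.
\end{itemize}

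\textbf{Step 2: injectivity.} If $\iota_k(\hat X)=0$, then each $[\hat X_i,\hloc_{i+k-2}]$ vanishes as a length-$k$ string. Expand $\hat X_i=\sum_\alpha \hat L_\alpha\otimes\hat R_\alpha$, with $\hat R_\alpha$ traceless on the last site and the $\hat L_\alpha$ linearly independent. Assumption (A) then forces every $\hat R_\alpha=0$, so $\hat X=0$.

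\textbf{Step 3: surjectivity.} This is the main obstacle. Given $Y\in\mathcal B^{(k)}_\le$, I would first show that the condition $\len([Y,H])\le k$ forces each $Y_i$ to lie in the image of $\mathrm{ad}_{\hloc_{i+k-2}}$ restricted to the last two sites. Concretely, the length-$(k+1)$ cancellation equation between right overhang and left overhang says that the rightmost two-site piece of $Y_i$ commutes with $\hloc$ in a suitable quotient sense.

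The approach I would try first is a dimension-counting, Shiraishi-style one:
\begin{itemize}
\item Use (A) to write the two-site space of traceless-on-the-right operators as $\mathrm{ad}_{\hloc}(\cdot)$ of one-site operators, plus a complement.
\item Show that the complement component would generate uncancellable length-$(k+1)$ strings.
\item Read off a preimage $\hat X_i$ from the remaining component.
\item Verify $\hat X\in\mathcal B^{(k-1)}_\le$ by running Step 1 in reverse.
\end{itemize}

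The dimension count is where I expect difficulty, because it needs the left–right overhang equations to decouple. I would again use $k\le N/2$ for this. If that fails, the fallback is Hokkyo's alternative route: build the inverse map by peeling off the rightmost site, and prove the identity $\iota_k\circ\iota_k^{-1}=\mathrm{id}$ directly through the Jacobi computation of Step 1.
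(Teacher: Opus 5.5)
The paper does not prove this theorem; it is quoted from Ref.~\cite{Hokkyo2025}. Your proposal therefore has to stand on its own, and it has a genuine gap in Step~3. Surjectivity carries essentially all of the content, and you leave it as a plan whose key mechanism is not identified.

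Your heuristic that the cancellation ``says that the rightmost two-site piece of $Y_i$ commutes with $\hloc$'' is misdirected in two ways. First, what must be shown is that this piece is a \emph{commutator} with $\hloc$, i.e., that it lies in $\mathrm{Im}\,\phi$ with $\phi(x):=[x\otimes I,\hloc]$. Second, the right overhang $[Y_i,\hloc_{i+k-1}]$ is cancelled by the left overhang of $Y_{i+1}$, and that equation constrains the \emph{left} end of $Y_i$, not its right end. The right end of $Y_i$ is controlled by the condition at leftmost site $i-1$, namely $[Y_{i-1},\hloc_{i+k-2}]+[Y_i,\hloc_{i-1}]=0$.

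Your image-plus-complement idea does work when applied to that equation:
\begin{enumerate}
\item Choose a complement $K$ of $\mathrm{Im}\,\phi$ in the two-site operators with traceless right factor, and split $Y_i=Y_i^{\mathrm{im}}+Y_i^{K}$ on its last two sites.
\item The term $[Y_{i-1},\hloc_{i+k-2}]$ lies entirely in the $\mathrm{Im}\,\phi$ part. Since $k\ge3$, $\mathrm{ad}_{\hloc_{i-1}}$ acts on sites $i-1,i$, disjoint from $i+k-2,i+k-1$, so it preserves the splitting.
\item Hence $[Y_i^{K},\hloc_{i-1}]=0$, and the \emph{left} half of (A) gives $Y_i^K=0$. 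So $Y_i=[\hat X_i,\hloc_{i+k-2}]$, and $\hat X_i$ has traceless left end by injectivity of $\phi$.
\item Substitute back and use $[\hloc_{i-1},\hloc_{i+k-2}]=0$. This gives $[Z_i,\hloc_{i+k-2}]=0$ with $Z_i:=[\hat X_{i-1},\hloc_{i+k-3}]+[\hat X_i,\hloc_{i-1}]$.
\item The \emph{right} half of (A) forces $Z_i=0$, which is exactly $\hat X\in\mathcal B^{(k-1)}_\le$.
\end{enumerate}
No dimension count enters this argument. The geometric input is $k\ge3$, plus $k+1\le N$ to avoid wrap-around; $k\le N/2$ is not what decouples the equations. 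Your fallback of ``peeling off the rightmost site'' presupposes $Y_i\in\mathrm{Im}\,\mathrm{ad}_{\hloc_{i+k-2}}$, which is precisely what needs proof.

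Step~1 also needs repair, though routinely. The length-$(k+1)$ strings with leftmost site $j$ come from the right overhang of $Y_j$ and the left overhang of $Y_{j+1}$, which have the same support. Rewrite $Y_j=[\hloc_j,\hat X_{j+1}]$ using the length-$k$ cancellation. This is an exact identity, not one ``modulo shorter strings'', because both sides consist only of length-$k$ strings. The Jacobi identity with $[\hloc_j,\hloc_{j+k-1}]=0$ then makes the sum vanish. The overlap of $\hloc_{i+k-2}$ with $\hloc_{i+k-1}$ plays no role. Substituting into the left-extension term, as you propose, produces the overlapping pair $\hloc_{i-1},\hloc_i$, which does not telescope against anything.

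Strict $k$-locality of $\iota_k(\hat X)$ follows from the tracelessness of the factors of $\hloc$, not from (A). Assumption (A) is what you need for Step~2, and Step~2 is correct as written.
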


These theorems restrict the possible form of the strictly $3$-local part of a candidate $Q$ as follows.
Suppose $Q$ is a $3$-local quantity with $\len([Q,H])\le2$, and let $\hat X^{(3)}\in\mathcal B^{(3)}$ be its strictly $3$-local part.
Since the length-$4$ basis strings in $[Q,H]$ receive contributions only from $\hat X^{(3)}$, we have $\len([\hat X^{(3)},H])\le3$, i.e., $\hat X^{(3)}\in\mathcal B^{(3)}_\le$.
On the other hand, Proposition~\ref{prop:doubling} states precisely that $\mathcal B^{(2)}_\le$ is one dimensional and spanned by $\sum_i\hloc_i$ [an element $\hat X\in\mathcal B^{(2)}_\le$ satisfies the hypothesis of the proposition; conversely, using $\sum_i\hloc_i=H-h\sum_iF_{0,i}$, one finds $\len([\sum_i\hloc_i,H])=\len([h\sum_iF_{0,i},H])\le2$].
Applying Theorem~\ref{thm:hokkyo2} with $k=3$, we conclude that $\mathcal B^{(3)}_\le$ is spanned by
\begin{equation}
\begin{aligned}
\iota_3\Bigl(\sum_i\hloc_i\Bigr)&=\sum_i\bigl[\hloc_i,\hloc_{i+1}\bigr]\\
&=\sum_i\sum_{C_{(1)},C_{(2)}}\!\!c_{C_{(1)}}c_{C_{(2)}}\vb B_i,
\end{aligned}
\end{equation}
where $\vb B_i$ denotes the length-$3$ string
\begin{equation}
\vb B_i=
\begin{array}{@{}ccc@{}}
C_{(1)} & C_{(1)}^\dagger & \\
        & C_{(2)}         & C_{(2)}^\dagger
\end{array}
\;=\;C_{(1),i}\,[C_{(1)}^\dagger, C_{(2)}]_{i+1}\,\bigl(C_{(2)}^\dagger\bigr)_{i+2}.
\label{eq:3local-doubling}
\end{equation}
In other words, the strictly $3$-local part of $Q$ consists only of the \emph{doubling strings}~\eqref{eq:3local-doubling}, with coefficients
\begin{equation}
q\!\left(
\begin{array}{ccc}
C_{(1)} & C_{(1)}^\dagger & \\
        & C_{(2)}         & C_{(2)}^\dagger
\end{array}
\right)
=q_{k=3}\,c_{C_{(1)}}\,c_{C_{(2)}},
\label{eq:k3-coef}
\end{equation}
where $c_{C_{(j)}}$ is the coefficient of $C_{(j)}C_{(j)}^\dagger$ in
$\hloc$ and $q_{k=3}$ is a constant independent of the site $i$ and of $C_{(1)},C_{(2)}$.
If we can show that $q_{k=3}$ must vanish, then $Q$ has no strictly $3$-local part, contradicting $\len(Q)=3$; this is the goal of the next two sections.

\section{Necessary conditions for $3$-local quantities with $\len([Q,H])\le2$}
\label{sec:Ceq}
By Theorem~\ref{thm:hokkyo}, it remains to show that no $3$-local quantity $Q$ satisfies $\len([Q,H])\le2$.
Let $Q$ be such a quantity and write $Q=\hat X^{(3)}+\hat X^{(2)}+\hat X^{(1)}+q_0I$ with $\hat X^{(l)}\in\mathcal B^{(l)}$.
As shown in Sec.~\ref{sec:reduction}, $\hat X^{(3)}$ is fixed by Eqs.~\eqref{eq:3local-doubling} and \eqref{eq:k3-coef} up to the overall constant $q_{k=3}$.
The conditions that the length-$3$ basis strings in $[Q,H]$ vanish, $r_{\vb B^3}=0$, are linear in the coefficients of $\hat X^{(2)}$ and in $q_{k=3}$.
Following Ref.~\cite{Hokkyo2024}, one can eliminate the coefficients of $\hat X^{(2)}$ from these conditions and obtain ten equations of the form $q_{k=3}\times P_n(\{e_m,f_m,h\})=0$ $(n=1,\dots,10)$, where the $P_n$ are polynomials in the coupling constants.
Consequently, if a $3$-local quantity $Q$ with $\len([Q,H])\le2$ exists (so that $q_{k=3}\neq0$), the couplings must satisfy the ten conditions $P_n=0$, referred to as $(C\text{-}1)$--$(C\text{-}10)$ below.
These conditions were derived in Appendix~C of Ref.~\cite{Hokkyo2024} for the case in which all $e_m$ and $f_m$ are nonzero; the derivation, which we reproduce in Appendix~\ref{app:Ceq} for completeness, remains valid verbatim when some of the $f_m$ vanish, so the conditions apply to all patterns (i)--(vi) [see also the remark at the beginning of Appendix~\ref{app:Cfull} concerning misprints in four of the printed equations of Ref.~\cite{Hokkyo2024}].
The full list is collected in Appendix~\ref{app:Cfull}. In the analysis of
Sec.~\ref{sec:3local}, only the six equations
$(C\text{-}1),(C\text{-}2),(C\text{-}3),(C\text{-}4),(C\text{-}8),(C\text{-}10)$
are needed, and we display them here:

\begin{widetext}
\begin{align}
\tag{C-1}\label{eq:C1}
&\tfrac{3}{2}f_0|e_{+1}-f_{+1}|^2+e_0\Bigl(-\tfrac{1}{6}|e_{+1}|^2-\tfrac{1}{2}e_{+1}f_{-1}-\tfrac{1}{2}f_{+1}e_{-1}-\tfrac{1}{6}|f_{+1}|^2+\tfrac{1}{3}|f_{+2}|^2\Bigr)=0,\\
\tag{C-2}\label{eq:C2}
&(e_{+1}-f_{+1})\bigl[f_{+2}(e_0+3f_0+6h)-(e_{+1}^2+6e_{+1}f_{+1}+f_{+1}^2)\bigr]-f_{+2}^2(e_{-1}-f_{-1})=0,\\
\tag{C-3}\label{eq:C3}
&3f_0\!\bigl[f_{-2}(e_{+1}-f_{+1})^2+f_{+2}(e_{-1}-f_{-1})^2\bigr]+|f_{+2}|^2\!\Bigl(-\tfrac{4}{3}e_0^2+\tfrac{1}{3}|e_{+1}|^2+e_{+1}f_{-1}+f_{+1}e_{-1}+\tfrac{1}{3}|f_{+1}|^2-\tfrac{1}{3}|f_{+2}|^2\Bigr)=0,\\
\tag{C-4}\label{eq:C4}
&f_{+2}^2\bigl(|e_{+1}|^2+3e_{+1}f_{-1}-3f_{+1}e_{-1}-|f_{+1}|^2\bigr)+f_{+2}(e_{+1}\!+\!f_{+1})(e_{+1}\!-\!f_{+1})(e_0-21f_0-6h)+(e_{+1}\!+\!f_{+1})(e_{+1}\!-\!f_{+1})^3=0,\\
\tag{C-8}\label{eq:C8}
&(e_{+1}-f_{+1})\Bigl[\tfrac{3}{2}f_0\!\bigl(f_{-2}(e_{+1}-f_{+1})^2-f_{+2}(e_{-1}-f_{-1})^2\bigr)\nonumber\\
&\quad+|f_{+2}|^2\!\Bigl(\tfrac{1}{6}|f_{+2}|^2-\tfrac{1}{6}|e_{+1}|^2-\tfrac{1}{6}|f_{+1}|^2+\tfrac{3}{2}e_{+1}f_{-1}+\tfrac{3}{2}e_{-1}f_{+1}+\tfrac{e_0^2}{3}-18f_0^2-3e_0f_0+18f_0h\Bigr)\Bigr]\nonumber\\
&\quad+(e_{-1}-f_{-1})|f_{+2}|^2\bigl(3f_0f_{+2}-\tfrac{4}{3}e_{+1}f_{+1}\bigr)=0,\nonumber\\
\tag{C-10}\label{eq:C10}
&(e_{+1}+f_{+1})\Bigl[\tfrac{3}{2}f_0\!\bigl(3f_{-2}(e_{+1}-f_{+1})^2-f_{+2}(e_{-1}-f_{-1})^2\bigr)+|f_{+2}|^2\!\Bigl(\tfrac{1}{6}|f_{+2}|^2+\tfrac{e_0^2}{3}+6e_0f_0-27f_0^2\Bigr)\Bigr]\nonumber\\
&\quad+\tfrac{1}{6}(e_{+1}-f_{+1})|f_{+2}|^2\bigl(|f_{+1}|^2-|e_{+1}|^2+9e_{+1}f_{-1}-9e_{-1}f_{+1}\bigr)=0.\nonumber
\end{align}
\end{widetext}

Here, $|x|^2$ is understood as $x_{+m}x_{-m}$ for the Hermiticity-related pairs, e.g., $|f_{+2}|^2=f_{+2}f_{-2}$.
The four omitted equations,
$(C\text{-}5),(C\text{-}6),(C\text{-}7)$, and $(C\text{-}9)$, are not used below
but are listed in Appendix~\ref{app:Cfull} for completeness.

\section{Absence of $3$-local quantities with $\len([Q,H])\le2$}
\label{sec:3local}

We now show that the necessary conditions
$(C\text{-}1)$--$(C\text{-}10)$ cannot be satisfied in any of the six patterns (i)--(vi), as long as all $e_m$ are nonzero.  It is convenient to organize the six patterns into the following three cases according to which of $f_{\pm2}$, $f_{\pm1}$, and $f_0$ vanish:
\begin{itemize}[itemsep=2pt]
\item Case (a): $f_{\pm2}=0$, covering patterns (i), (iv), and (vi);
\item Case (b): $f_{\pm1}=0$ and $f_{\pm2}\neq0$, covering patterns (ii) and (v);
\item Case (c): $f_0=0$ and $f_{\pm1},f_{\pm2}\neq0$, covering pattern (iii).
\end{itemize}
Throughout this section, we use the shorthand
\begin{equation}
A := e_{+1}-f_{+1},\qquad B := e_{+1}+f_{+1},
\label{eq:AB}
\end{equation}
and
\begin{equation}
w := AB^\ast,\quad w_r := \mathrm{Re}\,w,\quad w_i := \mathrm{Im}\,w,
\label{eq:wdef}
\end{equation}
so that $|w|^2=|A|^2|B|^2$ [$w$ is used only in case (c)].  From $e_{+1}=(A+B)/2$ and
$f_{+1}=(B-A)/2$, together with $e_{-1}=e_{+1}^\ast$ and $f_{-1}=f_{+1}^\ast$, one verifies the identities
\begin{subequations}\label{eq:AB-ids}
\begin{align}
|e_{+1}|^2+|f_{+1}|^2 &= \tfrac{|A|^2+|B|^2}{2},\\
|e_{+1}|^2-|f_{+1}|^2 &= w_r,\\
e_{+1}f_{-1}+f_{+1}e_{-1} &= \tfrac{|B|^2-|A|^2}{2},\\
e_{+1}f_{-1}-f_{+1}e_{-1} &= i\,w_i,\\
e_{+1}f_{+1} &= \tfrac{B^2-A^2}{4},\\
e_{+1}^2+6e_{+1}f_{+1}+f_{+1}^2 &= 2B^2-A^2.
\end{align}
\end{subequations}

\subsection{Case (a): $f_{\pm2}=0$}
\label{ssec:caseA}

Setting $f_{\pm2}=0$ makes
$(C\text{-}3)$ and $(C\text{-}5)$--$(C\text{-}10)$ identically zero. The remaining
constraints $(C\text{-}1)$, $(C\text{-}2)$, and $(C\text{-}4)$ reduce, after multiplying $(C\text{-}1)$ by $6$ and $(C\text{-}2)$ by $-1$ and using the notation~\eqref{eq:AB}, to
\begin{align}
9f_0|A|^2 - e_0\bigl[|e_{+1}|^2+3(e_{+1}f_{-1}+f_{+1}e_{-1})+|f_{+1}|^2\bigr]&=0,\label{eq:a-C1}\\
A\,(e_{+1}^2+6e_{+1}f_{+1}+f_{+1}^2)&=0,\label{eq:a-C2}\\
B\,A^3&=0.\label{eq:a-C4}
\end{align}
If $A\ne0$, then~\eqref{eq:a-C4} gives $B=0$, i.e., $f_{+1}=-e_{+1}$.
In this case,
$e_{+1}^2+6e_{+1}f_{+1}+f_{+1}^2=-4e_{+1}^2$, so~\eqref{eq:a-C2} forces
$e_{+1}=0$, contradicting the assumption $e_{\pm1}\neq0$.  Therefore $A=0$, i.e.,
$e_{+1}=f_{+1}$; substituting $|f_{+1}|^2=|e_{+1}|^2$ and
$e_{+1}f_{-1}+f_{+1}e_{-1}=2|e_{+1}|^2$ into~\eqref{eq:a-C1} gives
$-8e_0|e_{+1}|^2=0$, so $e_0=0$ or $e_{+1}=0$, either of which contradicts the assumption $e_m\neq0$. Hence, the conditions $(C\text{-}1)$--$(C\text{-}10)$ cannot be satisfied in case (a).

\subsection{Case (b): $f_{\pm1}=0$ with $f_{\pm2}\neq0$}

Setting $f_{\pm1}=0$ gives $A=B=e_{+1}$ and
$e_{+1}f_{-1}+f_{+1}e_{-1}=0$. Multiplying $(C\text{-}1)$ by $6$, and dividing $(C\text{-}4)$ by $e_{+1}\neq0$, the constraints $(C\text{-}1)$, $(C\text{-}2)$, and $(C\text{-}4)$ become
\begin{align}
|e_{+1}|^2(9f_0-e_0)+2e_0|f_{+2}|^2 &=0,\label{eq:b-C1}\\
e_{+1}f_{+2}(e_0+3f_0+6h)-e_{+1}^3-f_{+2}^2 e_{-1} &=0,\label{eq:b-C2}\\
f_{+2}^2 e_{-1}+e_{+1}f_{+2}(e_0-21f_0-6h)+e_{+1}^3 &=0.\label{eq:b-C4}
\end{align}
Adding~\eqref{eq:b-C2} and~\eqref{eq:b-C4}, the terms $e_{+1}^3$ and
$f_{+2}^2 e_{-1}$ cancel, leaving $2e_{+1}f_{+2}(e_0-9f_0)=0$; since $e_{+1},f_{+2}\neq0$, we get $e_0=9f_0$.
Substituting this back into~\eqref{eq:b-C1} eliminates the $|e_{+1}|^2$ term,
yielding $2 e_0|f_{+2}|^2=0$ and hence $e_0=0$, contradicting the assumption $e_0\neq0$. Note that this argument applies to both pattern (ii) ($f_0\neq0$) and pattern (v) ($f_0=0$), since no assumption on $f_0$ was used. Hence, the conditions $(C\text{-}1)$--$(C\text{-}10)$ cannot be satisfied in case (b).

\subsection{Case (c): $f_0=0$ with $f_{\pm1},f_{\pm2}\ne0$}
\label{ssec:caseC}
Here we use $(C\text{-}1),(C\text{-}3),(C\text{-}8)$, and $(C\text{-}10)$ to derive a contradiction.

Substituting $f_0=0$ into $(C\text{-}1)$, dividing by $e_0\ne0$, and multiplying by $-6$, we obtain
\begin{equation}
2|f_{+2}|^2 = |e_{+1}|^2 + 3(e_{+1}f_{-1}+f_{+1}e_{-1}) + |f_{+1}|^2;
\label{eq:c-C1}
\end{equation}
similarly, substituting $f_0=0$ into $(C\text{-}3)$, dividing by $|f_{+2}|^2\ne 0$, and multiplying by $3$ gives
\begin{equation}
|e_{+1}|^2+3(e_{+1}f_{-1}+f_{+1}e_{-1})+|f_{+1}|^2-|f_{+2}|^2 = 4e_0^2.
\label{eq:c-C3}
\end{equation}
Substituting~\eqref{eq:c-C1} into~\eqref{eq:c-C3} gives $|f_{+2}|^2=4e_0^2$; combining this with~\eqref{eq:c-C1} and the identities~\eqref{eq:AB-ids} [which give $|e_{+1}|^2+3(e_{+1}f_{-1}+f_{+1}e_{-1})+|f_{+1}|^2=2|B|^2-|A|^2$], we obtain
\begin{equation}
|f_{+2}|^2 = 4 e_0^2,\qquad 2|B|^2-|A|^2 = 8 e_0^2.
\label{eq:case-c-basic}
\end{equation}

Setting $f_0=0$ in $(C\text{-}10)$, dividing by $|f_{+2}|^2/6$, and using~\eqref{eq:case-c-basic} and
the identities~\eqref{eq:AB-ids} yields
\begin{equation}
6e_0^2\,B + A(-w_r+9i\,w_i)=0.
\label{eq:c10-wform}
\end{equation}
If $A=0$, then~\eqref{eq:c10-wform} forces $B=0$ and hence
$e_{+1}=(A+B)/2=0$, contradicting our assumption.  Thus $A\ne0$.  Multiplying
\eqref{eq:c10-wform} by $A^\ast$ and using $A^\ast B=w^\ast=w_r-iw_i$,
the real and imaginary parts give (up to positive numerical factors)
\begin{equation}
w_r(6e_0^2-|A|^2)=0,\qquad w_i(3|A|^2-2e_0^2)=0.
\label{eq:c10-RI}
\end{equation}

Next, setting $f_0=0$ in $(C\text{-}8)$, dividing by $|f_{+2}|^2=4e_0^2\,(\neq0)$, and using the identities~\eqref{eq:AB-ids} to simplify
$-(|e_{+1}|^2+|f_{+1}|^2)+9(e_{+1}f_{-1}+f_{+1}e_{-1})=-5|A|^2+4|B|^2$ and
$8A^\ast e_{+1}f_{+1}=2A^\ast B^2-2|A|^2 A$, then using
$4|B|^2=16e_0^2+2|A|^2$ from~\eqref{eq:case-c-basic}, one obtains
\begin{equation}
A(22e_0^2-|A|^2) = 2A^\ast B^2.
\label{eq:C8-key}
\end{equation}
Multiplying by $A^\ast$ gives
\begin{equation}
|A|^2(22e_0^2-|A|^2) = 2(w^\ast)^2,
\label{eq:C8-w}
\end{equation}
whose left-hand side is real.  Hence $w^2\in\mathbb R$, equivalently
$w_rw_i=0$.

Combining $w_rw_i=0$ with~\eqref{eq:c10-RI}, there are three
sub-cases: (c-1) $w_r\ne0,\,w_i=0$; (c-2) $w_r=0,\,w_i\ne0$; and (c-3) $w_r=w_i=0$.

\emph{Sub-case (c-1): $w_r\ne0$, $w_i=0$.}
The first equation of~\eqref{eq:c10-RI} gives
$|A|^2=6e_0^2$. On the other hand, substituting $(w^\ast)^2=w_r^2=|A|^2|B|^2$ into
\eqref{eq:C8-w}, dividing by $|A|^2\neq0$, and using $2|B|^2=8e_0^2+|A|^2$ yields
$|A|^2=7e_0^2$. Comparing the two results gives $e_0=0$, a contradiction.

\emph{Sub-case (c-2): $w_r=0$, $w_i\ne0$.}
Now
$(w^\ast)^2=-w_i^2=-|A|^2|B|^2$, and~\eqref{eq:C8-w} after dividing by
$|A|^2$ becomes $22e_0^2-|A|^2=-2|B|^2=-(8e_0^2+|A|^2)$, i.e., $30 e_0^2=0$,
so $e_0=0$, a contradiction.

\emph{Sub-case (c-3): $w_r=w_i=0$.}
Then $w=AB^\ast=0$ and $A\ne 0$
imply $B=0$, so the second equation of~\eqref{eq:case-c-basic} reads $-|A|^2=8e_0^2$, which
is impossible because the left-hand side is negative while the
right-hand side is positive.

In all three sub-cases we reach a contradiction. Hence, the conditions $(C\text{-}1)$--$(C\text{-}10)$ cannot be satisfied in case (c) either.

In summary, in all of the patterns (i)--(vi), the necessary conditions $(C\text{-}1)$--$(C\text{-}10)$ are incompatible with $e_m\neq0$. Therefore $q_{k=3}=0$, i.e., there is no $3$-local quantity $Q$ with $\len([Q,H])\le2$, and by Theorem~\ref{thm:hokkyo} no $k$-local conserved quantity exists for any $3\le k\le N/2$. This proves the first statement of Theorem~\ref{thm:main}.
\section{$1$-local conserved quantities}
\label{sec:k1}

To complete the proof of Theorem~\ref{thm:main}, we finally determine the $1$-local conserved
quantities, following the analysis
of Ref.~\cite{Hokkyo2024}; for the frustration-free chain~\eqref{eq:Horig} the only one turns out to be $S^z$ (up to normalization and addition of the identity). A general $1$-local quantity reads
\begin{equation}
Q^{(1)}=\sum_i\sum_{m}\!\bigl(q_{E_m,i}E_{m,i}+q_{F_m,i}F_{m,i}\bigr),
\end{equation}
with site-dependent coefficients $q_{E_m,i}$ and $q_{F_m,i}$.
The condition $[Q^{(1)},H]=0$ requires the coefficient of every length-$2$ basis string in $[Q^{(1)},H]$ to vanish, which constrains the coefficients as follows.

The placements generating the length-$2$ basis string $E_{+1}E_{-1}$ in
$[Q^{(1)},H]$ are
\begin{equation}
\begin{array}{ccc}
& E_0 & \\
& E_{+1} & E_{-1}\\\hline
+ & E_{+1} & E_{-1}
\end{array},\quad
\begin{array}{ccc}
& & E_0 \\
& E_{+1} & E_{-1}\\\hline
- & E_{+1} & E_{-1}
\end{array}.
\end{equation}
Since $e_{+1}\ne 0$, the cancellation between the two placements forces
$q_{E_0,i}$ to be independent of $i$:
\begin{equation}
q_{E_0,i}=q_{E_0}.
\end{equation}
This is consistent with the conservation of $S^z=\sum_iE_{0,i}$.

Next, consider the coefficient $q_{E_{\pm1},i}$. The placements generating $F_0 F_{\pm 1}$ are
\begin{equation}
\begin{array}{ccc}
& E_{\pm 1} & \\
& F_{\mp 1} & F_{\pm 1}\\\hline
\pm 1 & F_0 & F_{\pm 1}
\end{array},\;
\begin{array}{ccc}
& & E_{\pm 1}\\
& F_0 & F_0 \\\hline
\mp 3 & F_0 & F_{\pm 1}
\end{array},
\end{equation}
and those generating $F_{\pm 1} F_0$ are
\begin{equation}
\begin{array}{ccc}
& E_{\pm 1} & \\
& F_0 & F_0\\\hline
\mp 3 & F_{\pm 1} & F_0
\end{array},\;
\begin{array}{ccc}
& & E_{\pm 1}\\
& F_{\pm 1} & F_{\mp 1}\\\hline
\pm 1 & F_{\pm 1} & F_0
\end{array}.
\end{equation}
The vanishing of these two coefficients gives the recursions
$f_{\mp1}q_{E_{\pm1},i}=3f_0\,q_{E_{\pm1},i+1}$ and
$3f_0\,q_{E_{\pm1},i}=f_{\pm1}\,q_{E_{\pm1},i+1}$, whose consistency requires
\begin{equation}
(9f_0^2-|f_{+1}|^2)\,q_{E_{\pm 1},i}=0.
\label{eq:k1-E1cond}
\end{equation}
Hence $q_{E_{\pm1},i}=0$ unless $|f_{+1}|^2=9f_0^2$.

By the same logic, for the coefficient $q_{F_0,i}$, the relevant generating placements for $F_{\pm 1}E_{\mp 1}$ are
\begin{equation}
\begin{array}{ccc}
& F_0 & \\
& E_{\pm 1} & E_{\mp 1}\\\hline
\pm 3 & F_{\pm 1} & E_{\mp 1}
\end{array},\;
\begin{array}{ccc}
& & F_0\\
& F_{\pm 1} & F_{\mp 1}\\\hline
\mp 3 & F_{\pm 1} & E_{\mp 1}
\end{array},
\end{equation}
and for $E_{\pm 1}F_{\mp 1}$ are
\begin{equation}
\begin{array}{ccc}
& F_0 & \\
& F_{\pm 1} & F_{\mp 1}\\\hline
\pm 3 & E_{\pm 1} & F_{\mp 1}
\end{array},\;
\begin{array}{ccc}
& & F_0\\
& E_{\pm 1} & E_{\mp 1}\\\hline
\mp 3 & E_{\pm 1} & F_{\mp 1}
\end{array},
\end{equation}
which yield the requirement
\begin{equation}
(e_{\pm 1}^2-f_{\pm 1}^2)\,q_{F_0,i}=0,
\end{equation}
i.e., $q_{F_0,i}=0$ unless $e_{\pm1}^2=f_{\pm1}^2$. Analogously, for the coefficients $q_{F_{\pm1},i}$, the placements generating $F_{\pm 1}E_0$
and $E_0 F_{\pm 1}$,
\begin{equation}
\begin{array}{ccc}
& F_{\pm 1} & \\
& E_0 & E_0\\\hline
\mp 1 & F_{\pm 1} & E_0
\end{array},\;
\begin{array}{ccc}
& & F_{\pm 1}\\
& F_{\pm 1} & F_{\mp 1}\\\hline
\pm 1 & F_{\pm 1} & E_0
\end{array},
\end{equation}
\begin{equation}
\begin{array}{ccc}
& F_{\pm 1} & \\
& F_{\mp 1} & F_{\pm 1}\\\hline
\pm 1 & E_0 & F_{\pm 1}
\end{array},\;
\begin{array}{ccc}
& & F_{\pm 1}\\
& E_0 & E_0\\\hline
\mp 1 & E_0 & F_{\pm 1}
\end{array},
\end{equation}
give the necessary condition
\begin{equation}
(e_0^2-|f_{+1}|^2)\,q_{F_{\pm1},i}=0.
\end{equation}
Finally, for the coefficients $q_{F_{\pm2},i}$, the generating placements (note the coefficients $\mp2$ arising from $[F_{\pm2},E_0]=\mp2F_{\pm2}$)
\begin{equation}
\begin{array}{ccc}
& F_{\pm 2} & \\
& E_0 & E_0\\\hline
\mp 2 & F_{\pm 2} & E_0
\end{array},\;
\begin{array}{ccc}
& & F_{\pm 2}\\
& F_{\pm 2} & F_{\mp 2}\\\hline
\pm 1 & F_{\pm 2} & E_0
\end{array},
\end{equation}
\begin{equation}
\begin{array}{ccc}
& F_{\pm 2} & \\
& F_{\mp 2} & F_{\pm 2}\\\hline
\pm 1 & E_0 & F_{\pm 2}
\end{array},\;
\begin{array}{ccc}
& & F_{\pm 2}\\
& E_0 & E_0\\\hline
\mp 2 & E_0 & F_{\pm 2}
\end{array},
\end{equation}
yield
\begin{equation}
(4e_0^2-|f_{+2}|^2)\,q_{F_{\pm2},i}=0.
\end{equation}

Collecting the above, the coefficient $q_{X,i}$ can be nonzero only if the corresponding condition below is satisfied:
\begin{align*}
&9f_0^2=|f_{+1}|^2\;\text{(for $E_{\pm 1}$)},\;
e_{\pm 1}^2=f_{\pm 1}^2\;\text{(for $F_0$)},\\
&e_0^2=|f_{+1}|^2\;\text{(for $F_{\pm 1}$)},\;
4e_0^2=|f_{+2}|^2\;\text{(for $F_{\pm 2}$)}.
\end{align*}
For the frustration-free chain~\eqref{eq:Horig}, we have
$f_{\pm 1}=f_{\pm 2}=0$, and the couplings~\eqref{eq:coup} satisfy $f_0\neq0$ ($\Delta\neq1$), $e_{\pm1}=-1\neq0$, and $e_0=-\Delta\neq0$.
The four conditions then reduce to $9f_0^2=0$, $e_{\pm1}^2=0$, $e_0^2=0$, and $4e_0^2=0$, respectively, none of which holds.
Hence $q_{E_{\pm1},i}=q_{F_0,i}=q_{F_{\pm1},i}=q_{F_{\pm2},i}=0$, and the only
surviving $1$-local conserved quantity is $q_{E_0}\sum_i E_{0,i}=q_{E_0}S^z$.
Combined with the results of Secs.~\ref{sec:k2}--\ref{sec:3local}, this completes the proof of Theorem~\ref{thm:main}: every conserved quantity of the chain~\eqref{eq:Horig} with $\len(Q)\le N/2$ is a linear combination of the identity, $S^z$, and $H$.

\section{Conclusion}
\label{sec:conclusion}
In this work, we analyzed the local conserved quantities of the $U(1)$-symmetric spin-$1$ chain~\eqref{eq:Hgen} for the six patterns (i)--(vi) of vanishing couplings $f_m$ listed in Sec.~\ref{sec:notation}, which are not covered by the previous study~\cite{Hokkyo2024}. We first showed that, in all six patterns, every $2$-local conserved quantity is a linear combination of the Hamiltonian, $1$-local conserved quantities, and the identity (Sec.~\ref{sec:k2}). This property allowed us to apply the reduction theorems of Ref.~\cite{Hokkyo2025}, so that the existence of $k$-local conserved quantities for all $3\le k\le N/2$ could be examined through $3$-local quantities whose commutator with $H$ is at most $2$-local (Sec.~\ref{sec:reduction}). We then showed that the necessary conditions $(C\text{-}1)$--$(C\text{-}10)$ for the existence of such a quantity cannot be satisfied in any of the six patterns (Secs.~\ref{sec:Ceq} and \ref{sec:3local}), and we determined the $1$-local conserved quantities (Sec.~\ref{sec:k1}). As a corollary, the frustration-free spin-$1$ chain~\eqref{eq:Horig}, which exhibits spontaneous $U(1)$ symmetry breaking at $T=0$ in one spatial dimension~\cite{arXiv:2310.16881}, admits no local conserved quantity other than linear combinations of the identity, $S^z$, and $H$, and in particular no local order parameter commuting with the Hamiltonian.

Physically, this resolves a question left open for the
model~\eqref{eq:Horig}.  Its $T=0$ ordering is an exception to
Coleman's theorem~\cite{Coleman1973}, yet---in contrast to the Heisenberg ferromagnet, the
well-known exception whose magnetization is itself a conserved
charge---it possesses no local order parameter that commutes with the
Hamiltonian.  Our proof makes this absence rigorous: the only nontrivial local
conserved quantity is $S^z$ itself, so the symmetry breaking cannot be
attributed to a conserved order parameter. This supports the view that the symmetry breaking of this model is enabled by the anomalously soft excitations that frustration-free systems generically host~\cite{arXiv:2406.06414,BravyiGosset2015,Gossetozgunov,Anshu,Lemm_2022,lemmCritical2025}.

It is instructive to examine, from this perspective, the fate of a commuting order parameter beyond the range $k\le N/2$.
Let $P$ denote the projector onto the $(2N+1)$-fold degenerate ground space of the frustration-free chain~\eqref{eq:Horig}.
Frustration-freeness implies $HP=PH=0$ [with the ground-state energy normalized to zero, as in Eq.~\eqref{eq:Horig}], so the Hermitian operator $PS^xP$, with $S^x:=\sum_iS^x_i$, commutes with $H$.
This operator is nonzero---its ground-space matrix elements are precisely the order parameter computed in Ref.~\cite{arXiv:2310.16881}---and it changes the eigenvalue of $S^z$ by $\pm1$, so it is not a linear combination (nor any polynomial) of the identity, $S^z$, and $H$.
Hence a symmetry-breaking order parameter commuting with the Hamiltonian does exist, but Theorem~\ref{thm:main} forces it to be nonlocal: $\len(PS^xP)>N/2$.
The $U(1)$ symmetry breaking of the chain~\eqref{eq:Horig} is thus accompanied only by such highly nonlocal commuting order parameters, in sharp contrast to the Heisenberg ferromagnet, whose commuting order parameter is a sum of single-site operators.

We also proved the absence of $k$-local conserved quantities with $3\le k\le N/2$ in the periodic Motzkin chain~\cite{Motzkin2504} (Appendix~\ref{app:motzkin}), which is also frustration-free, with an exactly $(2N+1)$-fold degenerate ground space, as we prove in Appendix~\ref{app:motzkin-gs}, thereby settling a conjecture of Ref.~\cite{Motzkin2504}. The Motzkin chain has likewise been reported to exhibit $U(1)$ symmetry breaking at $T=0$ in one spatial dimension~\cite{Menon2024}. Note, however, that this report concerns not the periodic version analyzed here but the original Motzkin chain, whose special boundary condition selects the equal-weight superposition of the Motzkin paths as the unique ground state~\cite{Bravyi2012Motzkin}. In that setting, the symmetry breaking is diagnosed through the long-range order $\langle S^x_iS^x_j\rangle\to4/9$ of transverse spin correlations in the unique ground state and the response to a symmetry-breaking field~\cite{Menon2024}, rather than through an exactly degenerate ground-state manifold as in the chain~\eqref{eq:Horig}. Thus, our result does not directly establish the nonexistence of a local order parameter commuting with the Hamiltonian of the original Motzkin chain. Nevertheless, it shows the broader relevance of the mechanism identified here: the periodic Motzkin chain provides a second frustration-free example, independent of the spin-$1$ chain~\eqref{eq:Horig}, that admits no nontrivial local conserved quantity in the range $3\le k\le N/2$---precisely the kind of quantity that could otherwise serve as an order parameter commuting with the Hamiltonian. Because the original and the periodic Motzkin chains differ essentially only in their boundary conditions, it is natural to expect that the $T=0$ symmetry breaking reported for the original chain is likewise not accompanied by a commuting local order parameter, but rather originates from the same frustration-free route around Coleman's theorem.

A rigorous extension of the proof to the original (non-periodic) Motzkin chain, the analysis of the remaining pattern $f_0=f_{\pm1}=f_{\pm2}=0$, and a general characterization of continuous symmetry breaking in frustration-free systems without a commuting order parameter remain important directions for future work.

\begin{acknowledgments}

We thank Rintaro Masaoka, Akihiro Hokkyo, Fuga Ishii, and Mizuki Sanatani for useful discussions.

\end{acknowledgments}

\appendix

\section{Derivation of $(C\text{-}1)$--$(C\text{-}10)$}
\label{app:Ceq}

In this appendix, we reproduce the derivation of the ten constraints listed in Appendix~\ref{app:Cfull}, following Appendix~C of Ref.~\cite{Hokkyo2024}, and confirm that it remains valid when some of the $f_m$ vanish.
As explained in Secs.~\ref{sec:reduction} and \ref{sec:Ceq}, we consider a $3$-local quantity $Q=\hat X^{(3)}+\hat X^{(2)}+\hat X^{(1)}+q_0I$ with $\len([Q,H])\le2$, whose strictly $3$-local part $\hat X^{(3)}$ is fixed by Eqs.~\eqref{eq:3local-doubling} and \eqref{eq:k3-coef}.
Without loss of generality, we may take $Q$ to be anti-Hermitian with $q_{k=3}$ real and normalized to $q_{k=3}=1$: indeed, since $\hat X^{(3)\dagger}=-\hat X^{(3)}$ for real $q_{k=3}$ [each $\sum_i[\hloc_i,\hloc_{i+1}]$ is anti-Hermitian], one of the combinations $Q-Q^\dagger$ and $i(Q+Q^\dagger)$, each of which again satisfies $\len([\,\cdot\,,H])\le2$, is anti-Hermitian and has a nonvanishing strictly $3$-local part whenever $Q$ does.

The constraints follow from the vanishing of the coefficients $r_{\vb B}$ of the length-$3$ basis strings $\vb B=O_{a}O_{b}O_{c}$ in $[Q,H]$, where $O_a\in\{E_a,F_a\}$ denotes a basis element of weight $a$.
Since $H$ has weight zero and $\hat X^{(3)}$, being built from $\hloc$, consists of weight-$0$ strings, the nontrivial conditions arise from the strings with $a+b+c=0$; the conditions from strings of nonzero total weight involve only the weight-$w\neq0$ part of $\hat X^{(2)}$ and force those coefficients to vanish~\cite{Hokkyo2024}.
The surviving coefficients of $\hat X^{(2)}$ that enter the weight-$0$ conditions are those of the non-doubling weight-$0$ strings. By analyzing the length-$3$ conditions arising from additional weight-$0$ strings in the same manner, these coefficients can be shown to be site independent and pairwise equal,
\begin{equation}
\begin{aligned}
&q_{E_{+1},F_{-1}}=q_{F_{+1},E_{-1}},\quad
q_{E_0,F_0}=q_{F_0,E_0},\\
&q_{E_{-1},F_{+1}}=q_{F_{-1},E_{+1}}.
\end{aligned}
\label{eq:q-pairings}
\end{equation}
We stress that the pairing~\eqref{eq:q-pairings} remains valid in all the patterns (i)--(vi), i.e., even when some of the $f_m$ vanish: it follows from the vanishing of the coefficients of only four weight-$0$ strings, $E_{+1}F_0E_{-1}$, $E_{-1}F_0E_{+1}$, $E_{+1}F_{-1}E_0$, and $E_0F_{+1}E_{-1}$, using only $e_{\pm1}\neq0$ and no condition on the $f_m$ [below we use the site independence and suppress the site indices; the same strings with leftmost site $i$ relate the coefficients at sites $i$ and $i+1$].
Indeed, the conditions from the first two strings yield
$e_{+1}\bigl(q_{E_{+1},F_{-1}}-q_{F_{+1},E_{-1}}\bigr)=0$ and
$e_{-1}\bigl(q_{E_{-1},F_{+1}}-q_{F_{-1},E_{+1}}\bigr)=0$, respectively, while for the latter two strings, which are mapped to each other by Hermitian conjugation combined with spatial inversion, the contributions of the strictly $3$-local part $\hat X^{(3)}$ cancel exactly in the sum of the two conditions, leaving
$3e_{+1}\bigl(q_{E_0,F_0}-q_{F_0,E_0}\bigr)+e_0\bigl(q_{E_{+1},F_{-1}}-q_{F_{+1},E_{-1}}\bigr)=0$.
Combining these three relations gives Eq.~\eqref{eq:q-pairings}.

We abbreviate these three independent coefficients as
\begin{equation}
q_1:=q_{E_{+1},F_{-1}},\quad q_2:=q_{E_0,F_0},\quad q_3:=q_{E_{-1},F_{+1}}.
\label{eq:q123-def}
\end{equation}
Anti-Hermiticity of $Q$ implies
\begin{equation}
q_3=-q_1^\ast,\qquad q_2\in i\mathbb R .
\label{eq:q3-herm}
\end{equation}

The weight-$0$ length-$3$ basis strings $\vb B$ yielding nontrivial conditions can be organized into eleven representative strings and their Hermitian conjugates; we denote the corresponding equations $r_{\vb B}=0$ by $E_1,\dots,E_{11}$ and $E_1^\dagger,\dots,E_{11}^\dagger$ [the string of $E_7$ is self-conjugate].
For $i=1,\ldots,11$ the equations can be written in the form
\begin{equation}
E_i:\quad c_{i1}(H)\,q_1+c_{i2}(H)\,q_2+c_{i3}(H)\,q_3-R_i(H)=0,
\label{eq:Ei-form}
\end{equation}
where $c_{ij}(H)$ are linear and $R_i(H)$ cubic polynomials in the coupling constants
$\{e_m,f_m,h\}$ [the $R_i$ originate from $\hat X^{(3)}$ and carry the overall factor $q_{k=3}=1$].  The equations $E_i^\dagger$ associated with
the Hermitian-conjugate strings are obtained as follows: denoting by $X^\dagger$ the polynomial obtained from $X$ by exchanging the indices $+m\leftrightarrow-m$ of the couplings, and using $q_1^\dagger=-q_3$,
$q_2^\dagger=-q_2$, and $q_3^\dagger=-q_1$, they read
\begin{equation}
E_i^\dagger:\quad -c_{i3}^\dagger q_1-c_{i2}^\dagger q_2-c_{i1}^\dagger q_3-R_i^\dagger=0 .
\label{eq:Ei-dagger}
\end{equation}

\begin{widetext}
The representative strings $\vb B$ and the coefficients $c_{ij}$ appearing in~\eqref{eq:Ei-form} are listed in the following table; the equation $E_7$
contains no $q_j$.
We note a sign convention: for $i=8$, the quantity $E_8$ as tabulated equals \emph{minus} the coefficient $r_{\vb B}$ of $\vb B=F_{+1}F_{+1}F_{-2}$ in $[Q,H]$ (both $c_{81}$ and $R_8$ carry a common overall sign relative to the direct commutator coefficient), while for all other rows $E_i=r_{\vb B}$. Since only the conditions $E_i=0$ enter the analysis, this convention does not affect any of the combinations below.
\begin{center}\small
\begin{tabular}{|c|l|c|c|c|}\hline
$i$ & $\vb B$ & $c_{i1}$ ($q_1$) & $c_{i2}$ ($q_2$) & $c_{i3}$ ($q_3$)\\\hline
$1$ & $E_{+1}E_{+1}F_{-2}$ & $f_{+2}$ & $0$ & $0$\\\hline
$2$ & $E_{+1}E_0 F_{-1}$ & $e_{+1}-f_{+1}$ & $0$ & $0$\\\hline
$3$ & $E_{+1}E_{-1}F_0$ & $3 f_0$ & $-e_{+1}$ & $0$\\\hline
$4$ & $E_{+1}F_{-1}E_0$ & $e_0$ & $-3 e_{+1}$ & $0$\\\hline
$5$ & $E_{+1}F_{-2}E_{+1}$ & $-2 e_{-1}$ & $0$ & $-2 e_{+1}$\\\hline
$6$ & $E_0 E_{+1}F_{-1}$ & $-e_0$ & $3 f_{+1}$ & $0$\\\hline
$7$ & $E_0 E_0 F_0$ & $0$ & $0$ & $0$\\\hline
$8$ & $F_{+1}F_{+1}F_{-2}$ & $f_{+2}$ & $0$ & $0$\\\hline
$9$ & $F_{+2}F_{-2}F_0$ & $0$ & $-2 f_{+2}$ & $0$\\\hline
$10$ & $F_{+1}F_{-1}F_0$ & $3 f_0$ & $-f_{+1}$ & $0$\\\hline
$11$ & $F_{+1}F_{-2}F_{+1}$ & $2 f_{-1}$ & $0$ & $2 f_{+1}$\\\hline
\end{tabular}
\end{center}

The right-hand sides $R_i$ read
\begin{align}
R_1 &= e_{+1}^3 + 2 e_{+1}^2 f_{+1} + \tfrac{e_{+1}e_0 f_{+2}}{2} - 3 e_{+1}f_{+1}^2 - \tfrac{9 e_{+1}f_{+2}f_0}{2} - 3 e_{+1}f_{+2}h\notag\\
&\quad + \tfrac{e_{-1}f_{+2}^2}{2} + \tfrac{3 e_0 f_{+1}f_{+2}}{2} - \tfrac{3 f_{+1}f_{+2}f_0}{2} + 3 f_{+1}f_{+2}h - \tfrac{f_{+2}^2 f_{-1}}{2},\notag\\
R_2 &= \tfrac{e_{+1}^2 e_0}{2} + \tfrac{15 e_{+1}^2 f_0}{2} + 3 e_{+1}^2 h - \tfrac{e_{+1}e_{-1}f_{+2}}{2} - \tfrac{3 e_{+1}f_{+2}f_{-1}}{2}\notag\\
&\quad + \tfrac{3 e_{-1}f_{+1}f_{+2}}{2} - \tfrac{e_0 f_{+1}^2}{2} - \tfrac{15 f_{+1}^2 f_0}{2} - 3 f_{+1}^2 h + \tfrac{f_{+1}f_{+2}f_{-1}}{2},\notag\\
R_3 &= \tfrac{e_{+1}^2 e_{-1}}{6} - \tfrac{3 e_{+1}^2 f_{-1}}{2} + \tfrac{e_{+1}e_{-1}f_{+1}}{2} - \tfrac{e_{+1}e_0^2}{3} + \tfrac{3 e_{+1}e_0 f_0}{2} + \tfrac{5 e_{+1}f_{+1}f_{-1}}{6} - \tfrac{e_{+1}f_{+2}f_{-2}}{6} + \tfrac{27 e_{+1}f_0^2}{2}\notag\\
&\quad - 9 e_{+1}f_0 h - \tfrac{3 e_{-1}f_{+2}f_0}{2} - \tfrac{3 e_0 f_{+1}f_0}{2} - \tfrac{9 f_{+1}f_0^2}{2} + 9 f_{+1}f_0 h + \tfrac{3 f_{+2}f_{-1}f_0}{2},\notag\\
R_4 &= -\tfrac{e_{+1}^2 e_{-1}}{2} + \tfrac{e_{+1}^2 f_{-1}}{2} - \tfrac{3 e_{+1}e_{-1}f_{+1}}{2} + \tfrac{e_{+1}e_0^2}{2} - \tfrac{9 e_{+1}e_0 f_0}{2} - 3 e_{+1}e_0 h + \tfrac{3 e_{+1}f_{+1}f_{-1}}{2}\notag\\
&\quad + \tfrac{e_{+1}f_{+2}f_{-2}}{2} + \tfrac{3 e_{-1}e_0 f_{+2}}{2} - \tfrac{e_0^2 f_{+1}}{2} - \tfrac{3 e_0 f_{+1}f_0}{2} + 3 e_0 f_{+1}h - \tfrac{3 e_0 f_{+2}f_{-1}}{2},\notag\\
R_5 &= -e_{+1}^2 f_{-2} - 3 e_{+1}e_0 f_{-1} + e_{+1}f_{+1}f_{-2} + 15 e_{+1}f_{-1}f_0 + 6 e_{+1}f_{-1}h\notag\\
&\quad + e_{-1}^2 f_{+2} + 3 e_{-1}e_0 f_{+1} - 15 e_{-1}f_{+1}f_0 - 6 e_{-1}f_{+1}h - e_{-1}f_{+2}f_{-1},\notag\\
R_6 &= -\tfrac{3 e_{+1}e_{-1}f_{+1}}{2} + \tfrac{e_{+1}e_0^2}{2} + \tfrac{3 e_{+1}e_0 f_0}{2} - 3 e_{+1}e_0 h + \tfrac{3 e_{+1}f_{+1}f_{-1}}{2} + \tfrac{3 e_{-1}e_0 f_{+2}}{2} - \tfrac{e_{-1}f_{+1}^2}{2}\notag\\
&\quad - \tfrac{e_0^2 f_{+1}}{2} + \tfrac{9 e_0 f_{+1}f_0}{2} + 3 e_0 f_{+1}h - \tfrac{3 e_0 f_{+2}f_{-1}}{2} + \tfrac{f_{+1}^2 f_{-1}}{2} - \tfrac{f_{+1}f_{+2}f_{-2}}{2},\notag\\
R_7 &= \tfrac{e_{+1}e_{-1}e_0}{3} - 3 e_{+1}e_{-1}f_0 + e_{+1}e_0 f_{-1} + 3 e_{+1}f_{-1}f_0 + e_{-1}e_0 f_{+1}\notag\\
&\quad + 3 e_{-1}f_{+1}f_0 + \tfrac{e_0 f_{+1}f_{-1}}{3} - \tfrac{2 e_0 f_{+2}f_{-2}}{3} - 3 f_{+1}f_{-1}f_0,\notag\\
R_8 &= -3 e_{+1}^2 f_{+1} + \tfrac{3 e_{+1}e_0 f_{+2}}{2} + 2 e_{+1}f_{+1}^2 - \tfrac{3 e_{+1}f_{+2}f_0}{2} + 3 e_{+1}f_{+2}h\notag\\
&\quad - \tfrac{e_{-1}f_{+2}^2}{2} + \tfrac{e_0 f_{+1}f_{+2}}{2} + f_{+1}^3 - \tfrac{9 f_{+1}f_{+2}f_0}{2} - 3 f_{+1}f_{+2}h + \tfrac{f_{+2}^2 f_{-1}}{2},\notag\\
R_9 &= 6 e_{+1}^2 f_0 + \tfrac{e_{+1}e_{-1}f_{+2}}{3} - 12 e_{+1}f_{+1}f_0 + e_{+1}f_{+2}f_{-1} + e_{-1}f_{+1}f_{+2}\notag\\
&\quad - \tfrac{4 e_0^2 f_{+2}}{3} + 6 f_{+1}^2 f_0 + \tfrac{f_{+1}f_{+2}f_{-1}}{3} - \tfrac{f_{+2}^2 f_{-2}}{3},\notag\\
R_{10} &= \tfrac{5 e_{+1}e_{-1}f_{+1}}{6} - \tfrac{3 e_{+1}e_0 f_0}{2} + \tfrac{e_{+1}f_{+1}f_{-1}}{2} - \tfrac{9 e_{+1}f_0^2}{2} + 9 e_{+1}f_0 h - \tfrac{3 e_{-1}f_{+1}^2}{2} + \tfrac{3 e_{-1}f_{+2}f_0}{2}\notag\\
&\quad - \tfrac{e_0^2 f_{+1}}{3} + \tfrac{3 e_0 f_{+1}f_0}{2} + \tfrac{f_{+1}^2 f_{-1}}{6} - \tfrac{f_{+1}f_{+2}f_{-2}}{6} + \tfrac{27 f_{+1}f_0^2}{2} - 9 f_{+1}f_0 h - \tfrac{3 f_{+2}f_{-1}f_0}{2},\notag\\
R_{11} &= -3 e_{+1}e_0 f_{-1} - e_{+1}f_{+1}f_{-2} + 15 e_{+1}f_{-1}f_0 + 6 e_{+1}f_{-1}h + 3 e_{-1}e_0 f_{+1}\notag\\
&\quad - 15 e_{-1}f_{+1}f_0 - 6 e_{-1}f_{+1}h + e_{-1}f_{+2}f_{-1} + f_{+1}^2 f_{-2} - f_{+2}f_{-1}^2 .\notag
\end{align}
\end{widetext}

The ten conditions $(C\text{-}1)$--$(C\text{-}10)$ of Appendix~\ref{app:Cfull} are obtained as the following linear combinations of the equations $E_i$ and $E_i^\dagger$, in which the coefficients $q_1,q_2,q_3$ cancel out, leaving pure constraints on the couplings:
\begin{equation*}
(C\text{-}1)=-\tfrac12 R_7=\tfrac12 E_7 .
\end{equation*}
\begin{equation*}
(C\text{-}2)=E_1-E_8 .
\end{equation*}
\begin{equation*}
(C\text{-}3)=-\tfrac12\bigl(f_{-2}E_9+f_{+2}E_9^\dagger\bigr) .
\end{equation*}
\begin{equation*}
(C\text{-}4)=2f_{+2}E_2-(e_{+1}-f_{+1})(E_1+E_8) .
\end{equation*}
\begin{equation*}
\begin{aligned}
(C\text{-}5)={}&f_{-2}(e_{-1}+f_{-1})(E_1+E_8)\\
&-f_{+2}(e_{+1}+f_{+1})(E_1^\dagger+E_8^\dagger)\\
&+f_{+2}f_{-2}(E_5-E_{11}) .
\end{aligned}
\end{equation*}
\begin{equation*}
\begin{aligned}
(C\text{-}6)={}&f_{-2}(f_{-1}-e_{-1})(E_1+E_8)\\
&-f_{+2}(f_{+1}-e_{+1})(E_1^\dagger+E_8^\dagger)\\
&-f_{+2}f_{-2}(E_5+E_{11}) .
\end{aligned}
\end{equation*}
\begin{equation*}
\begin{aligned}
(C\text{-}7)={}&f_{+2}f_{-2}(E_4+E_6)\\
&-\tfrac34(e_{+1}-f_{+1})\bigl(f_{-2}E_9-f_{+2}E_9^\dagger\bigr) .
\end{aligned}
\end{equation*}
\begin{equation*}
\begin{aligned}
(C\text{-}8)={}&f_{+2}f_{-2}(E_3-E_{10})\\
&-\tfrac14(e_{+1}-f_{+1})\bigl(f_{-2}E_9-f_{+2}E_9^\dagger\bigr) .
\end{aligned}
\end{equation*}
\begin{equation*}
\begin{aligned}
(C\text{-}9)={}&-e_0 f_{-2}(E_1+E_8)+f_{+2}f_{-2}(E_4-E_6)\\
&-\tfrac34(e_{+1}+f_{+1})\bigl(f_{-2}E_9-f_{+2}E_9^\dagger\bigr) .
\end{aligned}
\end{equation*}
\begin{equation*}
\begin{aligned}
(C\text{-}10)={}&-3f_0 f_{-2}(E_1+E_8)+f_{+2}f_{-2}(E_3+E_{10})\\
&-\tfrac14(e_{+1}+f_{+1})\bigl(f_{-2}E_9-f_{+2}E_9^\dagger\bigr) .
\end{aligned}
\end{equation*}

\section{Full list of $(C\text{-}1)$--$(C\text{-}10)$}
\label{app:Cfull}

For reference, we collect here the full set of ten necessary conditions for the existence of a $3$-local quantity $Q$ with $\len([Q,H])\le2$, following Ref.~\cite{Hokkyo2024}.  Equations $(C\text{-}1)$--$(C\text{-}4)$,
$(C\text{-}8)$, and $(C\text{-}10)$ are also displayed in Sec.~\ref{sec:Ceq}
(the only ones used in the main analysis); the four additional
equations, $(C\text{-}5)$, $(C\text{-}6)$, $(C\text{-}7)$, and $(C\text{-}9)$, are
listed here for completeness.
We have re-derived all ten conditions through the procedure of Appendix~\ref{app:Ceq}.  While $(C\text{-}1)$--$(C\text{-}4)$, $(C\text{-}7)$, and $(C\text{-}8)$ agree with the printed equations in Appendix~C of Ref.~\cite{Hokkyo2024}, our $(C\text{-}5)$, $(C\text{-}6)$, $(C\text{-}9)$, and $(C\text{-}10)$ differ from the printed versions there, which we believe contain misprints; in particular, we have checked that our $(C\text{-}9)$ and $(C\text{-}10)$ vanish on all the integrable families listed in Ref.~\cite{Hokkyo2024}, whereas the printed versions do not.

\begin{widetext}
\begin{align}
\tag{C-1}
&\tfrac{3}{2}f_0|e_{+1}-f_{+1}|^2+e_0\Bigl(-\tfrac{1}{6}|e_{+1}|^2-\tfrac{1}{2}e_{+1}f_{-1}-\tfrac{1}{2}f_{+1}e_{-1}-\tfrac{1}{6}|f_{+1}|^2+\tfrac{1}{3}|f_{+2}|^2\Bigr)=0,\\
\tag{C-2}
&(e_{+1}-f_{+1})\bigl[f_{+2}(e_0+3f_0+6h)-(e_{+1}^2+6e_{+1}f_{+1}+f_{+1}^2)\bigr]-f_{+2}^2(e_{-1}-f_{-1})=0,\\
\tag{C-3}
&3f_0\!\bigl[f_{-2}(e_{+1}-f_{+1})^2+f_{+2}(e_{-1}-f_{-1})^2\bigr]+|f_{+2}|^2\!\Bigl(-\tfrac{4}{3}e_0^2+\tfrac{1}{3}|e_{+1}|^2+e_{+1}f_{-1}+f_{+1}e_{-1}+\tfrac{1}{3}|f_{+1}|^2-\tfrac{1}{3}|f_{+2}|^2\Bigr)=0,\\
\tag{C-4}
&f_{+2}^2\bigl(|e_{+1}|^2+3e_{+1}f_{-1}-3f_{+1}e_{-1}-|f_{+1}|^2\bigr)+f_{+2}(e_{+1}\!+\!f_{+1})(e_{+1}\!-\!f_{+1})(e_0-21f_0-6h)+(e_{+1}\!+\!f_{+1})(e_{+1}\!-\!f_{+1})^3=0,\\
\tag{C-5}
&\bigl(|e_{+1}+f_{+1}|^2-|f_{+2}|^2\bigr)\bigl(f_{+2}(e_{-1}-f_{-1})^2-f_{-2}(e_{+1}-f_{+1})^2\bigr)=0,\\
\tag{C-6}
&(10\,e_0-42\,f_0-12\,h)(f_{+1}e_{-1}-e_{+1}f_{-1})\,|f_{+2}|^2\nonumber\\
&\quad+\bigl(f_{+2}(e_{-1}^2-f_{-1}^2)-f_{-2}(e_{+1}^2-f_{+1}^2)\bigr)\bigl(|f_{+2}|^2-|e_{+1}-f_{+1}|^2\bigr)=0,\nonumber\\
\tag{C-7}
&(e_{+1}-f_{+1})\Bigl[\tfrac{9}{2}f_0\bigl(f_{-2}(e_{+1}-f_{+1})^2-f_{+2}(e_{-1}-f_{-1})^2\bigr)-|f_{+2}|^2\bigl(\tfrac{1}{2}|f_{+2}|^2+e_0(e_0-3f_0-6h)\bigr)\Bigr]\nonumber\\
&\quad+(e_{-1}-f_{-1})|f_{+2}|^2\bigl(-3e_0f_{+2}+\tfrac{1}{2}(e_{+1}^2+6e_{+1}f_{+1}+f_{+1}^2)\bigr)=0,\nonumber\\
\tag{C-8}
&(e_{+1}-f_{+1})\Bigl[\tfrac{3}{2}f_0\bigl(f_{-2}(e_{+1}-f_{+1})^2-f_{+2}(e_{-1}-f_{-1})^2\bigr)\nonumber\\
&\quad+|f_{+2}|^2\Bigl(\tfrac{1}{6}|f_{+2}|^2-\tfrac{1}{6}|e_{+1}|^2-\tfrac{1}{6}|f_{+1}|^2+\tfrac{3}{2}e_{+1}f_{-1}+\tfrac{3}{2}e_{-1}f_{+1}+\tfrac{e_0^2}{3}-18f_0^2-3e_0f_0+18f_0h\Bigr)\Bigr]\nonumber\\
&\quad+(e_{-1}-f_{-1})|f_{+2}|^2\bigl(3f_0f_{+2}-\tfrac{4}{3}e_{+1}f_{+1}\bigr)=0,\nonumber\\
\tag{C-9}
&(e_{+1}+f_{+1})\Bigl[(e_0+\tfrac{9}{2}f_0)f_{-2}(e_{+1}-f_{+1})^2-\tfrac{9}{2}f_0f_{+2}(e_{-1}-f_{-1})^2+|f_{+2}|^2\bigl(-\tfrac{|f_{+2}|^2}{2}+2e_0^2+\tfrac{|e_{+1}-f_{+1}|^2}{2}\bigr)\Bigr]=0,\\
\tag{C-10}
&(e_{+1}+f_{+1})\Bigl[\tfrac{3}{2}f_0\bigl(3f_{-2}(e_{+1}-f_{+1})^2-f_{+2}(e_{-1}-f_{-1})^2\bigr)+|f_{+2}|^2\bigl(\tfrac{1}{6}|f_{+2}|^2+\tfrac{e_0^2}{3}+6e_0f_0-27f_0^2\bigr)\Bigr]\nonumber\\
&\quad+\tfrac{1}{6}(e_{+1}-f_{+1})|f_{+2}|^2\bigl(|f_{+1}|^2-|e_{+1}|^2+9e_{+1}f_{-1}-9e_{-1}f_{+1}\bigr)=0.\nonumber
\end{align}
\end{widetext}

These equations remain valid for any choice of vanishing $f_m$, since no division by $f_m$ is involved in their derivation.

\section{Nonintegrability of the area-weighted periodic Motzkin chain}
\label{app:motzkin}

\subsection{Definition of the model}
\label{ssec:motzkin-model}
The Motzkin chain is a frustration-free spin chain whose unique ground state is the equal-weight superposition of Motzkin paths~\cite{Bravyi2012Motzkin}. This system also has a $U(1)$ symmetry and has been reported to exhibit spontaneous symmetry breaking at $T=0$~\cite{Menon2024}. While a system related to the Motzkin chain was proved to be integrable~\cite{HaoSalbergerKorepin2023}, we show in this appendix the nonintegrability of the Motzkin chain under the periodic boundary condition~\cite{Motzkin2504}, in the generalized form with area weight $t$~\cite{ZAK17,LevineMovassagh2017,Andrei2022}:
\begin{equation}
H^{(t)}=\sum_{i=1}^N H^{(t)}_i,\qquad
H^{(t)}_i = \Pi^{(t)}_{U,i}+\Pi^{(t)}_{D,i}+\Pi^{(t)}_{F,i},
\label{eq:Ht-def}
\end{equation}
where the projectors act on sites $i,i+1$ and read (with the normalization factor
$1/(1+t^2)$ omitted)
\begin{align}
\Pi^{(t)}_{U,i}&=|\Phi_t\rangle\langle\Phi_t|_{i,i+1}, & |\Phi_t\rangle&=|u\,0\rangle-t|0\,u\rangle,\notag\\
\Pi^{(t)}_{D,i}&=|\Psi_t\rangle\langle\Psi_t|_{i,i+1}, & |\Psi_t\rangle&=|0\,d\rangle-t|d\,0\rangle,\label{eq:UDF-t}\\
\Pi^{(t)}_{F,i}&=|\Theta_t\rangle\langle\Theta_t|_{i,i+1}, & |\Theta_t\rangle&=|u\,d\rangle-t|0\,0\rangle,\notag
\end{align}
with $t\in\mathbb R\setminus\{0\}$. Here $|u\rangle:=|{+}1\rangle$, $|0\rangle$, and $|d\rangle:=|{-}1\rangle$ denote the eigenstates of $E_0$ with eigenvalues $+1,0,-1$, respectively. At $t=1$ the model reduces to the standard periodic
Motzkin chain~\cite{Motzkin2504}. In what follows we use the operator basis
$\mathcal O=\{I,E_0,E_{\pm1},F_0,F_{\pm1},F_{\pm2}\}$ of
Eq.~\eqref{eq:basis}.

Expanding~\eqref{eq:UDF-t} in the $\mathcal O$ basis and rearranging
the terms in the translation sum (dropping an overall constant, i.e., a multiple of the identity, which does not affect conserved quantities), we write
\begin{equation}
H^{(t)}=\sum_{i=1}^N \bigl(\hloc_{i}+H^{(1)}_i\bigr)+\text{const}.
\label{eq:H-split}
\end{equation}
Here the nearest-neighbor interaction term $\hloc_i$, acting on sites $i,i+1$, is
\begin{equation}
\begin{aligned}
\hloc=&\;e_0\,E_0\otimes E_0+f_0\,F_0\otimes F_0\\
&+\sigma\,E_0\otimes F_0+\tau\,F_0\otimes E_0\\
&+e_1\,(E_{+1}\otimes E_{-1}+E_{-1}\otimes E_{+1})\\
&+f_1\,(F_{+1}\otimes F_{-1}+F_{-1}\otimes F_{+1})\\
&+\mu_1\,(E_{+1}\otimes F_{-1}+E_{-1}\otimes F_{+1})\\
&+\mu_2\,(F_{+1}\otimes E_{-1}+F_{-1}\otimes E_{+1}),
\end{aligned}
\label{eq:H-param}
\end{equation}
where the first factor of each tensor product acts on site $i$ and the second on site $i+1$, and the on-site term $H^{(1)}_i$ is
\begin{equation}
H^{(1)}_i=h_{F_0}^{(t)}\,F_{0,i},\qquad h_{F_0}^{(t)}=-\tfrac{t^2}{3}\ ,
\label{eq:H-onebody}
\end{equation}
the identity shift being $(t^2+1)/3$ per site.
Note that, unlike the Hamiltonian~\eqref{eq:Hgen} analyzed in the main text, $\hloc$ contains the cross terms with coefficients $\sigma,\tau,\mu_1,\mu_2$; the analysis of this appendix is therefore independent of, though parallel to, that of the main text.  Throughout this appendix, the symbols $\hloc_i$ and $e_0,f_0,e_1,f_1$ refer to the Motzkin quantities defined here, not to those of Eq.~\eqref{eq:Hgen}.
The values of the eight two-body coefficients
$(e_0,f_0,\sigma,\tau,e_1,f_1,\mu_1,\mu_2)$ are
\begin{equation}
\begin{aligned}
e_0&:=-\tfrac{1}{4}, & f_0&:=-\tfrac{1}{12},\\
\sigma&:=\tfrac{2t^2-1}{12}, & \tau&:=\tfrac{1-2t^2}{12},\\
e_1&:=-\tfrac{3t}{4}, & f_1&:=-\tfrac{t}{4},\\
\mu_1&:=+\tfrac{t}{4}, & \mu_2&:=-\tfrac{t}{4}.
\end{aligned}
\label{eq:t-params}
\end{equation}
At $t=1$ these become
$(e_0,f_0,\sigma,\tau,e_1,f_1,\mu_1,\mu_2)=(-\tfrac14,-\tfrac1{12},\tfrac1{12},-\tfrac1{12},-\tfrac34,-\tfrac14,\tfrac14,-\tfrac14)$,
matching the standard periodic Motzkin chain~\cite{Motzkin2504}.

\subsection{Weight decomposition and $U(1)$ symmetry}
\label{ssec:motzkin-weight}

Recall the weight $w(O)$ of a basis element $O\in\mathcal O$ introduced in Eq.~\eqref{eq:weight-def}.
For an $N$-site tensor product of basis elements we define the total
weight
\begin{equation}
W(O_{a_1}\otimes\cdots\otimes O_{a_N}):=\sum_{i=1}^{N} w(O_{a_i}).
\label{eq:weight-tensor}
\end{equation}
Writing $S^z:=\sum_i E_{0,i}$, the operator algebra decomposes
into weight sectors $\mathcal V_W:=\{X:[S^z,X]=W X\}$, and the
weight is additive under products and commutators.

Inspecting the Hamiltonian~\eqref{eq:H-param}, every basis string with a nonzero coefficient satisfies
\[
w(O_a)+w(O_b)=0
\]
($E_0E_0,F_0F_0,E_0F_0,F_0E_0$ give $0+0$;
$E_{\pm1}E_{\mp1},F_{\pm1}F_{\mp1},E_{\pm1}F_{\mp1},F_{\pm1}E_{\mp1}$
give $\pm1\mp1=0$), so $\hloc\in\mathcal V_0$. Likewise
$H^{(1)}_i=h_{F_0}^{(t)}F_{0,i}\in\mathcal V_0$. Therefore, we have the following.

\begin{lemma}[The Hamiltonian has weight $0$]\label{lem:H-weight-zero}
$H^{(t)}=\sum_i H^{(t)}_i$ belongs to $\mathcal V_0$. In particular,
\begin{equation}
[S^z,H^{(t)}]=0,
\label{eq:H-Sz-commute}
\end{equation}
i.e., $S^z$ is a conserved quantity of the area-weighted periodic
Motzkin chain.
\end{lemma}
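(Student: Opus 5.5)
The plan is to use the weight grading directly. I first check that $\hloc_i$ and $H^{(1)}_i$ lie in $\mathcal V_0$, and then conclude that $H^{(t)}$ does too. Membership in $\mathcal V_0$ means $[S^z,H^{(t)}]=0\cdot H^{(t)}=0$, which is exactly Eq.~\eqref{eq:H-Sz-commute}. The additive constant in Eq.~\eqref{eq:H-split} is a multiple of the identity, so it is also in $\mathcal V_0$ and can be ignored.

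For the grading itself, I apply the Leibniz rule to $S^z=\sum_j E_{0,j}$ acting on a tensor product of basis elements. Since $E_{0,j}$ commutes with every factor not on site $j$,
\[
[S^z,O_{a_1}\otimes\cdots\otimes O_{a_N}]=\sum_{j}O_{a_1}\otimes\cdots\otimes[E_0,O_{a_j}]\otimes\cdots\otimes O_{a_N}.
\]
By Eq.~\eqref{eq:weight-def}, each term in this sum equals $w(O_{a_j})$ times the original string. The string is therefore an eigenvector of $\mathrm{ad}_{S^z}$ with eigenvalue $W$ as defined in Eq.~\eqref{eq:weight-tensor}, and by linearity every $\mathcal V_W$ is spanned by the strings of total weight $W$.

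It then remains to list the basis strings appearing in Eq.~\eqref{eq:H-param}. These are $E_0E_0$, $F_0F_0$, $E_0F_0$, $F_0E_0$, together with the pairs $O_{\pm1}O'_{\mp1}$ where $O,O'\in\{E,F\}$. Reading off the weights from Eq.~\eqref{eq:weight-def} shows that each of these has total weight $0$. The on-site term $F_{0,i}$ also has weight $0$, so each $H^{(t)}_i\in\mathcal V_0$, and $H^{(t)}\in\mathcal V_0$ follows by summing over $i$.

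As a cross-check independent of the basis expansion, I can look at the projectors in Eq.~\eqref{eq:UDF-t} directly. Each of $|\Phi_t\rangle$, $|\Psi_t\rangle$, $|\Theta_t\rangle$ is an eigenvector of $E_{0,i}+E_{0,i+1}$, with eigenvalues $+1$, $-1$, $0$ respectively. Each projector $\Pi$ therefore commutes with the two-site magnetization, and hence with $S^z$.

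I do not expect any real obstacle. The only point that needs care is that the expansion in Eq.~\eqref{eq:H-param} has been transcribed correctly, and the projector cross-check removes that dependence.
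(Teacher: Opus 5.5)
Your proposal is correct and takes the same approach as the paper, which likewise reads off that every basis string in Eq.~\eqref{eq:H-param} and the on-site $F_0$ have total weight zero; your Leibniz-rule derivation of the grading and the projector check (each of $|\Phi_t\rangle,|\Psi_t\rangle,|\Theta_t\rangle$ is an eigenvector of $E_{0,i}+E_{0,i+1}$) are valid additions, and the latter usefully removes any reliance on the expansion being transcribed correctly. One small wording point: Eq.~\eqref{eq:H-split} holds only after rearranging the translation sum, so the expansion argument directly gives $\hloc_i+H^{(1)}_i\in\mathcal V_0$ rather than $H^{(t)}_i\in\mathcal V_0$ bond by bond; the per-bond statement still follows from your projector check, and the conclusion for the sum is unaffected.
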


\subsection{Check of the prerequisites for Theorem~\ref{thm:hokkyo}}
\label{ssec:motzkin-prereq}

To apply Theorems~\ref{thm:hokkyo} and \ref{thm:hokkyo2}, the system must satisfy the
following two conditions:
\begin{itemize}\setlength{\itemsep}{2pt}
  \item[(A)] For any traceless one-site operators $X,Y$,
  \begin{align}
    &\big[I\otimes X,\,\hloc_i\big]=0\Rightarrow X=0,\notag\\
    &\big[Y\otimes I,\,\hloc_i\big]=0\Rightarrow Y=0.
    \label{eq:inj}
  \end{align}
  \item[(B)] $\mathcal B^{(2)}_\le=\mathbb C\sum_i\hloc_i$ for $H^{(t)}$, i.e., every strictly $2$-local quantity $X$ with $\len([X,H^{(t)}])\le2$ is a scalar multiple of $\sum_i\hloc_i$.
\end{itemize}

\subsubsection{Proof of (A)}
\label{sssec:motzkin-inj}

We first show that (A) holds under the conditions
$f_1\neq0$ and $e_1 f_1-\mu_1\mu_2\neq 0$, both of which are satisfied by the couplings~\eqref{eq:t-params} for all $t\neq0$ [$f_1=-t/4$ and $e_1f_1-\mu_1\mu_2=t^2/4$]. We spell out $\big[I\otimes X,\,\hloc_i\big]=0\Rightarrow X=0$; the statement $\big[Y\otimes I,\,\hloc_i\big]=0\Rightarrow Y=0$ can be proved in the same way.

\begin{proof}
Decompose $X$ by weight, $X=\sum_W X_W$:
\begin{equation}
\begin{aligned}
X_0&=x_{E_0}E_0+x_{F_0}F_0,\\
X_{+1}&=x_{E_{+1}}E_{+1}+x_{F_{+1}}F_{+1},\\
X_{-1}&=x_{E_{-1}}E_{-1}+x_{F_{-1}}F_{-1},\\
X_{+2}&=x_{F_{+2}}F_{+2},\quad X_{-2}=x_{F_{-2}}F_{-2}.
\end{aligned}
\end{equation}
Since $\hloc\in\mathcal V_0$, we have $[I\otimes X_W,\hloc]\in\mathcal V_W$,
so the vanishing condition is imposed independently in each weight
sector. We treat the sectors in turn.

\subsubsection*{$W=+2$ sector.}
Using $[F_{+2},F_{-1}]=-E_{+1}$, the coefficient of the basis string $F_{+1}E_{+1}$ in $[I\otimes x_{F_{+2}}F_{+2}, \hloc_i]$ is $-f_1x_{F_{+2}}$, and no other term of $\hloc$ contributes to this string. Hence $f_1\neq0$ forces
$x_{F_{+2}}=0$, and the claim holds in the $W=+2$ sector.

\subsubsection*{$W=+1$ sector.}
For $X_{+1}=x_{E_{+1}}E_{+1}+x_{F_{+1}}F_{+1}$, the column expressions
\begin{equation}
    \begin{array}{cc}
         & E_{+1} \\
          F_{-1} & F_{+1} \\
          \hline
          F_{-1} & (-2F_{+2})
    \end{array}, \
    \begin{array}{cc}
         & F_{+1} \\
          F_{-1} & E_{+1} \\
          \hline
          F_{-1} & (2F_{+2})
    \end{array},
\end{equation}
\begin{equation}
    \begin{array}{cc}
         & E_{+1} \\
          E_{-1} & F_{+1} \\
          \hline
          E_{-1} & (-2F_{+2})
    \end{array}, \
    \begin{array}{cc}
         & F_{+1} \\
          E_{-1} & E_{+1} \\
          \hline
          E_{-1} & (2F_{+2})
    \end{array},
\end{equation}
give
\begin{align}
    f_1\,x_{E_{+1}} - \mu_2\,x_{F_{+1}} &= 0,\\
    \mu_1\,x_{E_{+1}} - e_1\,x_{F_{+1}} &= 0 .
\end{align}
This has only the trivial solution when $e_1 f_1-\mu_1\mu_2\neq 0$, so
the claim holds in the $W=+1$ sector.

\subsubsection*{$W=0$ sector.}
By the same reasoning as the $W=+1$ sector, the column expressions
\begin{equation}
    \begin{array}{cc}
         & E_{0} \\
          E_{-1} & E_{+1} \\
          \hline
          E_{-1} & E_{+1}
    \end{array}, \
    \begin{array}{cc}
         & F_{0} \\
          E_{-1} & F_{+1} \\
          \hline
          E_{-1} & (3E_{+1})
    \end{array},
\end{equation}
\begin{equation}
    \begin{array}{cc}
         & E_{0} \\
          F_{-1} & E_{+1} \\
          \hline
          F_{-1} & E_{+1}
    \end{array}, \
    \begin{array}{cc}
         & F_{0} \\
          F_{-1} & F_{+1} \\
          \hline
          F_{-1} & (3E_{+1})
    \end{array},
\end{equation}
show that the claim holds when
\begin{equation}
    e_1 f_1-\mu_1\mu_2\neq0 .
\end{equation}

\subsubsection*{$W<0$ sectors.}
Since $(\hloc_i)^\dagger=\hloc_i$, validity of (A) in the $W<0$
sectors is equivalent to validity in the $W>0$ sectors. As we have
already verified (A) in the $W>0$ sectors, it holds automatically
here.

Therefore (A) holds in all sectors under the conditions
\begin{equation}
    f_1\neq0\quad\text{and}\quad e_1 f_1-\mu_1\mu_2\neq 0 .
\end{equation}
\end{proof}

\subsubsection{Proof of (B)}
\label{sssec:motzkin-lemB}

We consider the case in which the conditions
\begin{equation}
\text{(M1)}\ e_0 f_0\ne\sigma\tau,\quad
\text{(M2)}\ \mu_1\ne\mu_2,\quad
\text{(M3)}\ e_1 f_1\ne\mu_1\mu_2
\label{eq:nondegen-cond}
\end{equation}
hold [labeled (M1)--(M3) to avoid confusion with the patterns (i)--(vi) of the main text]. The couplings~\eqref{eq:t-params} meet these conditions for all $t\neq0$: (M1) $e_0f_0-\sigma\tau=[3+(2t^2-1)^2]/144>0$, (M2) $\mu_1-\mu_2=t/2\neq0$, and (M3) $e_1f_1-\mu_1\mu_2=t^2/4\neq0$.

As in Sec.~\ref{sec:k2}, it suffices to show that a strictly $2$-local quantity $X=\sum_i\sum_{O_mO_n}q_{O_mO_n,i}(O_m)_i(O_n)_{i+1}$ whose commutator $[X,H^{(t)}]$ contains no length-$3$ basis strings must be proportional to $\sum_i\hloc_i$; here $m,n$ denote the weights of the basis elements $O_m,O_n\in\mathcal O$.
Since $H^{(t)}$ has weight zero (Lemma~\ref{lem:H-weight-zero}), each weight component of $X$ can be analyzed separately, and we treat the components according to their total weight $W=m+n$.

First, we show
that whenever one of $O_m,O_n$ has weight $\pm2$, the coefficient
$q_{O_mO_n,i}$ vanishes.
Assume $O_m=F_{\pm2}$ (the case $O_n=F_{\pm2}$ is analogous). Since the Hamiltonian contains no term with
$F_{\pm2}$, a length-$3$ basis string of the form $F_{\pm2}\,[O_n,O_a]\,O_{-a}$ is
generated only by placements of the form
\begin{equation}
    \begin{array}{ccc}
        F_{\pm2} & O_n & \\
                 & O_a & O_{-a} \\
        \hline
    \end{array}
    , \ \
    \begin{array}{ccc}
        F_{\pm2} & O_n' & \\
                 & O_a' & O_{-a} \\
        \hline
    \end{array},
\end{equation}
i.e., only by placements in which the Hamiltonian term sits on the right pair of sites. The cancellation condition among these placements has exactly the form of condition
(A) applied to the operator $\sum_{O_n}q_{F_{\pm2}O_n,i}O_n$; hence, when (A) holds, the coefficients
$q_{F_{\pm2}O_n,i}$ necessarily vanish.

Next, we show that the coefficients vanish in the $W>0$ sectors. Since
the strings containing $F_{\pm2}$ have already been excluded, it remains to treat the cases where $O_m,O_n\in\{E_{\pm1},F_{\pm1},E_0,F_0\}$ with $m+n>0$, i.e., where at least one of
$O_m,O_n$ is $E_{+1}$ or $F_{+1}$. When $O_n\in\{E_{+1},F_{+1}\}$, the
length-$3$ strings of the form $O_mF_{+2}O_{-1}$ are generated only by placements
in which the Hamiltonian term sits on the right pair of sites; the resulting conditions have the same form as the (A) condition in
the $W=+1$ sector [cf.\ the column expressions of Sec.~\ref{sssec:motzkin-inj}]. The case
$O_m\in\{E_{+1},F_{+1}\}$ is analogous, using placements with the Hamiltonian on the left pair. Hence, as long as condition (M3) holds, all $W>0$ coefficients vanish.
The $W<0$ coefficients vanish by the same
argument combined with $(\hloc_i)^\dagger=\hloc_i$.

Finally, we examine the $W=0$ sector. Because the linear conditions on the coefficients are translation covariant, the solution space decomposes into momentum sectors, and we may take
\begin{equation}
q_{ab,i}=\omega^{ki} q_{ab}^{(k)},\qquad
\omega:=e^{2\pi i/N},\quad k\in\{0,1,\ldots,N-1\}.
\label{eq:fourier-def}
\end{equation}
Having shown that the coefficients
of $F_{+2}F_{-2}$ and $F_{-2}F_{+2}$ vanish, we now completely determine the remaining
twelve $W=0$ components, split into three blocks:
\begin{equation}
\begin{aligned}
\bm a&:=(a_1,a_2,a_3,a_4)\\
&\phantom{:}=(q_{E_0E_0}^{(k)},q_{E_0F_0}^{(k)},q_{F_0E_0}^{(k)},q_{F_0F_0}^{(k)}),\\
\bm b&:=(b_1,b_2,b_3,b_4)\\
&\phantom{:}=(q_{E_{+1}E_{-1}}^{(k)},q_{E_{+1}F_{-1}}^{(k)},q_{F_{+1}E_{-1}}^{(k)},q_{F_{+1}F_{-1}}^{(k)}),\\
\bm c&:=(c_1,c_2,c_3,c_4)\\
&\phantom{:}=(q_{E_{-1}E_{+1}}^{(k)},q_{E_{-1}F_{+1}}^{(k)},q_{F_{-1}E_{+1}}^{(k)},q_{F_{-1}F_{+1}}^{(k)}).
\end{aligned}
\label{eq:vars-blocks}
\end{equation}

The vanishing of the coefficients of the four length-$3$ basis strings
$E_{+1}E_{-1}E_0$, $E_{+1}E_{-1}F_0$, $F_{+1}E_{-1}E_0$, and
$F_{+1}E_{-1}F_0$ in $[X,H^{(t)}]$ gives
\begin{equation}
\begin{aligned}
e_0 b_1+3\tau b_2&=\omega^k(e_1 a_1+3\mu_1 a_3),\\
\sigma b_1+3 f_0 b_2&=\omega^k(e_1 a_2+3\mu_1 a_4),\\
e_0 b_3+3\tau b_4&=\omega^k(\mu_2 a_1+3 f_1 a_3),\\
\sigma b_3+3 f_0 b_4&=\omega^k(\mu_2 a_2+3 f_1 a_4).
\end{aligned}
\label{eq:reduc1-fourier}
\end{equation}
These equations determine $(b_1,b_2)$ and $(b_3,b_4)$ in terms of $\bm a$, since the determinant of each $2\times2$ block, $D=3(e_0 f_0-\sigma\tau)$, is nonzero under
condition (M1). Thus,
\begin{equation}
 b_i=\omega^k\, M_i({\bm a})\qquad(i=1,2,3,4),
\label{eq:reduc1-sol}
\end{equation}
where
\begin{equation}
\begin{aligned}
 M_1\!&=\!\frac{f_0(e_1 a_1+3\mu_1 a_3)-\tau(e_1 a_2+3\mu_1 a_4)}{e_0 f_0-\sigma\tau},\\
 M_2\!&=\!\frac{e_0(e_1 a_2+3\mu_1 a_4)-\sigma(e_1 a_1+3\mu_1 a_3)}{3(e_0 f_0-\sigma\tau)},\\
 M_3\!&=\!\frac{f_0(\mu_2 a_1+3 f_1 a_3)-\tau(\mu_2 a_2+3 f_1 a_4)}{e_0 f_0-\sigma\tau},\\
 M_4\!&=\!\frac{e_0(\mu_2 a_2+3 f_1 a_4)-\sigma(\mu_2 a_1+3 f_1 a_3)}{3(e_0 f_0-\sigma\tau)}.
\end{aligned}
\label{eq:M-formulas}
\end{equation}

The analogous conditions from the four length-$3$ basis strings
$E_{-1}E_{+1}E_0$, $E_{-1}E_{+1}F_0$, $F_{-1}E_{+1}E_0$, and
$F_{-1}E_{+1}F_0$ take the same form as~\eqref{eq:reduc1-fourier} with $\bm b$ replaced by $\bm c$.
Hence,
\begin{equation}
 c_i=\omega^k\, M_i({\bm a})\qquad(i=1,2,3,4),
\label{eq:reduc2-sol}
\end{equation}
and in particular \eqref{eq:reduc1-sol} and~\eqref{eq:reduc2-sol} give
$ c_i= b_i$.

Similarly, the vanishing of the coefficients of the four length-$3$ basis strings
\begin{equation}
E_{+1}E_0E_{-1},\quad F_{+1}F_0F_{-1},\quad
F_{+1}F_0E_{-1},\quad F_{+1}E_0F_{-1}
\label{eq:opt-triple}
\end{equation}
yields, respectively,
\begin{align}
0&=e_1 b_1(\omega^k-1)+\omega^k\mu_1 b_3-\mu_2 b_2,
\label{eq:reduc3-eq1}\\
0&=f_1(\omega^k b_2- b_3)+(\omega^k\mu_2-\mu_1) b_4,
\label{eq:reduc3-eq3}\\
0&=\omega^k f_1 b_1+\mu_2 b_3(\omega^k-1)-e_1 b_4,
\label{eq:reduc3-eq4}\\
0&=\omega^k\mu_2 b_2-\mu_1 b_3+f_1 b_4(\omega^k-1).
\label{eq:reduc3-eq5}
\end{align}

These four equations depend only on ${\bm b}=( b_1, b_2, b_3, b_4)$
and do not involve ${\bm a}$. We thus obtain a $4\times4$
constraint matrix $\bm A^{(B)}(\omega)$ acting on ${\bm b}$ (hereafter we
write $\omega^k$ simply as $\omega$):
\begin{equation}
\bm A^{(B)}(\omega)=\begin{pmatrix}
e_1(\omega-1) & -\mu_2 & \omega\mu_1 & 0\\
0 & \omega f_1 & -f_1 & \omega\mu_2-\mu_1\\
\omega f_1 & 0 & \mu_2(\omega-1) & -e_1\\
0 & \omega\mu_2 & -\mu_1 & f_1(\omega-1)
\end{pmatrix}.
\label{eq:AB-omega}
\end{equation}
Its determinant factorizes as
\begin{equation}
\det\bm A^{(B)}(\omega)=-\omega(\omega-1)\cdot R(\omega),
\label{eq:det-AB}
\end{equation}
where $R$ is a quadratic polynomial in $\omega$:
\begin{equation}
R(\omega)=c_2\,\omega^2+c_1\,\omega+c_0,
\label{eq:R-explicit}
\end{equation}
\begin{align}
c_2&=-(f_1-\mu_2)(f_1+\mu_2)(e_1\mu_2-f_1\mu_1),\label{eq:R-c2}\\
c_1&=\mu_2\bigl(2e_1f_1^2-e_1\mu_1\mu_2-e_1\mu_2^2\notag\\
   &\qquad+f_1\mu_1^2-f_1\mu_1\mu_2\bigr),\label{eq:R-c1}\\
c_0&=e_1^2f_1\mu_1-e_1^2f_1\mu_2-e_1f_1^2\mu_2\notag\\
   &\quad+e_1\mu_1\mu_2^2-f_1^3\mu_2+f_1\mu_1^2\mu_2.\label{eq:R-c0}
\end{align}
Note that $R$ depends only on $(e_1,f_1,\mu_1,\mu_2)$ and not on
$(e_0,f_0,\sigma,\tau)$.

Substituting the area-weighted Motzkin parameters~\eqref{eq:t-params}
into~\eqref{eq:R-c2}--\eqref{eq:R-c0}, we find
\begin{equation}
R^{(t)}(\omega)=+\tfrac{t^4}{32}\,(\omega-3),
\label{eq:R-t}
\end{equation}
whose only root is $\omega=3$.

For
$k\in\{1,\ldots,N-1\}$, $\omega^k=e^{2\pi ik/N}$ lies on the unit circle and $\omega^k\neq1$, so
\[
\omega^k\ne 0,\quad \omega^k\ne 1,\quad \omega^k\ne 3.
\]
Hence the three factors $-\omega$, $(\omega-1)$, and
$R^{(t)}(\omega)\propto(\omega-3)$ in~\eqref{eq:det-AB} are all
nonzero, giving
\begin{equation}
\det\bm A^{(B)}(\omega^k)\ne 0\qquad(\forall\,k\in\{1,\ldots,N-1\},\ \forall t\in\mathbb R\setminus\{0\}),
\end{equation}
so $\bm A^{(B)}(\omega^k)$ has full rank $4$ and ${\bm b}^{(k)}=0$.
Moreover, the linear map ${\bm a}\mapsto{\bm b}$ of Eq.~\eqref{eq:reduc1-sol} is invertible under conditions (M1) and (M3) [its determinant is proportional to powers of $e_0f_0-\sigma\tau$ and $e_1f_1-\mu_1\mu_2$], so
${\bm a}^{(k)}=0$, and Eq.~\eqref{eq:reduc2-sol} then gives
${\bm c}^{(k)}=0$. Therefore
\begin{equation}
 q_{ab}^{(k)}=0\qquad(\forall k\in\{1,\ldots,N-1\}),
\label{eq:trivial-k}
\end{equation}
and the nontrivial coefficients are restricted to the $k=0$ mode.

For $k=0$, we set $\omega^k=1$
in~\eqref{eq:reduc1-fourier} and \eqref{eq:reduc3-eq1}--\eqref{eq:reduc3-eq5}.
Equation~\eqref{eq:reduc1-fourier} becomes
\begin{equation}
\begin{aligned}
e_0 b_1+3\tau b_2&=e_1 a_1+3\mu_1 a_3,\\
\sigma b_1+3 f_0 b_2&=e_1 a_2+3\mu_1 a_4,\\
e_0 b_3+3\tau b_4&=\mu_2 a_1+3 f_1 a_3,\\
\sigma b_3+3 f_0 b_4&=\mu_2 a_2+3 f_1 a_4,
\end{aligned}
\label{eq:reduc1-omega1}
\end{equation}
which, under condition (M1), gives
\begin{equation}
\begin{aligned}
b_1&=\frac{f_0(e_1 a_1+3\mu_1 a_3)-\tau(e_1 a_2+3\mu_1 a_4)}{e_0 f_0-\sigma\tau},\\
b_2&=\frac{e_0(e_1 a_2+3\mu_1 a_4)-\sigma(e_1 a_1+3\mu_1 a_3)}{3(e_0 f_0-\sigma\tau)},\\
b_3&=\frac{f_0(\mu_2 a_1+3 f_1 a_3)-\tau(\mu_2 a_2+3 f_1 a_4)}{e_0 f_0-\sigma\tau},\\
b_4&=\frac{e_0(\mu_2 a_2+3 f_1 a_4)-\sigma(\mu_2 a_1+3 f_1 a_3)}{3(e_0 f_0-\sigma\tau)},
\end{aligned}
\label{eq:b-omega1}
\end{equation}
and  $c_i=b_i$.

Also, substituting $\omega^k=1$
into~\eqref{eq:reduc3-eq1}, \eqref{eq:reduc3-eq3}, and
\eqref{eq:reduc3-eq4} [Eq.~\eqref{eq:reduc3-eq5} at $\omega^k=1$ coincides with Eq.~\eqref{eq:reduc3-eq1} up to an overall sign, so it carries no additional information]:
\begin{itemize}\setlength{\itemsep}{2pt}
\item \eqref{eq:reduc3-eq1}: $\mu_1 b_3-\mu_2 b_2=0$.
\item \eqref{eq:reduc3-eq3}: $f_1(b_2-b_3)+(\mu_2-\mu_1) b_4=0$.
\item \eqref{eq:reduc3-eq4}: $f_1 b_1-e_1 b_4=0$.
\end{itemize}
Viewing these three as a $3\times4$ constraint matrix
$\bm A^{(B)}_0\in\mathbb C^{3\times 4}$ on $\bm b$,
\begin{equation}
\bm A^{(B)}_0=
\begin{pmatrix}
0 & -\mu_2 & \mu_1 & 0\\
0 & f_1 & -f_1 & \mu_2-\mu_1\\
f_1 & 0 & 0 & -e_1
\end{pmatrix},
\label{eq:AB-matrix}
\end{equation}
the $3\times 3$ minors (dropping each column) are
\begin{equation}
\begin{aligned}
\text{drop }b_1:&\;\det=e_1 f_1(\mu_1-\mu_2),\\
\text{drop }b_2:&\;\det=-f_1\mu_1(\mu_1-\mu_2),\\
\text{drop }b_3:&\;\det=f_1\mu_2(\mu_1-\mu_2),\\
\text{drop }b_4:&\;\det=-f_1^2(\mu_1-\mu_2).
\end{aligned}
\label{eq:AB-minors}
\end{equation}
For the area-weighted Motzkin chain, $f_1=-t/4\ne 0$ ($\forall t\ne 0$)
and $\mu_1=-\mu_2\neq0$, so all four minors are nonzero. In particular, the matrix~\eqref{eq:AB-matrix} has rank $3$, and its kernel is one dimensional. One checks directly that the vector
\begin{equation}
(b_1,b_2,b_3,b_4)=q_{k=2}\cdot(e_1,\mu_1,\mu_2,f_1),
\label{eq:b-sol}
\end{equation}
with $q_{k=2}$ an arbitrary constant,
annihilates all three rows of~\eqref{eq:AB-matrix} and therefore spans the kernel.
The corresponding $\bm a$ is the unique solution of~\eqref{eq:reduc1-omega1}; substituting
\begin{equation}
(a_1,a_2,a_3,a_4)=q_{k=2}\cdot(e_0,\sigma,\tau,f_0)
\label{eq:a-sol}
\end{equation}
into~\eqref{eq:b-omega1} indeed reproduces~\eqref{eq:b-sol}, e.g.,
\begin{align*}
b_1
=\frac{q_{k=2}\,e_1(e_0 f_0-\sigma\tau)}{e_0 f_0-\sigma\tau}=q_{k=2}\,e_1
\end{align*}
(the $\mu_1$ terms cancel), and similarly for $b_2,b_3,b_4$; moreover $c_i=b_i$. Comparing~\eqref{eq:a-sol} and \eqref{eq:b-sol} with the couplings in~\eqref{eq:H-param}, we conclude that, for all twelve $W=0$ components,
\begin{equation}
 q_{ab}^{(0)}=q_{k=2}\cdot c_{ab},
\label{eq:Q2-sol}
\end{equation}
where $c_{ab}$ denotes the coefficient of $O_a\otimes O_b$ in $\hloc$.

By~\eqref{eq:trivial-k} and~\eqref{eq:Q2-sol}, the coefficients
$q_{ab,i}=q_{k=2}c_{ab}$ are independent of $i$, and therefore
\begin{equation}
\hat X^{(2)}=q_{k=2}\sum_{i=1}^{N}\sum_{a,b}c_{ab}\,O_{a,i}\otimes O_{b,i+1}=q_{k=2}\,\sum_i\hloc_i.
\label{eq:X2-final}
\end{equation}
As in Sec.~\ref{sec:k2}, it follows that every $2$-local conserved quantity of $H^{(t)}$ is a linear combination of $H^{(t)}$, $1$-local conserved quantities, and the identity, which proves (B).

\subsection{Absence of $3$-local quantities with $\len([Q,H^{(t)}])\le2$}
\label{ssec:motzkin-step2}
Having verified conditions (A) and (B), we can apply Theorems~\ref{thm:hokkyo} and \ref{thm:hokkyo2} to the area-weighted periodic Motzkin chain: to rule out $k$-local conserved quantities for all $3\le k\le N/2$, it suffices to show that no $3$-local quantity $Q$ satisfies $\len([Q,H^{(t)}])\le2$.
As in Sec.~\ref{sec:reduction}, the strictly $3$-local part of such a $Q$ is fixed to be
\begin{equation}
\hat X^{(3)}=q\sum_i\bigl[\hloc_i,\hloc_{i+1}\bigr]
\label{eq:motzkin-X3}
\end{equation}
with a site-independent constant $q$ [the analogue of $q_{k=3}$; note that, because $\hloc$ now contains the cross terms, the length-$3$ strings of $\hat X^{(3)}$ are of the form $O_a[O_b,O_c]O_d$ with $c_{ab}c_{cd}\neq0$].
The strictly $2$-local part of $Q$ also enters the length-$3$ conditions; among its coefficients we will need only that of $F_{+2,i}F_{-2,i+1}$, which we denote by $\tilde q_{F_{+2},F_{-2}}$ [it is site independent because, in the relevant weight sector, the $k\neq0$ Fourier modes obey the homogeneous full-rank system of the previous subsection, while the source term $\hat X^{(3)}$ is translation invariant].
Below we derive two independent linear conditions on $(q,\tilde q_{F_{+2},F_{-2}})$ by demanding that the coefficients of the two length-$3$ basis strings $\vb B = F_{+2}E_{-1}E_{-1}$ and $\vb B=E_{+1}E_{+1}F_{-2}$ in $[Q,H^{(t)}]$ vanish, and show that they force $q=0$.

In the column expressions below, each length-$3$ string of $\hat X^{(3)}$ shares \emph{two} sites with the $2$-local interaction term of $H^{(t)}$.  As noted in Ref.~\cite{Hokkyo2024}, the commutator of two operators that share more than one site is in general not a single basis string but a linear combination [its evaluation uses the anticommutators of Table~\ref{tab:anticomm} as well]; we therefore mark such a commutator with a \emph{double horizontal line} in the column expression, retaining below the line only the string $\vb B$ of interest, with its coefficient shown on the left.  The polynomial displayed after each arrow is the total contribution of that placement to the coefficient of $\vb B$, per unit $q$; it includes the coefficient with which the first-row string appears in $\sum_i[\hloc_i,\hloc_{i+1}]$ and the coupling of the Hamiltonian term in the second row.
\subsubsection{\texorpdfstring{Term $\vb B=F_{+2}E_{-1}E_{-1}$}{Term B = F+2 E-1 E-1}}

The contributions from length-$3$ strings of $\hat X^{(3)}$ consist of the following eight
placements (the other conceivable placements contribute no $\vb B$ component):
\begin{widetext}
\begin{align*}
\begin{array}{cccc}
 & E_{+1} & E_{0} & E_{-1} \\
 & E_{+1} & E_{-1} & \\\hline\hline
-1 & F_{+2} & E_{-1} & E_{-1}
\end{array} \!\!\! &\Longrightarrow\, e_1^3+e_1\mu_1\mu_2,&
\begin{array}{cccc}
 & E_{+1} & F_{0} & E_{-1} \\
 & E_{+1} & F_{-1} & \\\hline\hline
-3 & F_{+2} & E_{-1} & E_{-1}
\end{array} \!\!\! &\Longrightarrow\, 3e_1\mu_1^2+3e_1\mu_1\mu_2,\\
\begin{array}{cccc}
 & F_{+1} & E_{0} & E_{-1} \\
 & F_{+1} & E_{-1} & \\\hline\hline
1 & F_{+2} & E_{-1} & E_{-1}
\end{array} \!\!\! &\Longrightarrow\, -e_1\mu_2^2-f_1\mu_2^2,&
\begin{array}{cccc}
 & F_{+1} & F_{0} & E_{-1} \\
 & F_{+1} & F_{-1} & \\\hline\hline
3 & F_{+2} & E_{-1} & E_{-1}
\end{array} \!\!\! &\Longrightarrow\, -3e_1f_1^2-3f_1\mu_2^2,\\
\begin{array}{cccc}
 & E_{+1} & E_{0} & E_{-1} \\
 & F_{+1} & F_{-1} & \\\hline\hline
-1 & F_{+2} & E_{-1} & E_{-1}
\end{array} \!\!\! &\Longrightarrow\, e_1^2f_1+f_1\mu_1\mu_2,&
\begin{array}{cccc}
 & E_{+1} & F_{0} & E_{-1} \\
 & F_{+1} & E_{-1} & \\\hline\hline
1 & F_{+2} & E_{-1} & E_{-1}
\end{array} \!\!\! &\Longrightarrow\, -e_1\mu_1\mu_2-e_1\mu_2^2,\\
\begin{array}{cccc}
 & F_{+1} & E_{0} & E_{-1} \\
 & E_{+1} & F_{-1} & \\\hline\hline
1 & F_{+2} & E_{-1} & E_{-1}
\end{array} \!\!\! &\Longrightarrow\, -e_1\mu_1\mu_2-f_1\mu_1\mu_2,&
\begin{array}{cccc}
 & F_{+1} & F_{0} & E_{-1} \\
 & E_{+1} & E_{-1} & \\\hline\hline
-1 & F_{+2} & E_{-1} & E_{-1}
\end{array} \!\!\! &\Longrightarrow\, e_1\mu_2^2+e_1^2f_1.
\end{align*}
\end{widetext}
Summing up all eight patterns,
\begin{align}
\mathcal M_2 :=&\; e_1^3+2e_1^2f_1-3e_1f_1^2+3e_1\mu_1^2\notag\\
&+2e_1\mu_1\mu_2-e_1\mu_2^2-4f_1\mu_2^2.
\label{eq:M2-full}
\end{align}
The contribution from length-2 operators is the single pattern
\begin{equation*}
\begin{array}{cccc}
 & F_{+2} & F_{-2} & \\
 & & F_{+1} & E_{-1} \\\hline
1 & F_{+2} & E_{-1} & E_{-1}
\end{array}\!\!\! \;\Longrightarrow\; \mu_2\,\tilde q_{F_{+2},F_{-2}}.
\end{equation*}
Therefore
\begin{equation}
0 = \mathcal M_2\,q + \mu_2\,\tilde q_{F_{+2},F_{-2}}.
\label{eq:step2-FEE}
\end{equation}

\subsubsection{\texorpdfstring{Term $\vb B=E_{+1}E_{+1}F_{-2}$}{Term B = E+1 E+1 F-2}}

Symmetrically, the contributions from length-$3$ strings of $\hat X^{(3)}$ consist of the following eight
placements:
\begin{widetext}
\begin{align*}
\begin{array}{cccc}
 & E_{+1} & E_{0} & E_{-1} \\
 & & E_{+1} & E_{-1} \\\hline\hline
1 & E_{+1} & E_{+1} & F_{-2}
\end{array} \!\!\! &\Longrightarrow\, -e_1^3-e_1\mu_1\mu_2,&
\begin{array}{cccc}
 & E_{+1} & E_{0} & F_{-1} \\
 & & E_{+1} & F_{-1} \\\hline\hline
-1 & E_{+1} & E_{+1} & F_{-2}
\end{array} \!\!\! &\Longrightarrow\, e_1\mu_1^2+f_1\mu_1^2,\\
\begin{array}{cccc}
 & E_{+1} & E_{0} & E_{-1} \\
 & & F_{+1} & F_{-1} \\\hline\hline
1 & E_{+1} & E_{+1} & F_{-2}
\end{array} \!\!\! &\Longrightarrow\, -e_1^2f_1-f_1\mu_1\mu_2,&
\begin{array}{cccc}
 & E_{+1} & E_{0} & F_{-1} \\
 & & F_{+1} & E_{-1} \\\hline\hline
-1 & E_{+1} & E_{+1} & F_{-2}
\end{array} \!\!\! &\Longrightarrow\, e_1\mu_1\mu_2+f_1\mu_1\mu_2,\\
\begin{array}{cccc}
 & E_{+1} & F_{0} & E_{-1} \\
 & & E_{+1} & F_{-1} \\\hline\hline
-1 & E_{+1} & E_{+1} & F_{-2}
\end{array} \!\!\! &\Longrightarrow\, e_1\mu_1^2+e_1\mu_1\mu_2,&
\begin{array}{cccc}
 & E_{+1} & F_{0} & F_{-1} \\
 & & E_{+1} & E_{-1} \\\hline\hline
1 & E_{+1} & E_{+1} & F_{-2}
\end{array} \!\!\! &\Longrightarrow\, -e_1^2f_1-e_1\mu_1^2,\\
\begin{array}{cccc}
 & E_{+1} & F_{0} & E_{-1} \\
 & & F_{+1} & E_{-1} \\\hline\hline
3 & E_{+1} & E_{+1} & F_{-2}
\end{array} \!\!\! &\Longrightarrow\, -3e_1\mu_1\mu_2-3e_1\mu_2^2,&
\begin{array}{cccc}
 & E_{+1} & F_{0} & F_{-1} \\
 & & F_{+1} & F_{-1} \\\hline\hline
-3 & E_{+1} & E_{+1} & F_{-2}
\end{array} \!\!\! &\Longrightarrow\, 3e_1f_1^2+3f_1\mu_1^2.
\end{align*}
\end{widetext}
Summing up all eight patterns,
\begin{align}
\mathcal M_1 :=&\; -e_1^3-2e_1^2f_1+3e_1f_1^2+e_1\mu_1^2\notag\\
&-2e_1\mu_1\mu_2-3e_1\mu_2^2+4f_1\mu_1^2.
\label{eq:M1-full}
\end{align}
The length-2 contribution is the single pattern
\begin{equation*}
\begin{array}{cccc}
 & & F_{+2} & F_{-2} \\
 & E_{+1} & F_{-1} & \\\hline
-1 & E_{+1} & E_{+1} & F_{-2}
\end{array}\!\!\! \;\Longrightarrow\; -\mu_1\,\tilde q_{F_{+2},F_{-2}}.
\end{equation*}
Therefore
\begin{equation}
0 = \mathcal M_1\,q - \mu_1\,\tilde q_{F_{+2},F_{-2}}.
\label{eq:step2-EEF}
\end{equation}

Viewing~\eqref{eq:step2-FEE} and~\eqref{eq:step2-EEF} as a system in
$(q,\tilde q_{F_{+2},F_{-2}})$,
\begin{equation}
\begin{pmatrix}\mathcal M_1 & -\mu_1\\ \mathcal M_2 & \mu_2\end{pmatrix}
\begin{pmatrix}q\\ \tilde q_{F_{+2},F_{-2}}\end{pmatrix}=0,
\end{equation}
the determinant is
\begin{equation}
\mathcal D := \mu_2 \mathcal M_1+\mu_1 \mathcal M_2 = (\mu_1-\mu_2)\,P(e_1,f_1,\mu_1,\mu_2),
\label{eq:Delta-fact}
\end{equation}
where
\begin{align}
P =&\; e_1^3+2e_1^2f_1-3e_1f_1^2+3e_1\mu_1^2\notag\\
  &+6e_1\mu_1\mu_2+3e_1\mu_2^2+4f_1\mu_1\mu_2.
\label{eq:P-def}
\end{align}

Substituting $e_1=-\tfrac{3t}{4}, f_1=-\tfrac{t}{4},
\mu_1=\tfrac{t}{4}, \mu_2=-\tfrac{t}{4}$ from~\eqref{eq:t-params}:
\begin{equation*}
\begin{aligned}
&\mathcal M_1=\tfrac{t^3}{2},\quad \mathcal M_2=-\tfrac{t^3}{2},\quad \mu_1-\mu_2=\tfrac{t}{2},\\
&P=-\tfrac{t^3}{2},\quad \mathcal D=-\tfrac{t^4}{4}\;(\ne 0\ \forall t\ne 0).
\end{aligned}
\end{equation*}
Hence $\mathcal D\ne0$ for every
$t\in\mathbb R\setminus\{0\}$, which forces $q=\tilde q_{F_{+2},F_{-2}}=0$. Therefore no $3$-local quantity $Q$ satisfies $\len([Q,H^{(t)}])\le2$, and by Theorem~\ref{thm:hokkyo} the area-weighted periodic Motzkin chain has no $k$-local conserved quantity for any $3\le k\le N/2$.
We note that this result is consistent with the conjecture of Ref.~\cite{Motzkin2504} that the periodic Motzkin chain at $t=1$ possesses a large symmetry algebra $C_N=\mathrm{sp}_{2N}$ generated by raising and lowering operators $\Sigma^\pm$ commuting with the Hamiltonian, which are observed there to be essentially nonlocal: the absence of $k$-local conserved quantities established here forces any conserved quantity of length $\ge3$, including any realization of the conjectured $\mathrm{sp}_{2N}$ generators, to have $\len>N/2$.

\section{Degenerate ground states of the periodic Motzkin chain}
\label{app:motzkin-gs}

Appendix~\ref{app:motzkin} established the nonintegrability of the
area-weighted periodic Motzkin chain.  Here we return to the undeformed
($t=1$) chain and prove a conjecture on its ground-state structure posed in Ref.~\cite{Motzkin2504}, by
an explicit construction of its frustration-free ground
states.

Throughout this appendix we label the single-site basis by $|u\rangle=|{+}1\rangle$, $|f\rangle=|0\rangle$, and
$|d\rangle=|{-}1\rangle$ (the eigenstates of $E_0$ with eigenvalues
$+1,0,-1$), and refer to a product state
$|l_1\cdots l_N\rangle$ with $l_i\in\{u,f,d\}$ as a \emph{configuration}.

The local terms of the periodic Motzkin chain [Eq.~\eqref{eq:Ht-def} at $t=1$] read, in
this notation,
\begin{equation}
H_{i,i+1}=U_{i,i+1}+D_{i,i+1}+F_{i,i+1},
\label{eq:Hlocal-motzkin}
\end{equation}
\begin{equation}
\begin{aligned}
U&=\bigl(|uf\rangle-|fu\rangle\bigr)\bigl(\langle uf|-\langle fu|\bigr),\\
D&=\bigl(|df\rangle-|fd\rangle\bigr)\bigl(\langle df|-\langle fd|\bigr),\\
F&=\bigl(|ud\rangle-|ff\rangle\bigr)\bigl(\langle ud|-\langle ff|\bigr),
\end{aligned}
\label{eq:UDF-proj}
\end{equation}
and the full periodic Hamiltonian is
\begin{equation}
H^{\mathrm{PBC}}=\sum_{i=1}^{N-1} H_{i,i+1}+H_{N,1}.
\label{eq:Hpbc}
\end{equation}
Its ground-state structure is described by the following conjecture.

\begin{conjecture}[Pronko~\cite{Motzkin2504}]\label{conj:motzkin}
The $N$-site periodic Motzkin chain~\eqref{eq:Hpbc} is frustration-free, and its ground space is $(2N+1)$-fold degenerate.  The ground
states $|v_{n}\rangle$ are labeled by the eigenvalue
$n=0,\pm1,\pm2,\ldots,\pm N$ of $S^z=\sum_i E_{0,i}$, and
$|v_{n}\rangle$ is the equal-weight superposition of all length-$N$
lattice paths from $(0,0)$ to $(N,n)$ with steps $\Delta x=1$,
$\Delta y\in\{-1,0,+1\}$ (the height constraint $y\ge0$ of the original Motzkin paths being absent on
the ring).
\end{conjecture}

Because $H^{\mathrm{PBC}}$ is assembled from $S^z$-conserving
projectors, $[H^{\mathrm{PBC}},S^z]=0$, so the ground space splits
into sectors of definite $S^z$ and may be analyzed sector by sector.
Below we construct, for every $S^z=n$, a zero-energy ground state,
and show that each sector contains exactly one ground state.
This proves both the frustration-freeness and the exact $(2N+1)$-fold
degeneracy asserted in Conjecture~\ref{conj:motzkin}.

\subsection{Construction of the frustration-free ground states}
\label{ssec:motzkin-ff}

We encode the two local configurations appearing in each of the projected states of~\eqref{eq:UDF-proj} by the \emph{reversible
local moves}
\begin{equation}
\begin{aligned}
g_U:\ &(uf)\leftrightarrow(fu),\\
g_D:\ &(df)\leftrightarrow(fd),\\
g_F:\ &(ud)\leftrightarrow(ff),
\end{aligned}
\label{eq:moves}
\end{equation}
acting on any pair of adjacent sites (including the bond between sites $N$ and $1$); note that $(du)$ is
not moved by $g_F$.  These moves generate an equivalence relation on the set of configurations.
Let $\mathcal C$ be an equivalence class and set
$|v_{\mathcal C}\rangle=\sum_{c\in\mathcal C}|c\rangle$.  We claim that $|v_{\mathcal C}\rangle$ is a
zero-energy ground state of~\eqref{eq:Hpbc}.
Indeed, consider the projector $U_{i,i+1}$ acting on $|v_{\mathcal C}\rangle$.  The configurations in $\mathcal C$ carrying a $(uf)$ pair on the bond $(i,i+1)$ come in pairs related by $g_U$, so $|v_{\mathcal C}\rangle$ contains only the symmetric combinations
$|{\cdots}uf{\cdots}\rangle+|{\cdots}fu{\cdots}\rangle$, which are
orthogonal to the state $|uf\rangle-|fu\rangle$ projected onto by $U$;
the same holds for $D$ and $F$.  Hence $H_{i,i+1}|v_{\mathcal C}\rangle=0$ for
every bond, and since $H^{\mathrm{PBC}}\ge0$, $|v_{\mathcal C}\rangle$ is a zero-energy ground
state.
Conversely, every zero-energy state $|\psi\rangle$ is annihilated by each projector, and the condition $\bigl(\langle uf|-\langle fu|\bigr)_{i,i+1}|\psi\rangle=0$ (and its $D$-, $F$-analogues) states that the amplitudes of two configurations related by a single move are equal.  Consequently, the amplitude of $|\psi\rangle$ is constant on each equivalence class, and the states $|v_{\mathcal C}\rangle$ span the zero-energy space:  the number of zero-energy ground states equals the number of equivalence classes.

It therefore suffices to show that the moves~\eqref{eq:moves} connect
all configurations of a given total weight $S^z=n$, i.e., that each $S^z$ sector forms a single equivalence class.  We prove
that any configuration $(l_1\cdots l_N)$ of weight $n\ge0$ can be
reduced to $(u^{\,n}f^{\,N-n})$ [for $n<0$,
replace $u^{\,n}$ by $d^{\,|n|}$, exchanging the roles of $g_U$ and $g_D$ below]; reversibility of the moves then makes the whole
$S^z=n$ sector a single class, whose equal-weight superposition is the
ground state $|v_n\rangle$.  Here $u^{\,k}$ ($f^{\,k}$) denotes $k$ consecutive
copies of $u$ ($f$).

\subsubsection*{Step 1.}
Applying $g_U$ and $g_D$ in the directions $(fu)\to(uf)$ and
$(fd)\to(df)$ moves every $f$ to the right,
\begin{equation}
(l_1\cdots l_N)\;\to\;(w\,f^{\,N-m}),
\label{eq:ff-step1}
\end{equation}
where $w=l'_1\cdots l'_m\in\{u,d\}^m$ is the sequence obtained by
deleting all $f$'s from the original configuration.

\subsubsection*{Step 2.}
Every adjacent pair $(ud)$ in $w$ is eliminated by applying $g_F$ in the direction
$(ud)\to(ff)$ and moving the produced $f$'s to the
right as in Step~1.  Each application of $g_{F}$ removes one $u$ and one $d$, leaving
$S^z$ unchanged.  Repeating this as many times as possible, we arrive at a sequence with no adjacent $(ud)$ pair, which is necessarily of the form
\begin{equation}
\;\to\;(d^{\,m_1}u^{\,m_2}f^{\,N-m_1-m_2}),\qquad m_2-m_1=n .
\label{eq:ff-step2}
\end{equation}

\subsubsection*{Step 3.}
If $m_1\geq1$, move the $u$ block to the right end with $g_U$,
\begin{equation}
\;\to\;(d^{\,m_1}f^{\,N-m_1-m_2}u^{\,m_2}),
\label{eq:ff-step3a}
\end{equation}
so that site $N$ carries $u$ and site $1$ carries $d$.  Owing to the periodic boundary condition, the
bond $(N,1)$ then reads $(ud)$, and $g_F$ converts it to
$(ff)$:
\begin{equation}
\;\to\;(f\,d^{\,m_1-1}f^{\,N-m_1-m_2}u^{\,m_2-1}f) .
\label{eq:ff-step3b}
\end{equation}
Repeating this procedure (moving
the new $f$'s to the right as in Step~1) removes the shorter block; for $n\ge0$ the
$d$ block is exhausted after $m_1$ iterations:
\begin{equation}
\;\to\;(f^{\,N-n}u^{\,n}) .
\label{eq:ff-step3c}
\end{equation}

\subsubsection*{Step 4.}
Finally, $g_U$ moves the surviving $u$ block to the left end,
\begin{equation}
\;\to\;(u^{\,n}f^{\,N-n}),
\label{eq:ff-step4}
\end{equation}
completing the reduction.  Since every configuration of weight
$S^z=n$ reduces to the same configuration $(u^{\,n}f^{\,N-n})$, the $S^z=n$ sector forms a single equivalence class and hence contains exactly one ground state.  As the eigenvalue of $S^z$ takes the $2N+1$ integer values from $-N$ to $N$, the ground space is exactly $(2N+1)$-fold degenerate. This proves Conjecture~\ref{conj:motzkin}.

\bibliography{ffspin_abs}

\end{document}